\newcommand{\cX}{\mathcal{X}} 
\newcommand{\cY}{\mathcal{Y}} 
\newcommand{\cD}{\mathcal{D}} 
\newcommand{\cG}{\mathcal{G}} 
\newcommand{\cH}{\mathcal{H}} 
\renewcommand{\l}{\left}
\renewcommand{\r}{\right}
\newcommand{\indG}{\mathbf{1}_G}
\begin{document}

\title{Statistical Inference for Fairness Auditing}
\author{John J. Cherian\footnote{Department of Statistics, Stanford University}$\ $ and Emmanuel J. Cand\`{e}s\footnote{Departments of Statistics and Mathematics, Stanford University}}

\maketitle

\begin{abstract}
Before deploying a black-box model in high-stakes problems, it is important to evaluate the model’s performance on sensitive subpopulations. For example, in a recidivism prediction task, we may wish to identify demographic groups for which our prediction model has unacceptably high false positive rates or certify that no such groups exist. In this paper, we frame this task, often referred to as ``fairness auditing,''  in terms of multiple hypothesis testing. We show how the bootstrap can be used to simultaneously bound performance disparities over a collection of groups with statistical guarantees. Our methods can be used to flag subpopulations affected by model underperformance, and certify subpopulations for which the model performs adequately. Crucially, our audit is model-agnostic and applicable to nearly any performance metric or group fairness criterion. Our methods also accommodate extremely rich---even infinite---collections of subpopulations. Further, we generalize beyond subpopulations by showing how to assess performance over certain distribution shifts. We test the proposed methods on benchmark datasets in predictive inference and algorithmic fairness and find that our audits can provide interpretable and trustworthy guarantees.
\end{abstract}

\section{Introduction}
While black-box models may demonstrate impressive accuracy on average, their performance can still vary substantially between subpopulations. For example, an algorithm deployed for recidivism prediction exhibits significantly higher false positive rates for African-American relative to Caucasian parolees \cite{angwin2016machine}. Similar performance disparities have been documented in other high-stakes applications such as facial recognition and hiring \cite{buolamwini2018gender, dastin2018amazon}. 

Motivated by this concern, numerous stakeholders have solicited methods, often referred to as ``fairness audits,'' that can discover and quantify such disparities \cite{brundage2020toward, schaake2022audit}. Despite substantial prior work in this area \cite{morina2019auditing, xue2020auditing, diciccio20evaluating, tramer2017fairtest, taskesen2021statistical, si2021testing, yan2022active, von2022locating}, the definition of fairness auditing remains fraught. Fairness auditing is often framed as a single statistical test that rejects in the case of \emph{any} performance disparity over a limited set of sensitive subpopulations \cite{diciccio20evaluating, tramer2017fairtest, si2021testing, morina2019auditing, taskesen2021statistical,  xue2020auditing, roy2023fairness}. While follow-up investigation to localize disparities is desired (and often performed), this task  raises new challenges. For example, empirical parity across a limited collection of subgroups does not rule out substantial disparities among smaller subgroups \cite{kearns2018preventing}. Further, if we consider multiple performance metrics over a rich collection of subpopulations, discovering some disparity between two subgroups is hardly surprising. Unfortunately, existing methods for identifying localized (dis-)parities are accompanied by few statistical guarantees \cite{von2022locating, yan2022active, schaake2022audit}.

We develop a family of statistical methods that rigorously achieve two goals: (1) the ``certification'' of subpopulations for which the model performs adequately, and (2) the ``flagging'' of subpopulations that suffer harmful performance disparities. The proposed methods only require access to a so-called ``audit trail,'' i.e., model predictions on a data set held out from training \cite{brundage2020toward}, but not white-box access to the model itself. We provide a Python package, \textsf{fairaudit}, implementing these methods at \href{https://www.github.com/jjcherian/fairaudit}{github.com/jjcherian/fairaudit}.

\subsection{Preview of contributions} \label{sec:multiple_testing}
To motivate and summarize the main contributions of our paper, we preview two applications.

\subsubsection{Certifying conditional coverage} \label{sec:preview_certify}
Consider a training set $\{(X_i, Y_i)\}_{i = 1}^n$ and a test point $\{X_{n + 1}, Y_{n + 1})$ sampled i.i.d.~from some unknown distribution $P$. Using $\{X_i, Y_i\}_{i = 1}^n \cup \{X_{n + 1}\}$ as input, conformal prediction produces a set-valued function, denoted by $\hat{C}(\cdot)$, that satisfies the guarantee $\P (Y_{n + 1} \in \hat{C}(X_{n + 1}) ) \geq 1 - \alpha$ \emph{marginally} over the randomness in the training and test points. This marginal guarantee does not preclude loss of coverage, however, after we condition on $\hat{C}$ and $X \in G$. There may exist subsets $G \subseteq \mathcal{X}$ such that $\P ( Y \in \hat{C}(X) \mid \hat{C}, X \in G ) \ll 1 - \alpha$. Without strong assumptions on the data-generating distribution, guaranteeing conditional coverage over rich collections of subpopulations is known to be difficult \citep{barbercandeslimits2020}. 

\begin{figure}
\centering
\captionsetup[subfigure]{width=0.9\linewidth}%
\begin{subfigure}{.5\textwidth}
  \centering
  \includegraphics[width=\linewidth]{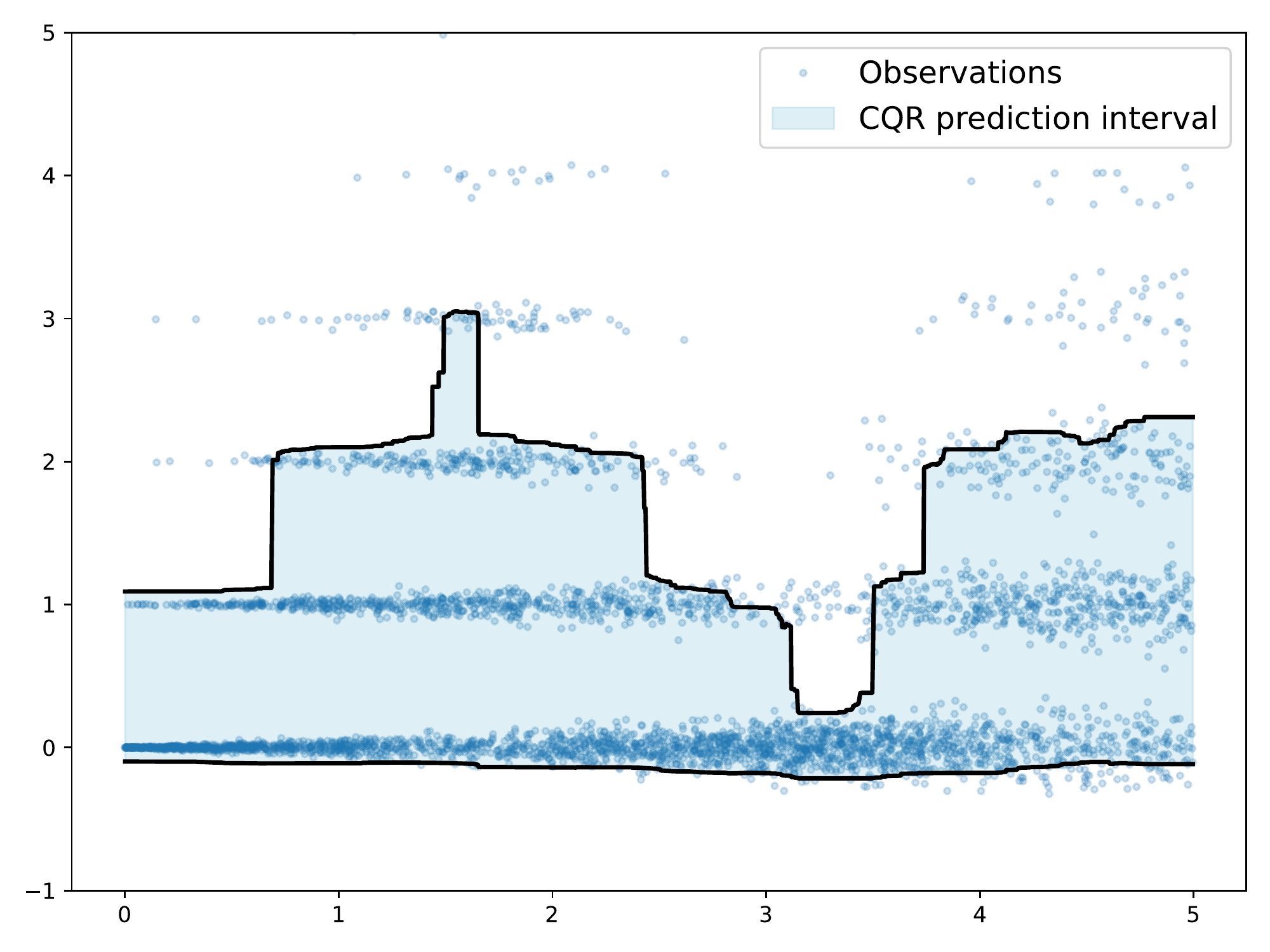}
  \caption{}
  \label{fig:cqr_interval}
\end{subfigure}%
\begin{subfigure}{.5\textwidth}
  \centering
  \includegraphics[width=\linewidth]{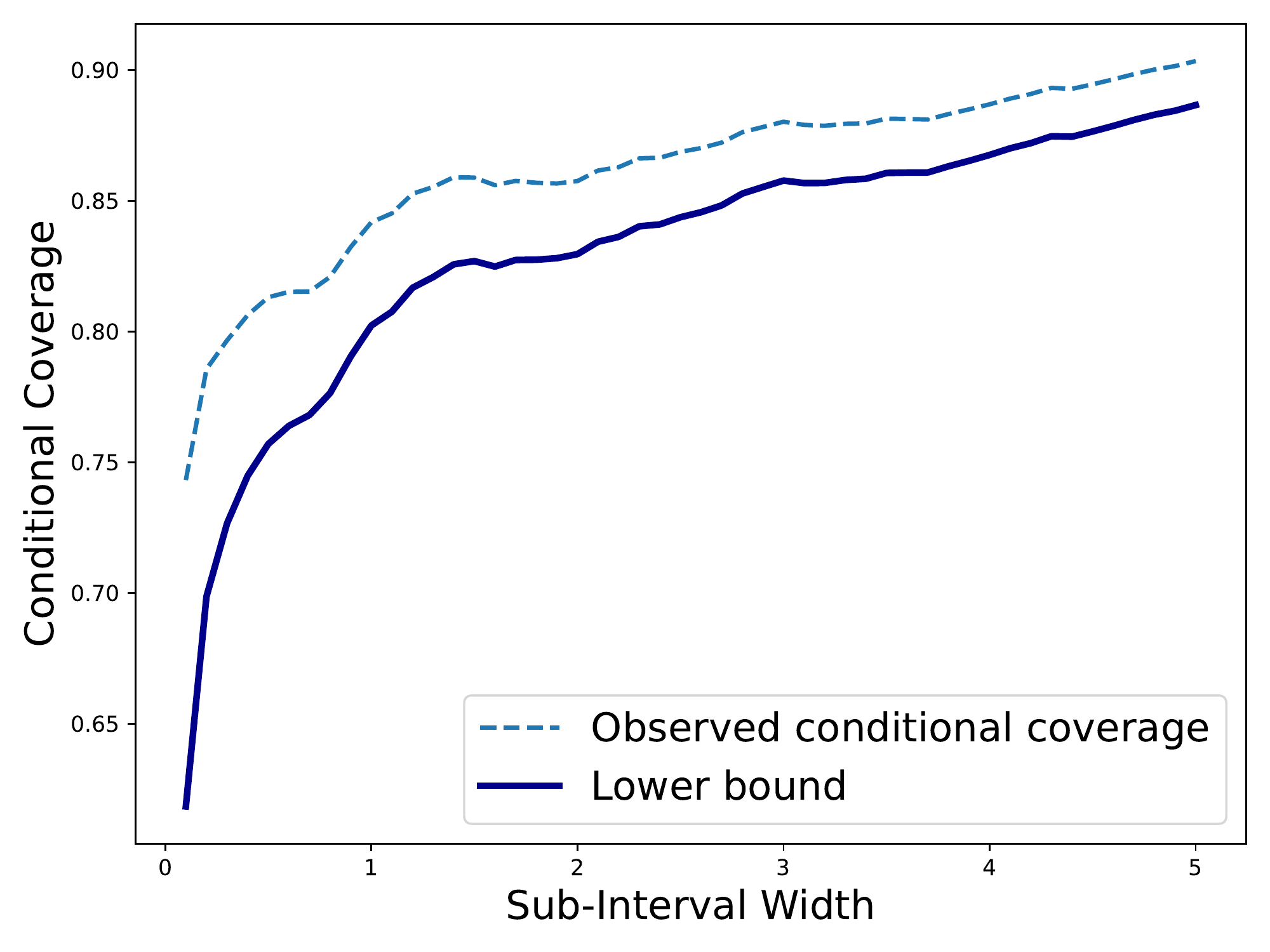}
  \caption{}
  \label{fig:cqr_bound}
\end{subfigure}
\caption{In \Cref{fig:cqr_interval}, we plot a $90\%$ conformal prediction set constructed using a quantile random forest on the synthetic dataset studied in \citep{romano2019conformalized}. \Cref{fig:cqr_bound} displays the prediction set's conditional coverage over all sub-intervals of a given width. The dashed line is the observed coverage, while the solid line plots the simultaneous lower bound obtained via \textsf{\Cref{algo:bootstrap_ci_rescaled}} ($\alpha = 0.05$, $p_* = 0.01$, $w_0 = \infty$).}
\label{fig:cqr_preview}
\end{figure}

We re-visit the synthetic data experiment in \citet{romano2019conformalized} to understand why one might wish to certify conditional coverage over a large collection of groups. After examining the prediction set plotted in \Cref{fig:cqr_interval} ($n = 4000$), we might perceive that the conditional coverage over most sub-intervals of $\cX$ does not deviate substantially from the marginal guarantee of 90\%. But can we make this observation rigorous? 

Given an arbitrary and potentially data-dependent collection of sub-intervals, we show how to issue a lower confidence bound on the conditional coverage for each sub-interval. To balance computational expense and complexity of the collection of groups considered, we limit our analysis to sub-intervals with endpoints belonging to $\{0,0.1,0.2,\dots,5\}$. The issued bounds will then hold simultaneously over all sub-intervals in this collection with high probability. Using the plotted audit trail, we lower bound the minimum conditional coverage over all sub-intervals of a given width, i.e., $\min_{G : \text{width}(G) = w} \P(Y \in \hat{C}(X) \mid \hat{C}, X \in G)$. \Cref{fig:cqr_bound} plots this simultaneously valid lower bound as well as the observed conditional coverage, $\min_{G : \text{width}(G) = w} \hat{\P}_n(Y \in \hat{C}(X) \mid \hat{C}, X \in G)$. For example, we can say with 95\% confidence that no sub-interval of length $1$ has conditional coverage worse than $80.2\%$, and no sub-interval of length $2$ has conditional coverage worse than $83.0\%$. Crucially, our guarantee is exact: in large samples, the probability that any lower bound is invalid converges to 5\%.

In this example, we issued simultaneously valid lower bounds over $1275$ sub-intervals. Our restriction to a finite collection of sub-intervals, however, is not necessary, and can be relaxed at the auditor's discretion. See \Cref{sec:certify_G} for a complete description of our procedure for producing lower bounds.
\subsubsection{Flagging false positive rate disparities} \label{sec:preview_flag}
Consider a district court evaluating whether the COMPAS recidivism prediction algorithm is biased. While the most notable previous work considers subpopulations defined by race \citep{angwin2016machine}, the court is likely to be interested in discrimination against any groups formed by the intersections of legally protected attributes, e.g., age,  gender, ethnicity. Over this larger collection of subpopulations, identifying the existence of some disparity is no longer of interest, but accurately localizing severe disparities is of great importance.

\begin{figure}
    \centering
    \includegraphics[scale=0.3]{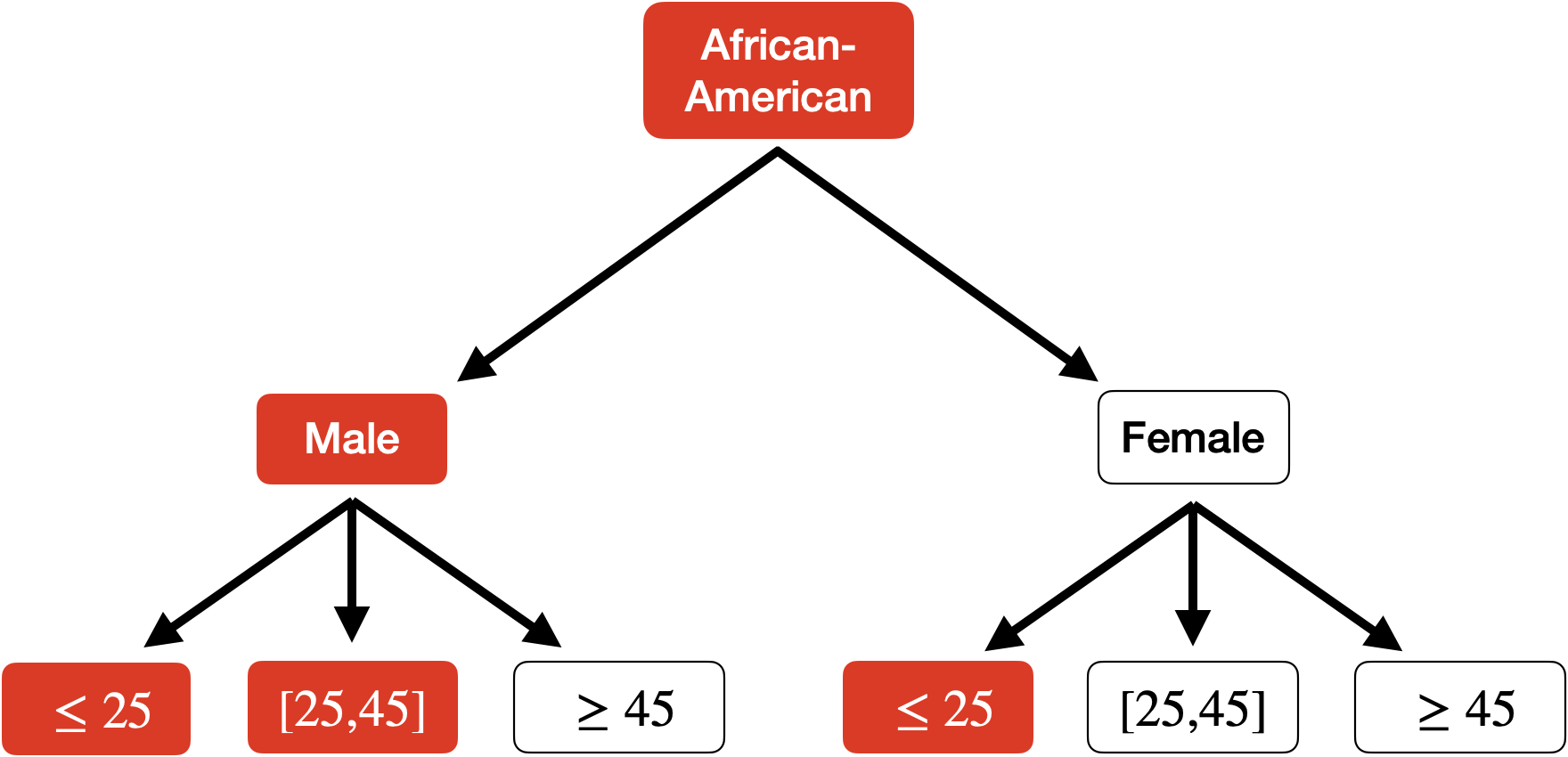}
    \caption{For the COMPAS algorithm, we audited 58 subpopulations formed by intersections of race, gender, and age. Here we plot subgroups of the African-American subpopulation. Boxes shaded in red denote groups flagged as having at least 5\% higher-than-average false positive rates.}
    \label{fig:compas_flags}
\end{figure}
Following \citet{angwin2016machine}, we apply our auditing method to a data set obtained by ProPublica in 2016 that includes COMPAS risk scores ($f(X) \in \{\text{low-risk}, \text{high-risk}\}$), defendant demographics ($X_{d}$), and true recidivism outcomes after two years ($Y \in \{0,1\}$) for $n = 6781$ individuals. In \Cref{fig:compas_flags}, we shade in red the demographic groups flagged for disproportionate false positive rates, i.e., those $G$ for which $$\P(f(X) = \text{high-risk} \mid Y = 0, X_d \in G) - \P(f(X) = \text{high-risk} \mid Y = 0) > 0.05.$$ 

To account for the inevitability of finding some disparity when auditing over many groups, our method issues a formal guarantee on the rate at which the issued flags are invalid. Since falsely flagging a performance disparity is less consequential than false certification of fairness, we provide a less conservative guarantee than the simultaneous validity of the previous example. Instead, we provide an asymptotically valid upper bound on the ``false discovery rate,'' i.e., the expected proportion of flags that are falsely issued. For the plotted example, we apply the flagging method described in \Cref{sec:flag_G} to control this proportion at 10\%. 

\subsection{Outline}

The rest of the paper is organized as follows. We first describe the problem setting and associated notation in \Cref{sec:auditing_method}. Each subsequent section is devoted to an auditing method. In \Cref{sec:certify_G}, we describe our procedures for certifying performance disparities over a collection of subpopulations. In \Cref{sec:flag_G}, we show how to flag performance disparities. Lastly, in \Cref{sec:beyond_subpop}, we show how to extend our methodology from subpopulations to collections of distribution shifts. 

\section{Set-up and notation} \label{sec:auditing_method}
We say that some prediction rule $f$ exhibits a performance disparity on a subpopulation $G$ if the mean of some metric $L(f(X), Y)$, conditional on $(X, Y)$ belonging to $G$, differs substantially from some target $\theta_P \in \R$. 

Our statistical audits proceed by testing and/or constructing bounds on the group-wise performance disparity,
\begin{align}
    \underbrace{\epsilon(G)}_{\text{disparity}} &\defeq \underbrace{\E_P[L(f(X), Y) \mid (X, Y) \in G]}_{\text{group-specific}} - 
    \underbrace{\theta_P}_{\text{target}}.
\end{align}
Subgroup membership may be defined by a subset of the covariates (e.g., certain sensitive attributes) that the prediction rule does not directly use. For notational simplicity, however, we will refer to the same covariate vector $X$ when defining both the prediction rule and group membership.

Nearly every group fairness definition can be expressed in terms of $\epsilon(G)$. For the false positive rate example, we audited a binary predictor given by $f(x)$. We then instantiated the flagging method for all intersectional subgroups with $Y = 0$ by testing the null hypothesis $H_0(G): \epsilon(G) \leq 0.05$ with 
\begin{align*}
L(f(x), y) = \indic{f(x) = 1} \quad\text{and}\quad \theta_P = \P (f(X) = 1 \mid Y = 0).
\end{align*}
In the conditional coverage example, we considered a \emph{fixed} set-valued predictor $\hat{C}(x)$. Then, we certified conditional coverage over all sub-intervals with endpoints in $\{0,0.1,\dots,5\}$ by lower bounding $\epsilon(G)$ for
\begin{align*}
    L(\hat{C}(x), y) = \mathbf{1}\{y \in \hat{C}(x)\} \quad\text{and}\quad \theta_P = 0.9.
\end{align*}
See \Cref{sec:fairness_examples} for additional examples.

To evaluate these disparities, the audit we devise requires access only to a hold-out data set $\cD = \{X_i, Y_i\}_{i = 1}^n~\simiid~P$; this is sometimes called an ``audit trail'' \citep{brundage2020toward}. 

If $\theta_P$ is not known a-priori, we will assume that it is possible to use this audit trail to construct a consistent estimator $\hat{\theta}$.  We will further assume that $\hat{\theta}(\cD) \defeq g( \sum_i h(x_i, y_i)/n)$ for some differentiable function $g$ and known features $h : \cX \times \cY \to \R^{k}$. We omit the argument of $\hat{\theta}$ when it is clear from context. An important implication of this assumption is asymptotic linearity, i.e., the existence of some $\psi(X, Y)$ such that $\sqrt{n}(\hat{\theta} - \theta_P) = \frac{1}{\sqrt{n}} \sum_{i = 1}^n \psi(X_i, Y_i) + o_P(1)$. Typically, $\theta_P = \E_P[L(f(X), Y)]$, and consequently, $\hat{\theta}(\cD) = \frac{1}{n} \sum_{i = 1}^n  L(f(x_i), y_i)$ will satisfy this assumption. 

We define the following terms for notational convenience. We replace $(X, Y) \in G$ with $G$ whenever the meaning is clear. For example, we let $\P_n(G) \defeq \frac{1}{n} \sum_{i =1}^n \indic{(x_i, y_i) \in G}$ denote the empirical probability of $(X, Y)$ belonging to $G$ in the hold-out set, and $\P(G) \defeq \P((X, Y) \in G)$ denote the population probability of $(X, Y)$ belonging to $G$. To further simplify our notation, we will also replace $L(f(X), Y)$ by the abbreviation $L$. We thus define the plug-in estimator of the disparity, $\hat{\epsilon}(G) \defeq |G|^{-1} \sum_{i \in G} L_i - \hat{\theta}$.

\section{Certifying performance}  \label{sec:certify_G}
\subsection{Methods}
\subsubsection{Bound certification}
We first consider the problem of \emph{certifying} subpopulations by providing a simultaneously valid confidence set for $\epsilon(G)$. To simplify our exposition, we construct a simultaneously valid \emph{lower bound} on the group-wise disparity, i.e., we define $\epsilon_{\text{lb}}(G)$ such that
\begin{align}
    \lim_{n \to \infty} \P(\epsilon_{\text{lb}}(G) \leq \epsilon(G)\,\text{for all $G \in \cG$}) = 1 - \alpha. \label{eqn:certify_criteria}
\end{align}
Though it may seem counterintuitive to lower bound a performance disparity, recall the conditional coverage example previewed in \Cref{sec:preview_certify}. Upper confidence bounds and intervals for $\epsilon(G)$ are obtained via a trivial modification described in \Cref{sec:certify_alt_nulls}. 

Naively, we might define $\epsilon_{\text{lb}}(G)$ using an upper bound on the maximum deviation between our performance disparity estimator and the true disparity, i.e., the $(1 - \alpha)$-quantile of $\sup_{G \in \mathcal{G}} \{\hat{\epsilon}(G) - \epsilon(G)\}$. Formally defining $\text{Quantile}(\alpha; X) \defeq \inf \{x : \alpha \leq \P(X \leq x) \}$, we could obtain a simultaneously valid lower bound on $\epsilon(G)$ via
\begin{align}
    \hat{\epsilon}(G) - \text{Quantile}\l(1 - \alpha; \sup_{G \in \cG} \{\hat{\epsilon}(G) - \epsilon(G)\} \r). \label{eqn:naive_est}
\end{align}

While it is straightforward to prove that \eqref{eqn:certify_criteria} holds for the proposed bound, there are two crucial problems. First, the quantile in \eqref{eqn:naive_est} cannot be estimated accurately: $\hat{\epsilon}(G)$ diverges for small groups, so when $\cG$ is large, $\hat{\epsilon}(G)$ does not converge uniformly to $\epsilon(G)$. As a consequence, standard asymptotic methods (e.g., bootstrap) will fail. Second, even if we could estimate this quantile, we would obtain a constant correction to the naive estimator for all groups. Since any such correction must be large to achieve simultaneously validity over small groups, this approach would lead to impractically conservative lower bounds.

To circumvent the first of these obstacles, we show that it is possible to consistently estimate the distribution of \begin{align}
\sup_{G \in \cG} \{\P(G) \cdot \P_n(G) \cdot (\hat{\epsilon}(G) - \epsilon(G)) \}. \label{eqn:bootstrap_process}
\end{align}
Intuitively, multiplying the naive process by $\P(G) \cdot \P_n(G)$ stabilizes its value for small groups. We then apply the bootstrap \citep{efron1979bootstrap, gine1990bootstrapping} to estimate the $(1 - \alpha)$-quantile of this process. Rigorously establishing bootstrap consistency requires a technical argument; see the proof of \Cref{thm:uniform_ci} for a detailed exposition. 

Mimicking our initial approach, we use an estimate of the $(1 - \alpha)$-quantile of \eqref{eqn:bootstrap_process} to construct a simultaneously valid lower bound on the true disparity. For all $G \in \cG$, we define $\epsilon_{\text{lb}}(G)$ such that $\P_n(G)^2 \cdot (\hat{\epsilon}(G) - \epsilon_{\text{lb}}(G))$ equals the $(1 - \alpha)$-quantile of \eqref{eqn:bootstrap_process}. Letting $t^*$ denote the bootstrap estimate of this quantile, i.e., the output of \textsf{\Cref{algo:bootstrap_ci}}, we obtain a simplified definition of $\epsilon_{\text{lb}}(G)$:
\begin{align}
    \epsilon_{\text{lb}}(G) \defeq \hat{\epsilon}(G) - \frac{t^*}{\P_n(G)^2}. \label{eqn:bound_closed_form}
\end{align} 
Given a valid estimate of $t^*$, the simultaneous validity of $\epsilon_{\text{lb}}(G)$ is a straightforward implication of our definition:
\begin{align*}
    \lim_{n \to \infty} \P(\epsilon_{\text{lb}}(G) \leq \epsilon(G)\,\text{for all $G \in \cG$}) &= \lim_{n \to \infty} \P(\hat{\epsilon}(G) - t^*/\P_n(G)^2 \leq \epsilon(G)\,\text{for all $G \in \cG$}) \nonumber \\
    &= \lim_{n \to \infty} \P\l (\sup_{G \in \cG} \P_n(G)^2 \cdot (\hat{\epsilon}(G) - \epsilon(G)) \leq t^* \r).
\end{align*}
The first equality follows from our definition of $\epsilon_{\text{lb}}(G)$, and the second follows from rearrangement. Replacing $\P_n(G)$ with $\P(G)$ (by Slutsky's lemma) and applying the definition of $t^*$ then completes our argument for simultaneous validity:
\begin{multline*}
    \lim_{n \to \infty} \P \l (\sup_{G \in \cG} \P_n(G)^2 \cdot (\hat{\epsilon}(G) - \epsilon(G)) \leq t^* \r ) \\ = \lim_{n \to \infty} \P\l (\sup_{G \in \cG} \P(G) \cdot \P_n(G) \cdot (\hat{\epsilon}(G) - \epsilon(G)) \leq t^* \r ) = 1 - \alpha.
\end{multline*}

\begin{algorithm}[t]
  \caption{Bootstrapping the lower confidence bound critical value}
  \label{algo:bootstrap_ci}
  \begin{algorithmic}[1]
    \State \textbf{Input:} Subpopulations $\cG$, audit trail $\cD$, level $\alpha$, number of bootstrap samples $B$
    \For{$b = 1,\dots,B$}
        \State Let $\cD^*_b$ be a sample with replacement of size $n$ from $\cD$;
        \State Define $\epsilon^*_b(G) \defeq \frac{1}{|G|^*} \sum_{(x^*_i, y^*_i) \in \cD^*_b} L_i^* - \hat{\theta}(\cD^*)$;
        \State Define $\P^*_b(G) \defeq \frac{1}{n} \sum_{(x^*_i, y^*_i) \in \cD^*_b} \indic{(x^*_i, y^*_i) \in G}$;
        \State $t^{(b)} = \max_{G \in \cG} \l \{\P_n(G) \cdot \P^*_b(G) \cdot (\epsilon^*_b(G) - \hat{\epsilon}(G)) \r\}$;
    \EndFor
    \State \textbf{Return:} $t^* = \text{Quantile} (1 - \alpha; \{t^{(b)} \}_{b = 1}^B )$
  \end{algorithmic}
\end{algorithm}

\begin{figure}
    \centering
    \includegraphics[scale=0.4]{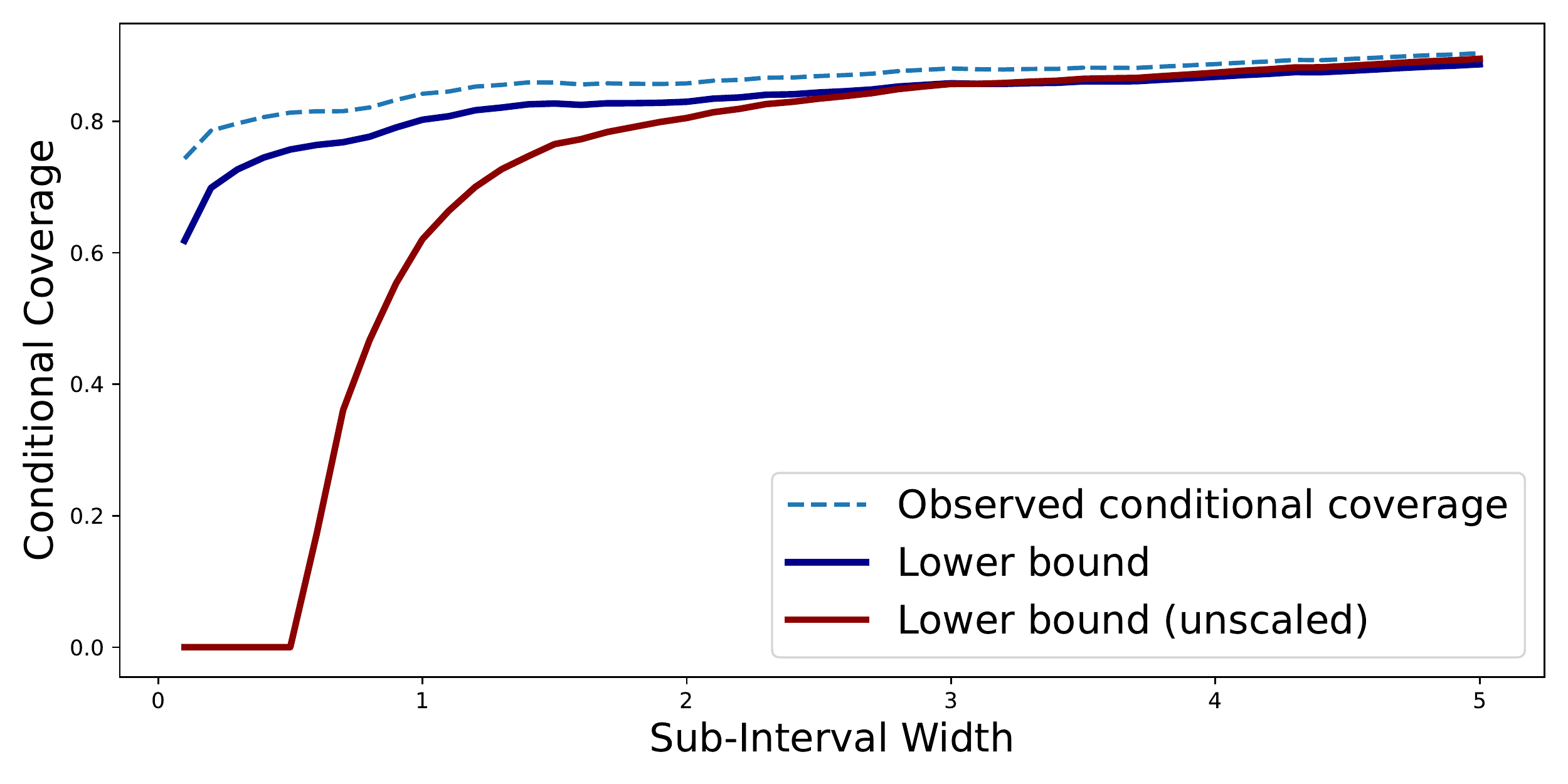}
    \caption{We reproduce the plot from \Cref{fig:cqr_bound}. Here the red curve denotes the lower bound on conditional coverage obtained via \eqref{eqn:bound_closed_form}. For smaller groups, it is substantially looser than the blue curve, i.e., the bound in \Cref{fig:cqr_bound}. The blue curve is obtained by rescaling the bootstrap process using $p_* = 0.01$ and $w_0 = \infty$.}
    \label{fig:cqr_bound_bad}
\end{figure}

While this bound satisfies the validity condition given by \eqref{eqn:certify_criteria}, directly applying \eqref{eqn:bound_closed_form} leads to practically unusable lower bounds on $\epsilon(G)$ for small groups. This is caused by a suboptimal dependence on group size in \eqref{eqn:bound_closed_form}. Asymptotically, our bound converges to 
\begin{align*}
    \epsilon_{\text{lb}}(G) = \hat{\epsilon}(G) - C_0 \cdot \left(\sqrt{\frac{1}{|G| \cdot {\color{red}\P_n(G)^3}} }\right),
\end{align*}
where $C_0$ is some group-independent constant and $|G|$ denotes the number of observed data points belonging to $G$. Even though the bound correction converges to $0$ at the expected $1/\sqrt{|G|}$ rate, the group-dependent factor of $\P_n(G)^3$ catastrophically inflates our confidence set for even moderately sized groups. \Cref{fig:cqr_bound_bad} reproduces \Cref{fig:cqr_bound} using \eqref{eqn:bound_closed_form} and shows that this definition produces vacuous lower bounds for all but the largest sub-intervals. 

To motivate our revised approach, we remark that the classical Wald confidence bound for $\epsilon(G)$ has the following asymptotic behavior,
\begin{align}
    \epsilon^\text{ideal}_{\text{lb}}(G) = \hat{\epsilon}(G) - C_1 \cdot \frac{\sigma_G}{\sqrt{|G|}}; \label{eqn:wald_bound}
\end{align}
$C_1$ is group-independent and $\sigma_G$ denotes the asymptotic standard deviation of $\sqrt{|G|} (\hat{\epsilon}(G) - \epsilon(G))$.

We can construct Wald-type bounds for all sufficiently large groups by rescaling the bootstrapped process. To understand how to define this scaling factor, we first study how rescaling the bootstrap process affects the resulting confidence bound. Let $s(G)$ and $\hat{s}(G)$ denote the population value and  estimator of the scaling factor, respectively. Then, instead of bootstrapping the $(1 - \alpha)$-quantile of $\sup_{G \in \cG} \{\P(G) \cdot \P_n(G) \cdot (\hat{\epsilon}(G) - \epsilon(G))\}$, we estimate the $(1 - \alpha)$-quantile of 
\begin{align}
    \sup_{G \in \cG} \left \{\frac{1}{s(G)} \cdot \P(G) \cdot \P_n(G) \cdot (\hat{\epsilon}(G) - \epsilon(G)) \right \}. \label{eqn:rescaled_process}
\end{align} 
Letting $t^*$ denote the $(1 - \alpha)$-quantile of \eqref{eqn:rescaled_process}, we mimic the previous derivation and obtain a closed-form solution for $\epsilon_{\text{lb}}(G)$ in terms of $\hat{\epsilon}(G)$ and $\hat{s}(G)$,
\begin{align}
    \epsilon_{\text{lb}}(G) \defeq \hat{\epsilon}(G) - t^* \cdot \frac{\hat{s}(G)}{\P_n(G)^2}. \label{eqn:scaled_closed_form}
\end{align}
Since $t^* = O_P(1/\sqrt{n})$, we can match the Wald-type bound in \eqref{eqn:wald_bound} by choosing $\hat{s}(G)$ to estimate $\P(G)^{3/2} \cdot \sigma_G$. 

Naively rescaling the process reproduces the divergence for small groups problem we encountered in our first approach. In particular, bootstrap consistency requires $\{1/\hat{s}(G)\}_{G \in \cG}$ to be a \emph{uniformly consistent} estimator of $\{1/s(G)\}_{G \in \cG}$. To this end, when $\P_n(G)$ is below some threshold $p_*$, we forgo estimating $\P(G)^{3/2} \cdot \sigma_G$. Instead, we scale by the more naive estimator $p_*^{3/2} \cdot \sqrt{\widehat{\text{Var}}(L)}$ where $\widehat{\text{Var}}(L)$ denotes the sample variance. 

To interpolate between these two group-size regimes, we set
\begin{align}
\hat{s}(G) = \max \{ \P_n(G), p_* \}^{3/2} \cdot \left( \frac{\P_n(G)}{\P_n(G) + w_0} \cdot \hat{\sigma}_G + \frac{w_0}{\P_n(G) + w_0} \cdot \widehat{\var}(L)^{1/2} \right), \label{eqn:rescaling_est}
\end{align}
where $w_0 > 0$ is a user-specified hyperparameter that controls the degree of shrinkage and $\hat{\sigma}_G$
is some point-wise, but not necessarily uniformly consistent, estimator of $\sigma_G$. By shrinking our estimate of the asymptotic variance of $\hat{\epsilon}(G)$ to a group-independent quantity, $\hat{s}(G)$ obtains the desired uniform consistency. In \textsf{\Cref{algo:bootstrap_ci_rescaled}}, we show how to define $\hat{\sigma}_G$ and compute the $t^*$ used in \eqref{eqn:scaled_closed_form}.

\begin{algorithm}[t]
  \caption{Bootstrapping the (rescaled) lower confidence bound critical value}
  \label{algo:bootstrap_ci_rescaled}
  \begin{algorithmic}[1]
    \State \textbf{Input:} Subpopulations $\cG$, audit trail $\cD$, level $\alpha$, threshold $p_*$, weight $w_0$, number of bootstrap samples $B$
    \For{$b = 1,\dots,B$}
        \State Let $\cD^*_b$ be a sample with replacement of size $n$ from $\cD$;
        \State Define $\P^*_b(G) \defeq \frac{1}{n} \sum_{(x^*_i, y^*_i) \in \cD^*_b} \indic{(x^*_i, y^*_i) \in G}$;
        \State Define $\epsilon^*_b(G) \defeq \frac{1}{\P^*_b(G) \cdot n} \sum_{(x^*_i, y^*_i) \in G} L^*_i - \hat{\theta}(\cD^*_b)$;
    \EndFor
    \State Define the asymptotic variance estimator by \begin{align*}
        \hat{\sigma}^2_G &\defeq \widehat{\text{Var}}(L \mid G) + \P_n(G) \l (\widehat{\text{Var}}(\psi) - 2 \cdot \widehat{\text{Cov}}(\,L\,, \psi \mid G) \r )
    \end{align*}
    where $\widehat{\text{Var}}(\cdot)$ and $\widehat{\text{Cov}}(\cdot)$ correspond to the sample (conditional) variance and covariance;
    \State Define $\hat{s}(G)$ by \eqref{eqn:rescaling_est};
    \For{$b = 1,\dots,B$}
        \State $t^{(b)} = \max_{G \in \cG} \{\frac{1}{\hat{s}(G)} \cdot \P_n(G) \cdot \P^*_b(G) \cdot (\epsilon^*_b(G) - \hat{\epsilon}(G)) \}$;
    \EndFor

    \State \textbf{Return:} $t^* = \text{Quantile} (1 - \alpha; \{t^{(b)} \}_{b = 1}^B )$
  \end{algorithmic}
\end{algorithm}

In practice, we observe that estimating the asymptotic variance of $\hat{\epsilon}(G)$ can harm the finite-sample validity of $\epsilon_{\text{lb}}(G)$. We thus recommend setting $w_0 = \infty$ unless adaptivity to the  variance of $\hat{\epsilon}(G)$ is deemed critical. 

Using the output of \textsf{\Cref{algo:bootstrap_ci_rescaled}} in \eqref{eqn:scaled_closed_form}, we obtain more practical confidence bounds. We produce the blue curve in \Cref{fig:cqr_bound_bad} using $p_* = 0.01$ and $w_0 = \infty$. There is no free lunch: observe that the red curve (unscaled) yields a tighter lower confidence bound for the largest sub-interval widths. Nevertheless, it is clear that the rescaled process produces a usable lower bound over a much wider range of group sizes. 

\Cref{thm:uniform_ci} states sufficient conditions for bootstrap consistency and, therefore, simultaneous validity of the lower bounds defined in \eqref{eqn:bound_closed_form} or \eqref{eqn:scaled_closed_form}. 
\begin{theorem} [Simultaneous validity] \label{thm:uniform_ci}
Assume that $L$ is bounded and that $L - \hat{\theta}$ is non-constant over at least one non-empty group. Further assume that $\cG$ has finite Vapnik-Chernovenkis (VC) dimension. Then, $\epsilon_{\textup{lb}}(G)$ is an asymptotic $(1 - \alpha)$-lower confidence bound for $\epsilon(G)$ that is simultaneously valid for all $G \in \cG$, i.e., 
\begin{align*}
    \lim_{n \to \infty} \P \l(\epsilon_{\textup{lb}}(G) \leq \epsilon(G) \textup{ for all $G \in \cG$} \r) = 1 - \alpha.
\end{align*}
\end{theorem}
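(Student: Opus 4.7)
My plan is to handle both \eqref{eqn:bound_closed_form} and \eqref{eqn:scaled_closed_form} uniformly by setting $s(G) \equiv 1$ in the first case and $s(G)$ as in the discussion preceding \eqref{eqn:rescaling_est} in the second. The string of equalities in the excerpt already reduces the conclusion
$\lim_n \P(\epsilon_{\text{lb}}(G) \leq \epsilon(G)\text{ for all $G \in \cG$}) = 1 - \alpha$
to showing that $t^*$ from the relevant bootstrap algorithm consistently estimates the $(1-\alpha)$-quantile of
\begin{align*}
T_n \defeq \sup_{G \in \cG} \frac{1}{s(G)} \cdot \P(G) \cdot \P_n(G) \cdot (\hat{\epsilon}(G) - \epsilon(G)),
\end{align*}
together with Slutsky-style replacements of $\P_n(G)$ by $\P(G)$ inside the rescaling factor.

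\textbf{Empirical-process representation.} Using the algebraic identity
\begin{align*}
\P(G) \P_n(G)(\hat{\epsilon}(G) - \epsilon(G)) = \P(G)(\P_n - \P)[L \indG] - \E[L \indG] \cdot (\P_n - \P)[\indG] - \P(G) \P_n(G)(\hat{\theta} - \theta_P),
\end{align*}
the asymptotic linearity $\sqrt{n}(\hat\theta - \theta_P) = n^{-1/2} \sum_i \psi(X_i, Y_i) + o_P(1)$, and the uniform bound $\sup_G |\P_n(G) - \P(G)| = O_P(n^{-1/2})$ (to absorb the trailing $\P_n(G)$ into $\P(G)$), one obtains
\begin{align*}
\sqrt{n} \cdot \P(G) \P_n(G)(\hat{\epsilon}(G) - \epsilon(G)) = \sqrt{n}(\P_n - \P) f_G + o_P(1)
\end{align*}
uniformly in $G \in \cG$, where $f_G \defeq \P(G) \cdot L \indG - \E[L \indG] \cdot \indG - \P(G)^2 \cdot \psi$. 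Since $\{\indG\}_{G \in \cG}$ is VC, $L$ is bounded, and $\psi \in L^2(P)$ by asymptotic linearity, the class $\mathcal{F} \defeq \{f_G/s(G) : G \in \cG\}$ has a bounded envelope and finite VC-subgraph dimension; in both choices of $s(G)$ the construction guarantees a uniform positive lower bound on $s(G)$, so $\mathcal{F}$ is $P$-Donsker.

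\textbf{Bootstrap consistency.} By the bootstrap CLT for Donsker classes \citep{gine1990bootstrapping}, the bootstrap empirical process $\sqrt{n}(\P_n^* - \P_n)$ indexed by $\mathcal{F}$ converges, conditionally on the data and in outer probability, to the same tight Gaussian limit $\mathbb{G}_P$ as $\sqrt{n}(\P_n - \P)$. The continuous mapping theorem applied to the supremum functional, combined with the non-constancy assumption on $L - \hat{\theta}$ (which forces at least one $f_G/s(G)$ to have strictly positive variance, so that $\sup_G \mathbb{G}_P(f_G/s(G))$ has a continuous distribution function at its $(1-\alpha)$-quantile), then yields consistency of the bootstrap quantile $t^*$, and the excerpt's Slutsky replacement of $\P_n(G)$ by $\P(G)$ closes the unscaled case.

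\textbf{Main obstacle.} The hardest step is accommodating the \emph{estimated} rescaling $\hat{s}(G)$ in \eqref{eqn:scaled_closed_form}: one must establish $\sup_{G \in \cG} |\hat{s}(G)/s(G) - 1| = o_P(1)$, which folds the bootstrap process using $\hat{s}$ back into the Donsker/bootstrap-CLT framework above. This reduces to uniform laws of large numbers for the VC-indexed classes $\{L \indG\}$, $\{L^2 \indG\}$, $\{\psi \indG\}$, and $\{L\psi\indG\}$, each a straightforward Glivenko-Cantelli statement given boundedness and finite VC dimension. The technical subtlety is that the pointwise variance estimator $\hat{\sigma}_G$ is \emph{not} uniformly consistent on groups with $\P(G) \to 0$; the thresholding at $p_*$ and the convex combination with the group-independent quantity $\widehat{\text{Var}}(L)^{1/2}$ in \eqref{eqn:rescaling_est} are exactly what secure both a positive uniform lower bound on $\hat{s}(G)$ and uniform closeness to $s(G)$ on groups large enough for $\hat{\sigma}_G$ to concentrate. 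This is where I would expect to spend most of the argument.
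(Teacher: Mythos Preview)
Your proposal is correct and matches the paper's approach closely: both linearize $\P(G)\P_n(G)(\hat\epsilon(G)-\epsilon(G))$ into the empirical process indexed by $f_G = \P(G)\, L\,\indG - \E[L\,\indG]\,\indG - \P(G)^2\,\psi$, verify the Donsker property, invoke the bootstrap CLT for Donsker classes, and use the non-constancy assumption to guarantee continuity and strict increase of the limiting law at its $(1-\alpha)$-quantile; you have also correctly identified uniform consistency of $1/\hat s(G)$ (the paper proves this as a separate lemma, via exactly the Glivenko--Cantelli arguments for $\{L\,\indG\}$, $\{L^2\,\indG\}$, $\{\psi\,\indG\}$, $\{L\psi\,\indG\}$ that you list) as the main technical ingredient. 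One small slip worth fixing: $\mathcal{F}$ need not have a \emph{bounded} envelope, since $\psi$ is only assumed to lie in $L^2(P)$, so the paper does not argue via ``bounded envelope plus VC-subgraph'' but instead establishes Donsker term-by-term using preservation results---products and sums of uniformly bounded Donsker classes, subsets thereof, and for the $\P(G)^2\psi/s(G)$ piece the observation that it is a uniformly bounded scalar times a single $L^2$ function---which is a routine adjustment to your argument.
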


We make two remarks regarding our assumptions. First, our restriction that $\cG$ has finite VC dimension is satisfied by most interpretable collections of groups: intervals, rectangles, halfspaces, etc. Second, in typical fairness applications, $L$ is $\{0,1\}$-valued and satisfies the boundedness assumption; for unbounded metrics, the auditor might truncate $L$ or, if appropriate, assume compactness of the domain. Furthermore, if $\cG$ is a finite collection, we may relax this assumption to $\text{Var}(L) < \infty$.

Even though \Cref{thm:uniform_ci} is an asymptotic result, a finite-sample approach, e.g., via empirical process concentration, can only satisfy a conservative coverage guarantee. By contrast, the simultaneous coverage of $\epsilon_{\text{lb}}(\cdot)$ converges to \emph{exactly} $1 - \alpha$ under any data-generating distribution. Even at small sample sizes, we show in \Cref{sec:certify_empirics} that the gap between nominal and realized coverage is minimal.

\subsubsection{Boolean certification}
Next, we consider issuing a Boolean certificate for $G$ if $\epsilon(G)$ lies above some pre-specified tolerance $\epsilon$. The trivial extension of our methods to certifying $\epsilon(G) < \epsilon$ and $|\epsilon(G)| < \epsilon$ is described in \Cref{sec:certify_alt_nulls}. While this approach returns strictly less information to the auditor compared to the confidence bound certificate, it offers computational benefits and can be more powerful in certain settings.

Formally, certifying $\epsilon(G) > \epsilon$ corresponds to testing the null hypothesis $\bar{H}_0(G) : \epsilon(G) \leq \epsilon$.
We require that, in large samples, all issued certificates are simultaneously valid with probability $1 - \alpha$. Equivalently,
\begin{align*}
    \lim_{n \to \infty} \P\left( \text{there exists any falsely certified $G \in \cG$} \right) \leq \alpha. 
\end{align*}
In the language of multiple testing, this desideratum is (asymptotic) strong family-wise error rate (FWER) control. 

To construct such a test, we show that the bootstrap can conservatively estimate the $(1 - \alpha)$-quantile of $\sup_{G \in \cG} \{\P_n(G) \cdot (\hat{\epsilon}(G) - \epsilon)\}$. Letting $t^*$ denote the estimate output by \textsf{\Cref{algo:bootstrap_fwer}}, we certify that $\epsilon(G) > \epsilon$ if $\P_n(G) \cdot (\hat{\epsilon}(G) - \epsilon) \geq t^*$. Simplifying, we reject the null when
\begin{align}
   \hat{\epsilon}(G) \geq \epsilon + \frac{t^*}{\P_n(G)}. \label{eqn:fwer_rejection}
\end{align}
For example, if $L$ is the coverage indicator, we certify $G$ if the empirical conditional coverage on $G$ is sufficiently high.

Even though the scaling of the rejection threshold in \eqref{eqn:fwer_rejection} is sub-optimal for small groups, correcting this would eliminate the singular advantage of the Boolean certification procedure: efficient optimization over certain infinite group collections, e.g., intervals, slabs. Observe that
\begin{align*}
    \P_n(G) \cdot (\hat{\epsilon}(G) - \epsilon) = \frac{1}{n} \sum_{i = 1}^n (L_i - \hat{\theta} - \epsilon) \indic{(x_i, y_i) \in G}, 
\end{align*}
i.e., the process is linear in the group-indicator. Rescaling by $\hat{s}(G)$ does away with this linearity, and line 4 of \textsf{\Cref{algo:bootstrap_fwer}} then requires brute-force search over all $G \in \cG$. For the sake of completeness, we describe this rescaled variant of the Boolean certification procedure in \textsf{\Cref{algo:bootstrap_fwer_appendix}}. 

\begin{algorithm}[t]
  \caption{Bootstrapping the Boolean certificate critical value}
  \label{algo:bootstrap_fwer}
  \begin{algorithmic}[1]
    \State \textbf{Input:} Subpopulations $\cG$, disparity $\epsilon$, audit trail $\cD$, level $\alpha$, number of bootstrap samples $B$
    \For{$b = 1,\dots,B$}
        \State Let $\cD^*_b $ be a sample with replacement of size $n$ from $\cD$;
        \State $t^{(b)} = \max_{G \in \cG} \{\P^*_b(G) \cdot (\epsilon^*_b(G) - \epsilon) - \P_n(G) \cdot (\hat{\epsilon}(G) - \epsilon)\}$;
    \EndFor
    \State \textbf{Return:} $t^* = \text{Quantile} (1 - \alpha; \{ t^{(b)} \}_{b = 1}^B )$
  \end{algorithmic}
\end{algorithm}

\Cref{thm:fwer_main_text} states that \eqref{eqn:fwer_rejection} produces valid certificates under the mild assumptions of \Cref{thm:uniform_ci}.
\begin{theorem}[FWER control for certification] \label{thm:fwer_main_text}
Retain the assumptions of \Cref{thm:uniform_ci}. Then, the certificates issued by \eqref{eqn:fwer_rejection} satisfy
\begin{align*}
\lim_{n \to \infty} \P\left( \textup{there exists any falsely certified $G \in \cG$} \right) \leq \alpha.
\end{align*}
\end{theorem}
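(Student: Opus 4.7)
The plan is to reduce the FWER to the tail of a pivotal empirical-process supremum, and then invoke essentially the same bootstrap-consistency argument used for \Cref{thm:uniform_ci} to show that $t^*$ is an asymptotically valid critical value for that supremum. Let $\cG_0 := \{G \in \cG : \epsilon(G) \leq \epsilon\}$ denote the null groups. By the rejection rule in \eqref{eqn:fwer_rejection} and a union bound,
\begin{align*}
    \P(\exists \text{ falsely certified } G \in \cG) \leq \P\left(\sup_{G \in \cG_0} \P_n(G)\cdot(\hat{\epsilon}(G) - \epsilon) \geq t^*\right).
\end{align*}
Writing $Z_n(G) := \P_n(G)(\hat{\epsilon}(G) - \epsilon) - \P(G)(\epsilon(G) - \epsilon)$, the non-positivity of $\P(G)(\epsilon(G) - \epsilon)$ on $\cG_0$ implies that this probability is at most $\P(\sup_{G \in \cG} Z_n(G) \geq t^*)$. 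This one-sided step is the crux of the reduction: it converts a supremum over the unknown null set $\cG_0$ into a supremum over the entire class $\cG$, at the cost of at most an equality.

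Next, I would identify the weak limit of $\sqrt{n}\sup_{G\in\cG} Z_n(G)$. Expanding, $Z_n(G)$ equals the centered empirical average of $f_G(x,y) := (L(f(x),y) - \theta_P - \epsilon)\indic{(x,y) \in G}$ minus a cross term $\P_n(G)(\hat{\theta} - \theta_P)$. Because $\cG$ has finite VC dimension and $L$ is bounded, $\{f_G : G \in \cG\}$ is a uniformly bounded VC-type class, hence $P$-Donsker; the cross term is controlled by the asymptotic linearity of $\hat{\theta}$ from \Cref{sec:auditing_method} together with the Donsker property of the indicator class $\{(x,y)\mapsto\indic{(x,y) \in G}\}$. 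Combining these, $\sqrt{n} Z_n$ converges weakly to a tight centered Gaussian process on $\cG$.

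The main obstacle is the bootstrap-consistency step. A short algebraic manipulation rewrites the statistic in line 4 of \textsf{\Cref{algo:bootstrap_fwer}} as
\begin{align*}
    U_b(G) = [\P^*_b(f_G) - \P_n(f_G)] - \bigl[\P^*_b(G)(\hat{\theta}^* - \theta_P) - \P_n(G)(\hat{\theta} - \theta_P)\bigr].
\end{align*}
The first bracket is the empirical-bootstrap empirical process on the VC-type class $\{f_G\}$, whose conditional law given $\cD$ converges in probability (by the Gin\'e--Zinn empirical-bootstrap theorem) to the same Gaussian limit as $\sqrt{n}(\P_n(f_G) - P(f_G))$. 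The second bracket is handled by applying asymptotic linearity to $\hat{\theta}^* - \hat\theta$ and $\hat{\theta} - \theta_P$ uniformly in $G$, matching the cross term that appears in $Z_n$. The quantile-convergence lemma (continuity of the supremum functional) then yields $t^* \to q_{1-\alpha}(\sup_{G} Z_n)$ in probability, so $\P(\sup_G Z_n(G) \geq t^*) \to \alpha$. Chaining with the initial reduction gives the desired $\limsup_n \P(\exists \text{ falsely certified } G) \leq \alpha$, completing the proof.
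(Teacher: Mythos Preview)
Your proposal is correct and follows essentially the same route as the paper's proof: reduce the FWER to the tail of $\sup_{G\in\cG}\{\P_n(G)(\hat\epsilon(G)-\epsilon)-\P(G)(\epsilon(G)-\epsilon)\}$ via the non-positivity of the centering on $\cG_0$, linearize both the original and bootstrap processes into an empirical process indexed by a Donsker class built from $\{(L-\theta_P-\epsilon)\mathbf{1}_G\}$, $\{\theta_P\mathbf{1}_G\}$, and $\{\P(G)\psi\}$, and invoke Gin\'e--Zinn (\Cref{lma:donsker_bootstrap}) plus quantile consistency. The only point you leave implicit that the paper makes explicit is that quantile consistency requires the limiting CDF of the supremum to be continuous and strictly increasing at its $(1-\alpha)$-quantile, which is where the ``$L-\hat\theta$ non-constant on some non-empty $G$'' assumption enters (via \Cref{cor:quantile_gaussian_2}).
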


Our bootstrap procedure accounts for overlap between subpopulations and improves upon more naive FWER-controlling procedures such as the Bonferroni test. However, the asymptotic FWER is only exactly $\alpha$ when $\epsilon(G) = \epsilon$ for all $G$. This compares unfavorably to the simultaneous confidence set guarantee, which promises exact Type I error control under any data-generating distribution. We compare and contrast the finite-sample validity of the Boolean and confidence set-based certification methods in the sequel.

\subsection{Empirical results} \label{sec:certify_empirics}
\subsubsection{Synthetic validation}
First, we verify that the (asymptotic) claims of \Cref{thm:uniform_ci} and \Cref{thm:fwer_main_text} are accurate in finite samples. We consider three synthetic data experiments; the results from each are presented in \Cref{tab:fwer_validation}. 

We initially consider a homoskedastic linear model. We sample $(X_i, Y_i)$ from
\begin{align}
    X_i \simiid \text{Unif}(0, 1), \quad Y_i \simiid \mathcal{N}(\beta_0 X_i, 1). \label{eqn:homoskedastic_dgp}
\end{align}
We then obtain $f(x) = \hat{\beta}x$ via ordinary least-squares on $1000$ training points sampled from this distribution. The performance metric of interest is squared-error loss, i.e., $L(f(X), Y) = (Y - f(X))^2$ and the target $\theta_P = 0$. Using held-out data sets of varying size, we issue Boolean certificates for sub-intervals $G \subseteq [0,1]$ over which $\epsilon(G) < 1$. 

In our second experiment, we validate our audit using a heteroskedastic linear model,
\begin{align}
    X_i \simiid \text{Unif}(0, 1), \quad Y_i \simiid \mathcal{N}(\beta_0 X_i, X_i). \label{eqn:heteroskedastic_dgp}
\end{align}
The model $f(X)$ and metric $L(f(X), Y)$ is obtained identically to the previous synthetic experiment. We then issue Boolean certificates for sub-intervals $G \subseteq [0,1]$ over which $\epsilon(G) < \epsilon$ for $\epsilon \in \{0.4, 0.5\}$. To simplify the verification of issued certificates, we consider $G$ with endpoints belonging to $\{0,0.1,\dots,1\}$. \Cref{fig:heteroskedastic_certificate} displays a trial experiment with an audit trail of sample size $400$. For this setting, we observe that the nominal error rate overestimates the realized probability of false certification. 

\begin{figure}
\centering
\captionsetup[subfigure]{width=0.9\linewidth}%
\begin{subfigure}{.5\textwidth}
  \centering
  \includegraphics[width=\linewidth]{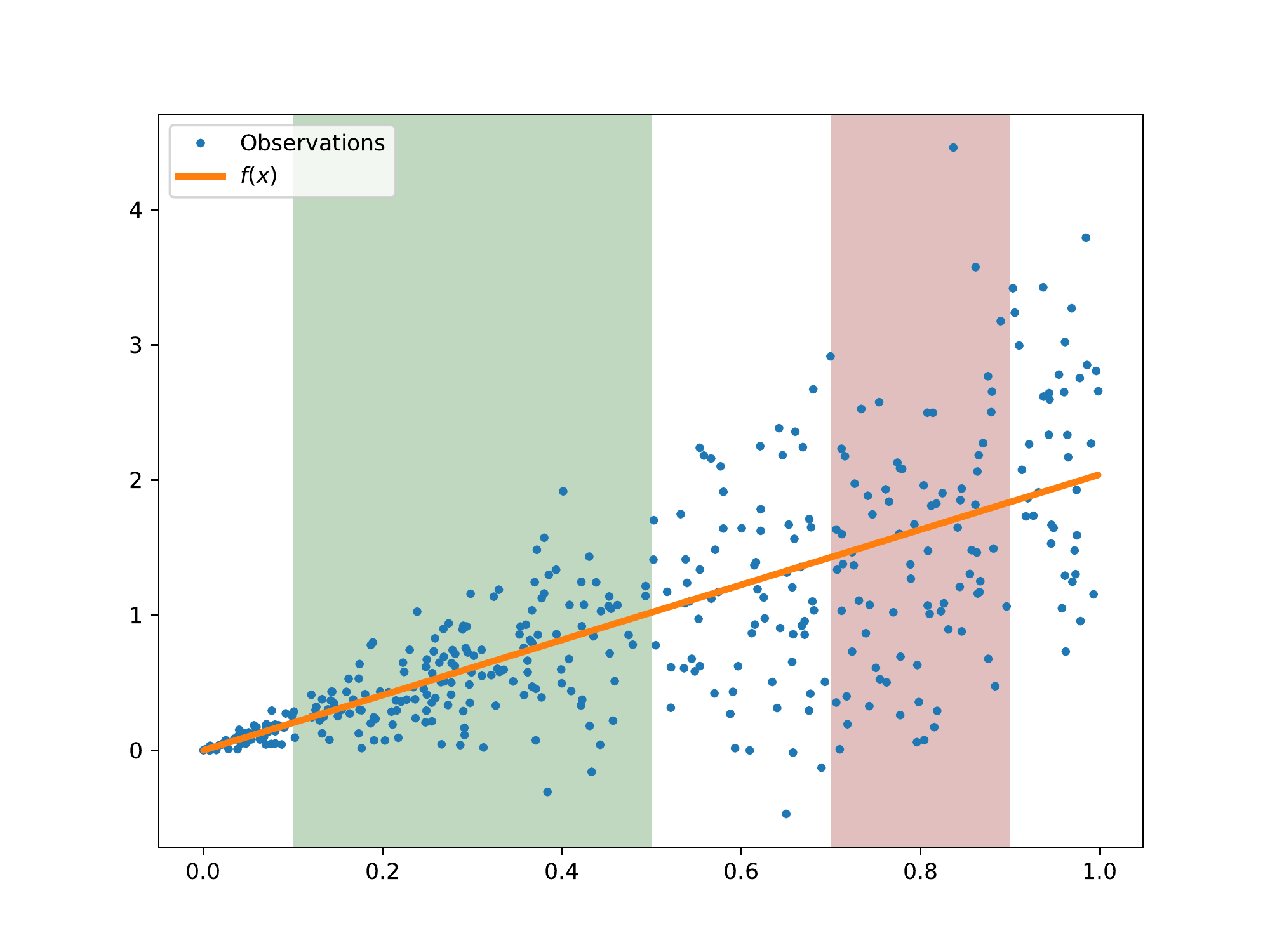}
  \caption{}
  \label{fig:heteroskedastic_certificate}
\end{subfigure}%
\begin{subfigure}{.5\textwidth}
  \centering
  \includegraphics[width=\linewidth]{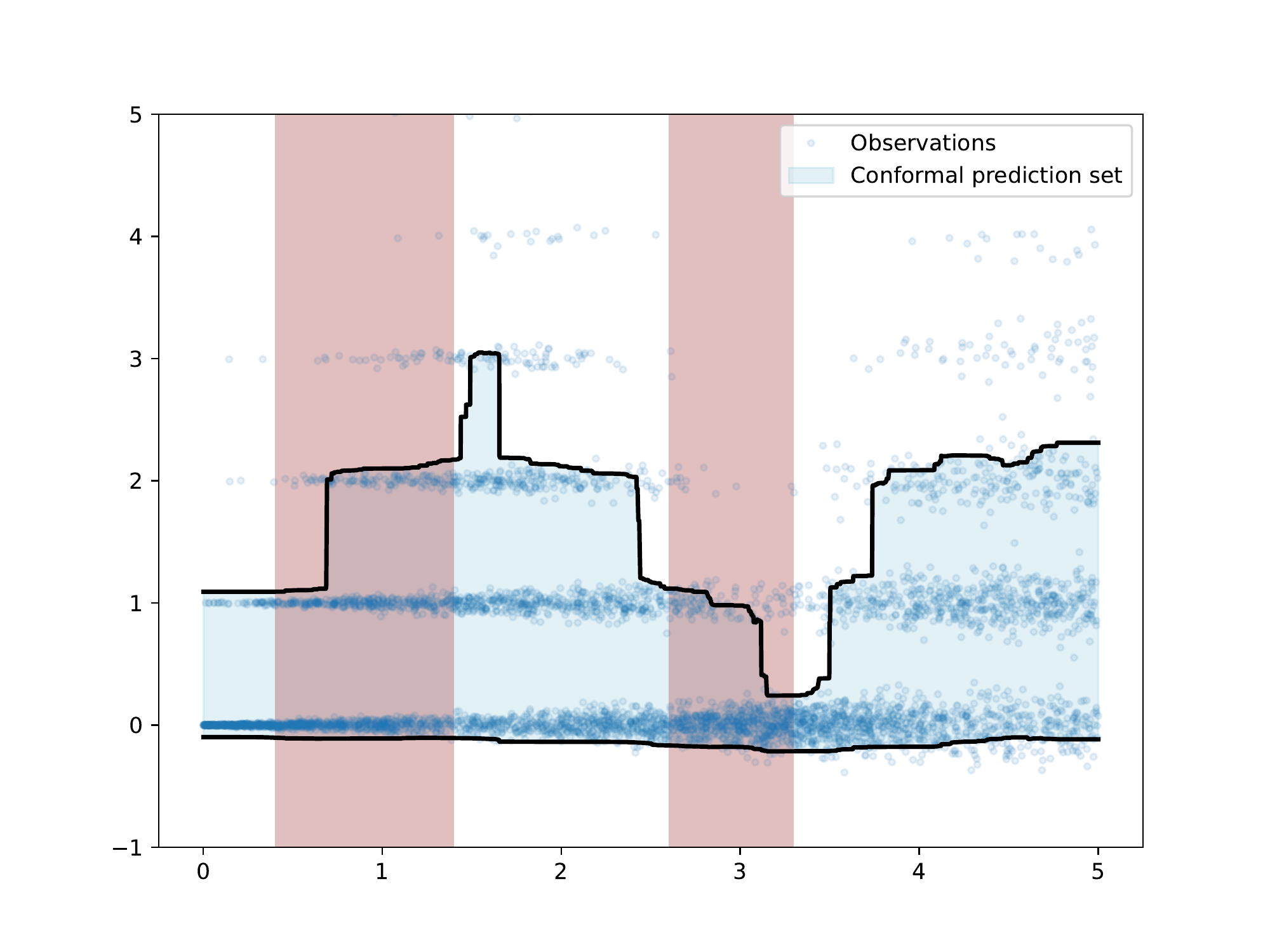}
  \caption{}
  \label{fig:cqr_certificate}
\end{subfigure}
\caption{\Cref{fig:heteroskedastic_certificate} displays a linear model over $400$ hold-out points generated from \eqref{eqn:heteroskedastic_dgp}. Constructing a $90\%$-confidence bound for the expected MSE yields upper bounds of $0.5$ (true MSE: $0.3$) and $1.15$ (true MSE: $0.80$) for the green and red sub-intervals, respectively. \Cref{fig:cqr_certificate} displays prediction intervals from a conformalized quantile random forest on $400$ hold-out points from the synthetic data-generating process in \citet{romano2019conformalized}. Constructing a $90\%$-confidence bound for the conditional coverage yields lower bounds of $81.1\%$ (true coverage: $92.4\%$) and $80.1\%$ (true coverage: $89.5\%$) for the two red sub-intervals, respectively. These bounds are simultaneously valid over arbitrarily many such queries.}
\label{fig:fwer_certificates}
\end{figure}

Last, we consider the synthetic dataset introduced by \cite{romano2019conformalized} and displayed in \Cref{fig:cqr_certificate}. Here, the conformalized quantile regression (CQR) is guaranteed to have 90\% marginal coverage, so we consider certifying sub-intervals for which the coverage exceeds 90\%, e.g. $\theta_P = 0.9$ and $L = \mathbf{1}\{Y \in \hat{C}(X)\}$. 

\Cref{tab:fwer_validation} displays the FWER of the certification audit over $200$ trials. We set the nominal error rate to $0.1$ and vary the sample size $n$ over $(100, 200, 400, 800, 1600)$. The results corroborate the predictions made by our theory. For large $n$, the nominal FWER matches or exceeds the realized level. When the null hypothesis does not hold at the boundary, i.e., $\epsilon(G)$ is not approximately equal to the threshold $\epsilon$, the nominal FWER can substantially overestimate the true level. \Cref{tab:fwer_validation} also displays the power, i.e., the proportion of sub-intervals achieving the certification threshold that are actually certified by our method.

By contrast, recall that the confidence bound approach to certification guarantees (asymptotically) exact coverage under any data-generating distribution. To verify this claim, we construct simultaneous $90\%$ confidence bounds for each of the aforementioned synthetic experiments. 

\Cref{tab:coverage_validation} summarizes our results: we observe that \textsf{\Cref{algo:bootstrap_ci}} and its rescaled variant \textsf{\Cref{algo:bootstrap_ci_rescaled}} $(p_* = 0.01, w_0 = \infty)$ obtain the nominal coverage in large samples. For a fixed threshold $\epsilon$, we define the power of a confidence set as the proportion of sub-intervals with true error rates below $\epsilon$ for which the confidence set excludes $\epsilon$. Note that while the unscaled confidence bounds have low power, i.e., they are less likely to exclude the targeted error level, rescaling mitigates this issue.

\begin{table}
	\centering
	\begin{tabular}{llllllll}
	\toprule
	&&\multicolumn{5}{c}{Sample size ($n$)}\\
	\cmidrule{3-7}
	&&100&200&400&800&1600\\
	\midrule
	\multirow{1}{*}{Model \eqref{eqn:homoskedastic_dgp} ($\epsilon = 1$)}&FWER&0.145&0.145&0.17&0.125&0.095\\[0.25in]

\multirow{2}{*}{Model \eqref{eqn:heteroskedastic_dgp} ($\epsilon=0.5$)}
&FWER&0.105&0.105&0.1&0.075&0.065\\
&Power&0.145&0.268&0.411&0.562&0.709\\
[0.25in]

\multirow{2}{*}{Model \eqref{eqn:heteroskedastic_dgp} ($\epsilon=0.4$)}
&FWER&0.045&0.04&0.015&0.025&0.02\\
&Power&0.036&0.079&0.21&0.375&0.57\\
[0.25in]

\multirow{2}{*}{CQR ($\epsilon = 0.9$)}&FWER&0.081&0.034&0.019&0.012&0.003\\
&Power&0.019&0.008&0.009&0.02&0.062\\
\bottomrule
	\end{tabular}
	\caption{FWER and power of certificates issued by \textsf{\Cref{algo:bootstrap_fwer}} with $B = 500$ and $\alpha = 0.1$. All results are based on 200 trials. We see that, for large $n$, the simultaneous validity guarantee holds.}
	\label{tab:fwer_validation}
\end{table}

\begin{table}
	\centering
	\begin{tabular}{llllllll}
	\toprule
	&&\multicolumn{5}{c}{Sample size ($n$)}\\
	\cmidrule{3-7}
	&&100&200&400&800&1600\\
	\midrule
	\multirow{3}{*}{Model \eqref{eqn:heteroskedastic_dgp} }&Coverage&0.84&0.83&0.86&0.895&0.905\\
&Power ($\epsilon = 0.5$)&0.043&0.081&0.15&0.256&0.378\\
&Power ($\epsilon = 0.4$)&0.003&0.005&0.012&0.058&0.187\\[0.25in]

	\multirow{3}{*}{Model \eqref{eqn:heteroskedastic_dgp} (rescaled) }&Coverage&0.84&0.81&0.845&0.885&0.88\\
&Power ($\epsilon = 0.5$)&0.149&0.270&0.427&0.608&0.743\\
&Power ($\epsilon = 0.4$)&0.038&0.091&0.24&0.457&0.633\\[0.25in]

\multirow{2}{*}{CQR (rescaled)}&Coverage&0.815&0.87&0.845&0.865&0.91\\
&Power ($\epsilon = 0.9$)&0.025&0.011&0.007&0.017&0.055\\
\bottomrule
	\end{tabular}
	\caption{Simultaneous coverage of confidence bounds issued by \textsf{\Cref{algo:bootstrap_ci}} and \textsf{\Cref{algo:bootstrap_ci_rescaled}} with $B = 500$, $\alpha = 0.1$, and $p_* = 0.01$. All results are based on $200$ trials. We see that, for large $n$, the certification procedure satisfies the simultaneous validity guarantee.}
	\label{tab:coverage_validation}
\end{table}

\subsubsection{Certifying COMPAS}
Next, we reconsider previous analyses of the COMPAS recidivism prediction instrument (RPI) and show how our methods can be used to establish rigorous guarantees. The COMPAS algorithm assigns defendants risk scores ranging from 1 to 10 based on an estimated likelihood of re-offending. Prior work showed that African-American defendants are more likely to be mis-classified as high-risk when compared to Caucasian defendants \citep{angwin2016machine}. In response to this finding, the creators of COMPAS, Northpointe Inc., argued that the algorithm is fair when evaluated by the predictive parity criterion \citep{dieterich2016compas, flores2016false}. While they provide statistical evidence for the absence of a significant racial bias by this measure, our methods allow for the construction of an explicit bound on the true disparity.

The predictive parity criterion is satisfied for a single group, $G$, when the positive predictive value (PPV) of $f$ for $G$ matches the PPV for the complement of $G$, i.e.,
\begin{align*}
    \P(Y = 1 \mid f(X) = 1, X \in G) = \P(Y = 1 \mid f(X) = 1, X \in G^c).
\end{align*}
Intuitively, the PPV measures how informative a positive prediction is. For example, if COMPAS classifies a defendant as high-risk ($f(X) = 1$), the PPV corresponds to the probability that they actually recidivate ($Y = 1$).

Following prior work, we binarize the COMPAS scores by defining $f(X)$ to be $1$ when the RPI score is $5$ or higher. Then, to certify a lower bound on the gap between an African-American and Caucasian defendant's PPVs, we consider the subset ($n = 2525$) of the audit trail with $f(X) = 1$ and $X_{\text{race}} \in \{\text{African-American}, \text{Caucasian}\}$. We instantiate our audit with $L$ corresponding to the indicator that $Y$ matches $f(X)$, $\cG$ containing just one group (African-American defendants), and $\theta_P$ denoting the PPV for White defendants:
\begin{align*}
    L(f(X), Y) = Y,\quad &\cG = \{\{(X, Y) \mid X_{\text{race}} = \text{African-American}, f(X) = 1\} \},\\
    \theta_P &= \E[Y = 1 \mid X_{\text{race}} = \text{Caucasian}].
\end{align*} 
We estimate $\theta_P$ using the empirical conditional expectation, $\hat{\theta} = \hat{\E}_n[Y = 1 \mid X_{\text{race}} = \text{Caucasian}]$. 

To verify the claim made by \cite{dieterich2016compas} that there is no reduction in PPV for African-American defendants relative to Caucasian defendants, we construct a $90\%$-lower confidence bound for $\epsilon(G) \defeq \P(Y = 1 \mid f(X) = 1, X \in G) - \P(Y = 1 \mid f(X) = 1, X \in G^c)$. Using this approach, we can certify the previous claim: among defendants receiving high-risk predictions, an African-American defendant is at least $1.87\%$ more likely to recidivate than a Caucasian defendant.

Our methodology is not essential to establishing this result since we only consider a single group. With these tools, however, we can establish PPV disparity bounds that hold simultaneously over every protected subgroup. Again using the COMPAS PPV on Caucasian defendants as our target, we construct simultaneously valid $90\%$ confidence intervals on the PPV disparity for every subpopulation formed by the intersection of race, sex, and age. In \Cref{fig:ppv_intervals}, we plot the intervals corresponding to subgroups of the African-American subpopulation and subgroups formed by intersections of sex and age alone. Our results validate Northpointe Inc.'s claims of PPV parity for several, albeit not all, African-American subpopulations. More generally, younger male subpopulations appear to have higher COMPAS PPV when compared to the Caucasian subpopulation.

\begin{figure}
\centering
\captionsetup[subfigure]{width=0.9\linewidth}%
\begin{subfigure}{.5\textwidth}
  \centering
  \includegraphics[width=\linewidth]{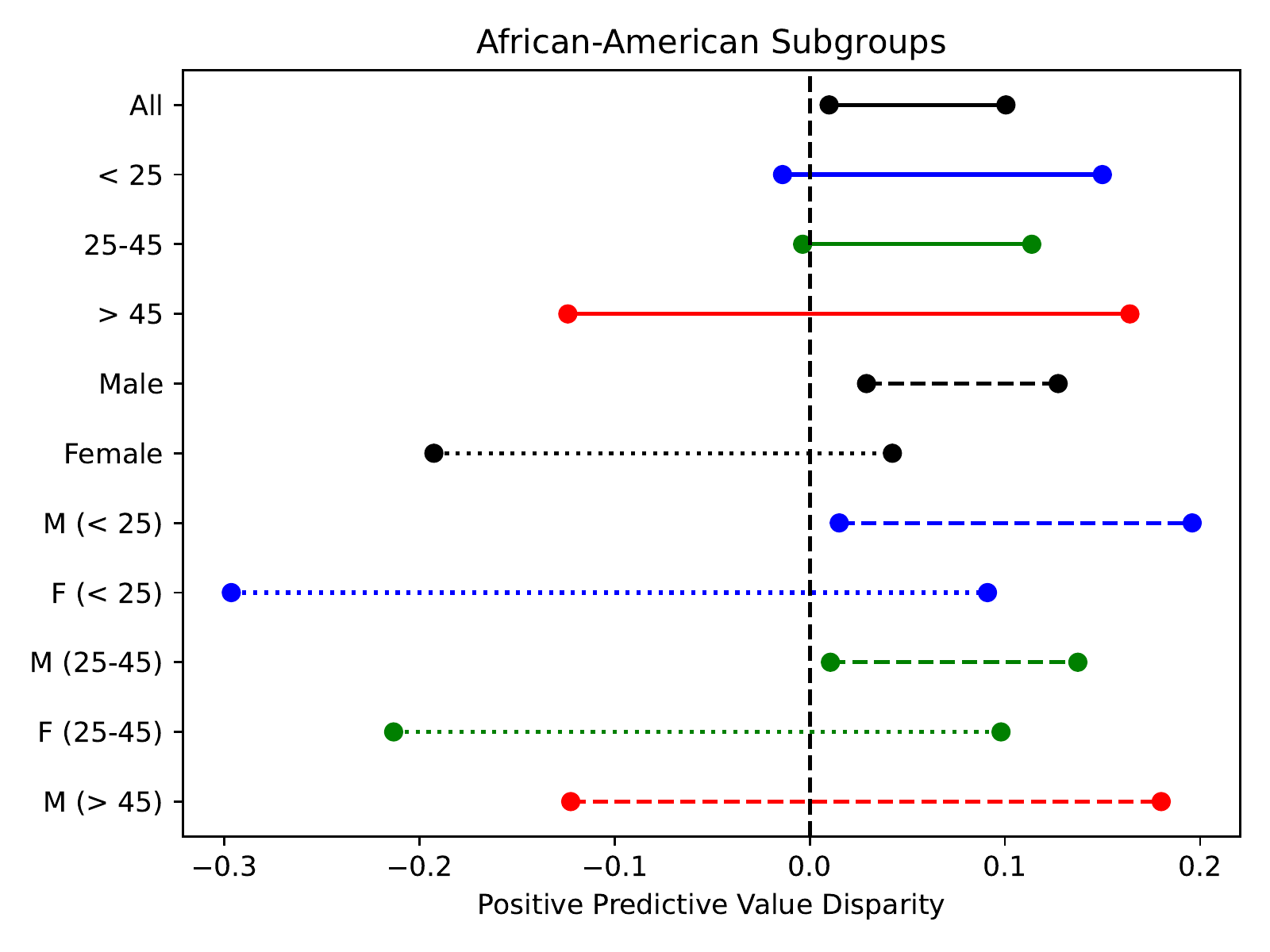}
  \caption{}
  \label{fig:ppv_intervals_afam}
\end{subfigure}%
\begin{subfigure}{.5\textwidth}
  \centering
  \includegraphics[width=\linewidth]{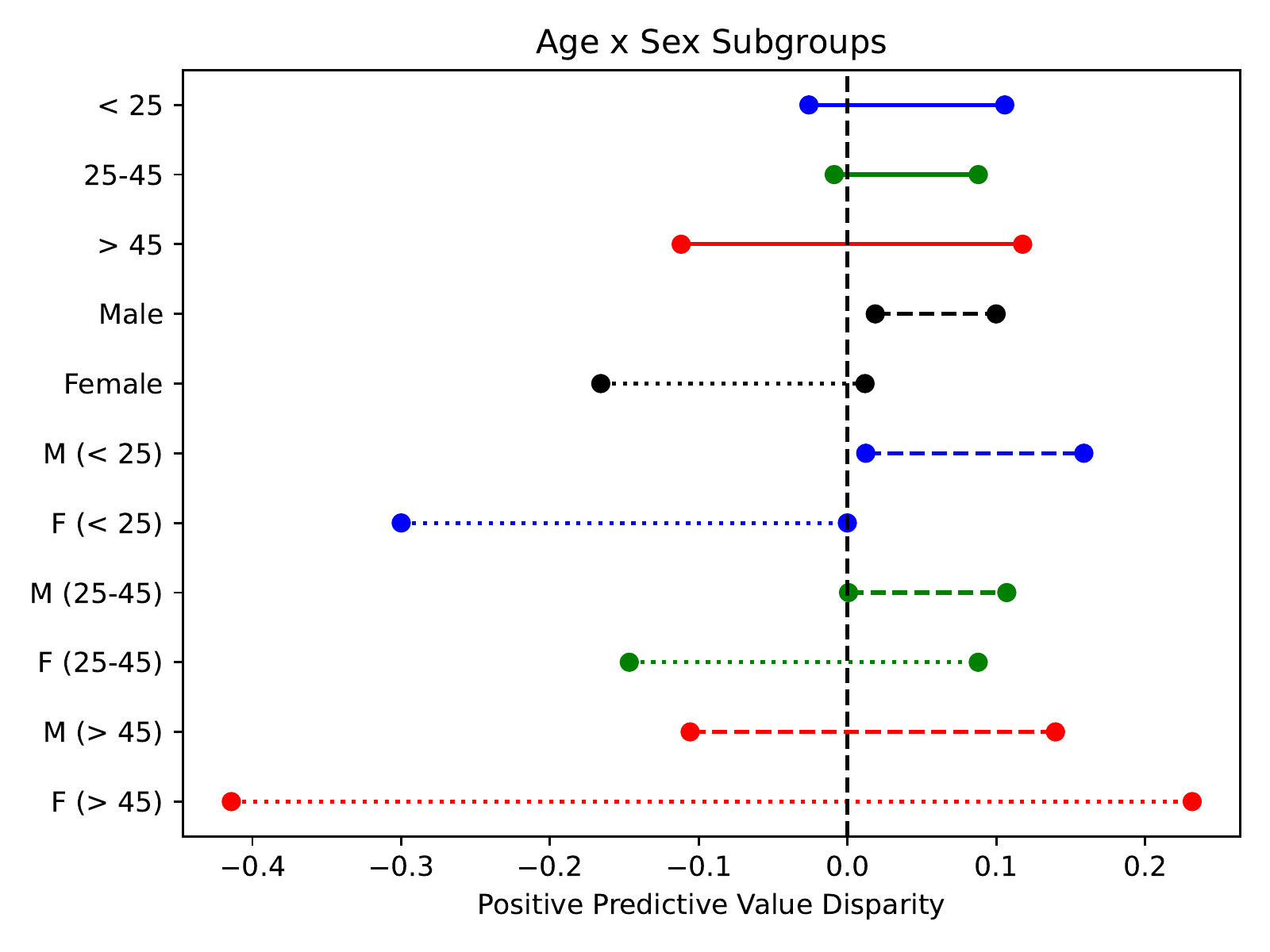}
  \caption{}
  \label{fig:ppv_intervals_all}
\end{subfigure}
\caption{In \Cref{fig:ppv_intervals_afam}, we plot 90\% confidence intervals for the difference in COMPAS PPV between each African-American subgroup and the entire Caucasian subgroup. In \Cref{fig:ppv_intervals_all}, we plot 90\% confidence intervals for the difference in COMPAS PPV between each subgroup formed by intersections of age and sex and the entire Caucasian subgroup.}
\label{fig:ppv_intervals}
\end{figure}

\section{Flagging performance disparities} \label{sec:flag_G}
\subsection{Methods}
We next consider the problem of \emph{flagging} subpopulations for which the disparity exceeds some tolerance, i.e., when
\begin{align}
    H_0(G) : \epsilon(G) \leq \epsilon. \label{eqn:flag_null}
\end{align}
fails to hold. False flags are less problematic than false certificates of performance, so a weaker notion of error control than simultaneous validity suffices. A-priori, we expect most groups to satisfy the null hypothesis given by \eqref{eqn:flag_null}. This leads us to consider a notion of error rate that is defined relative to the number of flags issued instead of the number of subgroups tested; in particular, we control the asymptotic false discovery rate (FDR). In our setting, controlling the false discovery rate translates to upper bounding the expected proportion of falsely flagged subpopulations by $\alpha$. For instance, in \Cref{sec:preview_flag}, we controlled the FDR at 10\%, implying that approximately 90\% of the flagged subgroups have a false positive rate at least 5\% higher than the population average.

Formalizing this criterion, we require
\begin{align*}
    \lim_{n \to \infty} \E \left[ \frac{\l|\{\text{falsely flagged $G \in \cG$}\} \r| }{\l|\{\text{flagged $G \in \cG$}\} \r| \vee 1} \right] \leq \alpha.
\end{align*}
We remark that simultaneous error control (familywise-error rate control) implies control of the false discovery rate, making the latter a strictly weaker criterion.

When the collection of subpopulations is finite, we propose to apply the Benjamini-Hochberg procedure (denoted by \textsf{BH}($\alpha$) in the sequel) to a collection of marginal p-values. \textsf{\Cref{algo:flag_p_values}} describes the computation of these p-values.
\begin{algorithm}[H]
  \caption{Constructing p-values for $G \in \cG$}
  \label{algo:flag_p_values}
  \begin{algorithmic}[1]
    \State \textbf{Input:} Subpopulations $\cG$, audit trail $\cD$, bootstrap samples $B$, tolerance $\epsilon$
    \For{$b = 1,\dots,B$}
        \State Let $\cD^*_{b}$ be a sample with replacement of size $n$ from $\cD$;
        \For{$G \in \cG$}
            \State $t^{(b)}(G) = \epsilon^*_b(G) - \hat{\epsilon}(G)$;
        \EndFor
    \EndFor
    \For{$G \in \cG$}
        \State $s^*(G) = \frac{1}{\Phi(3/4)} \cdot \text{Quantile} (0.5; \{ | t^{(b)}(G) | \}_{b = 1}^B )$;
        \State $p(G) = 1 - \Phi \l((\hat{\epsilon}(G) - \epsilon) /s^*(G) \r )$;
    \EndFor
    \State \textbf{Return:} $\{p(G)\}_{G \in \cG}$.
  \end{algorithmic}
\end{algorithm}
The following proposition states that the $\textsf{BH}(\alpha)$ procedure (applied to the output of \textsf{\Cref{algo:flag_p_values}}) controls the asymptotic false discovery rate under two cases of practical interest: first, the case of mutually disjoint groups, and second, the case of binary outcomes with arbitrarily overlapping group structure. The validity of the procedure in the first case is not surprising since the p-values for disjoint groups are independent. The proof of the second case is more subtle: we show that the binary metric implies a certain positive dependency\footnote{The positive dependency we identify is formally termed positive regression dependence on a subset (PRDS) \citep{benjamini2001control}.} among the p-values. Given this correlation structure, the \textsf{BH} procedure is known to be valid \citep{benjamini2001control}.
\begin{proposition}[FDR control]
Assume that $\P(G)$ and $\var \l(L \mid G \r)$ are bounded away from $0$ for all $G \in \cG$, $\theta_P$ is a-priori known, and that at least one of the following conditions holds:
\begin{enumerate}[label=(\roman*), itemsep=-0.5ex]
    \item $\{G\}_{G \in \cG}$ are mutually disjoint;
    \item $L$ takes values in $\{0,1\}$.
\end{enumerate}
If we flag the rejections of the $\textsf{BH}(\alpha)$ procedure on $\{p(G)\}_{G \in \cG}$, then the false discovery rate is asymptotically controlled at level $\alpha$.
\label{prop:fdr_control_bh}
\end{proposition}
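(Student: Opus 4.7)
First, I would verify that $p(G)$ is asymptotically super-uniform under $H_0(G): \epsilon(G) \leq \epsilon$. The asymptotic linearity of $\hat{\theta}$ together with a delta-method argument gives $\sqrt{n}(\hat{\epsilon}(G) - \epsilon(G)) \Rightarrow \cN(0, \sigma_G^2)$, where $\sigma_G^2$ is bounded away from $0$ under the assumed lower bounds on $\P(G)$ and $\var(L \mid G)$. Bootstrap consistency (as established in the proof of \Cref{thm:uniform_ci}) then implies that the MAD-based estimator $s^*(G)$ is consistent for $\sigma_G/\sqrt{n}$. Combining these facts yields $p(G) \Rightarrow \text{Unif}[0,1]$ when $\epsilon(G) = \epsilon$ and $p(G) \to 1$ in probability when $\epsilon(G) < \epsilon$, so each marginal $p(G)$ is asymptotically super-uniform under $H_0(G)$.

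For case (i), since $\theta_P$ is known we have $\hat{\epsilon}(G) = |G|^{-1}\sum_{i \in G} L_i - \theta_P$, which uses only observations in $G$; when the $\{G\}_{G \in \cG}$ are mutually disjoint the statistics $\{\hat{\epsilon}(G)\}$ and the bootstrap replicates that define $\{s^*(G)\}$ become mutually independent in the limit. The original Benjamini-Hochberg analysis then gives $\mathrm{FDR} \leq \alpha \cdot m_0/m \leq \alpha$ at finite $n$, and this bound transfers to the asymptotic regime.

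For case (ii), I would upgrade the marginal CLT to a joint CLT via the influence functions $\psi_G = \indic{G}(L - p_G)/\P(G)$, where $p_G \defeq \E[L \mid G]$. A direct computation yields the limit covariance
\[
\Sigma_{G,G'} \;=\; \frac{\P(G \cap G')}{\P(G)\,\P(G')}\,\bigl[(1 - p_G - p_{G'})\, p_{GG'} + p_G p_{G'}\bigr], \qquad p_{GG'} \defeq \E[L \mid G \cap G'].
\]
The key observation is that $L^2 = L$ for binary $L$ reduces $(L - p_G)(L - p_{G'})$ to an affine function of $L$, producing the closed form above; a case split on whether $p_G + p_{G'} \leq 1$, combined with the identity $(1 - p_G)(1 - p_{G'}) \geq 0$, shows $\Sigma_{G,G'} \geq 0$. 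Since any multivariate Gaussian with non-negative correlations is PRDS on every subset, the asymptotic joint distribution of $\{p(G)\}_{G \in \cG}$ is PRDS on the null set, and the PRDS extension of the Benjamini-Hochberg procedure \citep{benjamini2001control} controls FDR at level $\alpha$ in the limit.

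The principal obstacle is the non-negativity of $\Sigma_{G,G'}$ in case (ii): without $L^2 = L$ the bracket can take either sign and PRDS generally fails, which is precisely why the proposition restricts case (ii) to binary $L$. A secondary technical point is transferring the asymptotic FDR bound under the limiting p-value distribution to control of the realized FDR at finite $n$; this follows from the continuous mapping theorem together with bounded convergence, using that the Gaussian limit is continuous and the $\textsf{BH}$ rule is piecewise constant with almost-surely continuous boundaries.
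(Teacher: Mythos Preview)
Your proposal is correct and follows essentially the same route as the paper: establish the joint CLT for $\{\hat{\epsilon}(G)\}$, show the limiting covariance is non-negative (zero for disjoint groups; in the binary case via the identity $L^2=L$ and the factorization $(1-p_G)(1-p_{G'})\geq 0$), invoke PRDS to justify Benjamini--Hochberg, and pass to the limit via bounded convergence. The only cosmetic difference is that for case~(i) you appeal directly to independence and the original BH theorem, whereas the paper folds this into the PRDS argument by noting the off-diagonal covariances vanish.
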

We expect \Cref{prop:fdr_control_bh} to remain valid under violations of the stated assumptions. Even outside of the two cases stated in \Cref{prop:fdr_control_bh} (e.g., when we must estimate $\theta_P$), prior experiments with the \textsf{BH} algorithm and our own empirics suggest that this procedure will not violate FDR control \citep{fithian2022conditional}. 


If flagging with FDR control over an infinite collection of subpopulations is desired, we suggest a two-stage procedure. First, split the audit trail and use one split to discover a finite sub-collection of interpretable subpopulations, e.g., fit a regression tree and let each leaf define a subpopulation of interest. Then, use the other split to run the flagging procedure validated by \Cref{prop:fdr_control_bh}.

\subsection{Empirical results}
\subsubsection{Folktables} \label{sec:folktables}
We evaluate the flagging methodology on an income prediction dataset derived from the 2018 Census American Community Survey Public Use Microdata and made available in the Folktables package \citep{ding2021retiring, flood2015integrated}. Using the California data set filtered to over-16 individuals who earned at least $\$100$ in the past year, we aim to predict whether an individual's income exceeds \$50,000. We include age, place of birth, education, race, marital status, occupation, sex, race, and hours worked in the fitted prediction rule.

To validate the (asymptotic) FDR control result in \Cref{prop:fdr_control_bh}, we fit logistic and linear regression models to a training set of $1000$ data points, and then sample audit trails of varying size from the remaining data. We flag subpopulations for which: (1) the misclassification rate is higher than a fixed threshold of $0.5$, (2) the misclassification rate is higher than the population average error rate, (3) the mean-squared error (MSE) of the predicted income is higher than the population mean-squared error. Each of these tasks sheds light on the relevance of \Cref{prop:fdr_control_bh}. The first flagging task satisfies the assumptions of \Cref{prop:fdr_control_bh}, while the other two violate the stated assumptions. 

Since the audit trails are sampled with replacement, the data-generating distribution $P$ is the uniform distribution over the finite population of data held-out from model fitting. Therefore, a flag is falsely issued if the flagged subgroup's error rate on the entire held-out data set fails to exceed the stated threshold. \Cref{tab:fdr_validation} shows that over $1000$ trials, the estimated FDR for each task is well below the nominal bound of $0.1$ at every sample size tested. This is because the null p-values are, in practice, conservative, as the null hypothesis \eqref{eqn:flag_null} rarely holds with equality. 

\begin{table}
	\centering
	\begin{tabular}{lllllllll}
	\toprule
	&&\multicolumn{5}{c}{Sample size ($n$)}\\
	\cmidrule{3-8}
	&Task&100&200&400&800&1600&3200\\
	\midrule
	\multirow{3}{*}{Folktables}&$1$&0.045&0.038&0.032&0.025&0.021&0.014\\
&$2$&0.003&0.005&0.003&0.004&0.003&0.004\\
&$3$&0.0&0.0&0.0&0.0&0.0&0.002\\
\bottomrule
	\end{tabular}
	\caption{FDR of flags issued with $B = 500$ and $\alpha = 0.1$ for the tasks described in \Cref{sec:folktables}. All results are based on 1000 trials. We see that, for any $n$, the FDR guarantee is conservative.}
	\label{tab:fdr_validation}
\end{table}

\subsubsection{COMPAS}
Prior analysis of the COMPAS RPI has shown that the false positive rate of the high-risk designation is substantially higher for African-American defendants compared to Caucasian defendants \citep{angwin2016machine}. We revisit this often-studied example of fairness auditing to determine if we can identify any other demographic groups that suffer from false positive rates at least 5\% higher than the average defendant. In particular, we audit over all intersections of race, sex, and age group ($n = 6781$, $|\cG| = 48$). \Cref{fig:compas_flags} plots the issued flags for subsets of the African-American subpopulation; we can further localize the false positive rate (FPR) disparity among African-American defendants to younger African-American defendants. As shown in \Cref{fig:compas_flags_fpr}, our method also flags nearly every under-25 subgroup, suggesting that this disparity affects young defendants more generally.

In the previous section, we \emph{certified} that the positive predictive value (PPV) of the COMPAS RPI is higher for African-American defendants compared to Caucasian defendants. Since the COMPAS creators claim that PPV is a more appropriate measure of fairness \citep{dieterich2016compas}, we investigate whether any other demographic groups suffer from harmful PPV disparities. As shown in \Cref{fig:compas_flags_ppv}, we are still able to flag certain subpopulations for having at least 5\% lower PPV compared to the average.

\begin{figure}
\centering
\captionsetup[subfigure]{width=0.9\linewidth}%
\begin{subfigure}{.5\textwidth}
  \centering
  \includegraphics[width=.9\linewidth]{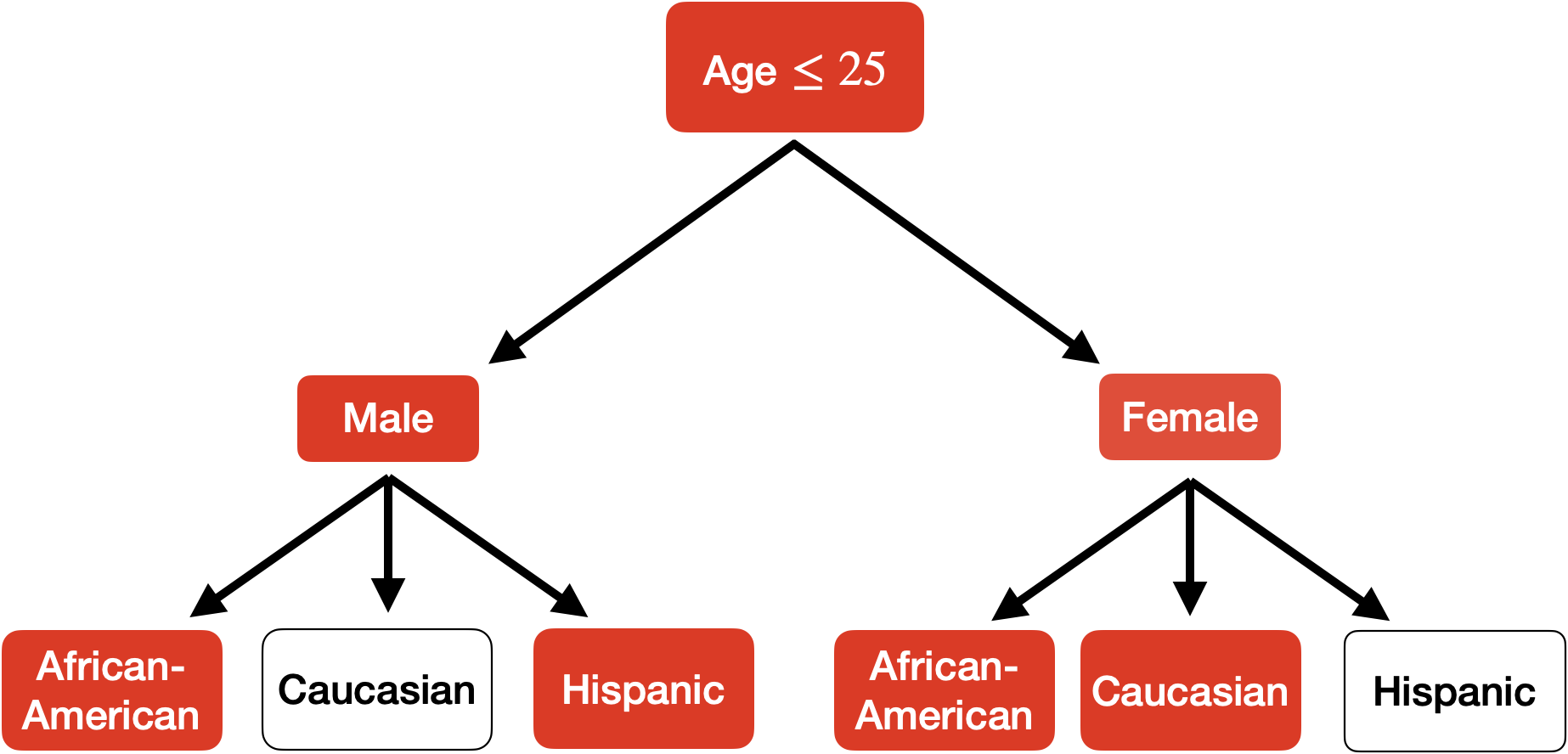}
  \caption{}
  \label{fig:compas_flags_fpr}
\end{subfigure}%
\begin{subfigure}{.5\textwidth}
  \centering
  \includegraphics[scale=0.23]{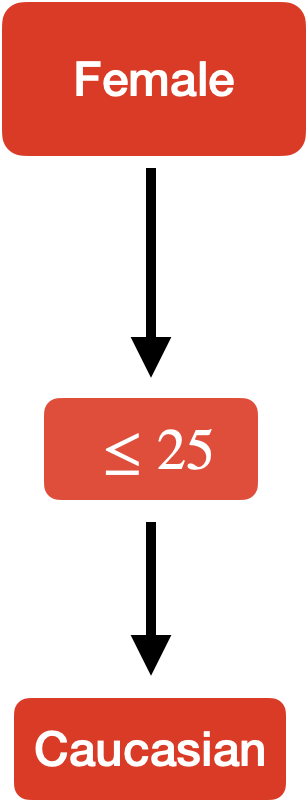}
  \caption{}
  \label{fig:compas_flags_ppv}
\end{subfigure}
\caption{In \Cref{fig:compas_flags_fpr}, the red boxes correspond to groups flagged as having substantially higher-than-average false positive rates. Most under-25 subpopulations suffer from this disparity. In \Cref{fig:compas_flags_ppv}, the red boxes denote groups flagged as having substantially lower-than-average positive predictive values. Though we are able to flag only three nested subpopulations (Females, Females under 25, and Caucasian females under 25), these results suggest that certain subpopulations still face disparities according to this measure.}
\label{fig:compas_flags_empirics}
\end{figure}

\section{Beyond subpopulations} \label{sec:beyond_subpop}
\subsection{Methods}
Auditing over subpopulations is equivalent to assessing performance over the class of distribution shifts indexed by the tilts, $\{\indic{(X, Y) \in G}\}_{G \in \cG}$. In particular,
\begin{align*}
    \epsilon(G) = \E_P \l[L\mid G\r] - \theta_P = E_{P_G} [L] - \theta_P,
\end{align*}
where $dP_G(x, y) \propto \indic{(x, y) \in G} dP(x, y)$. 
\begin{figure}
\centering
\captionsetup[subfigure]{width=0.9\linewidth}%
\begin{subfigure}{.5\textwidth}
  \centering
  \includegraphics[width=\linewidth]{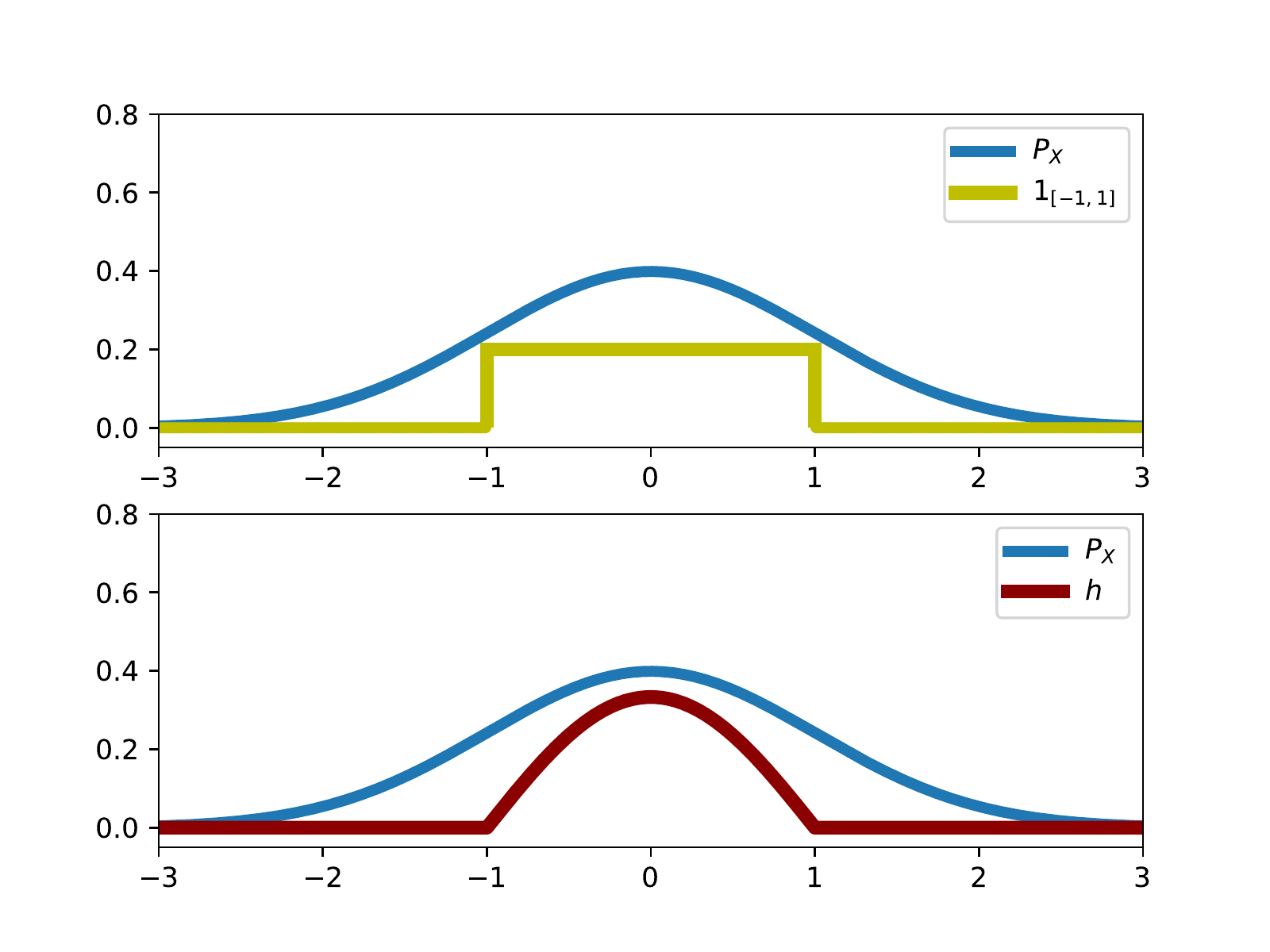}
  \caption{}
  \label{fig:tilt_funcs}
\end{subfigure}%
\begin{subfigure}{.5\textwidth}
  \centering
  \includegraphics[width=\linewidth]{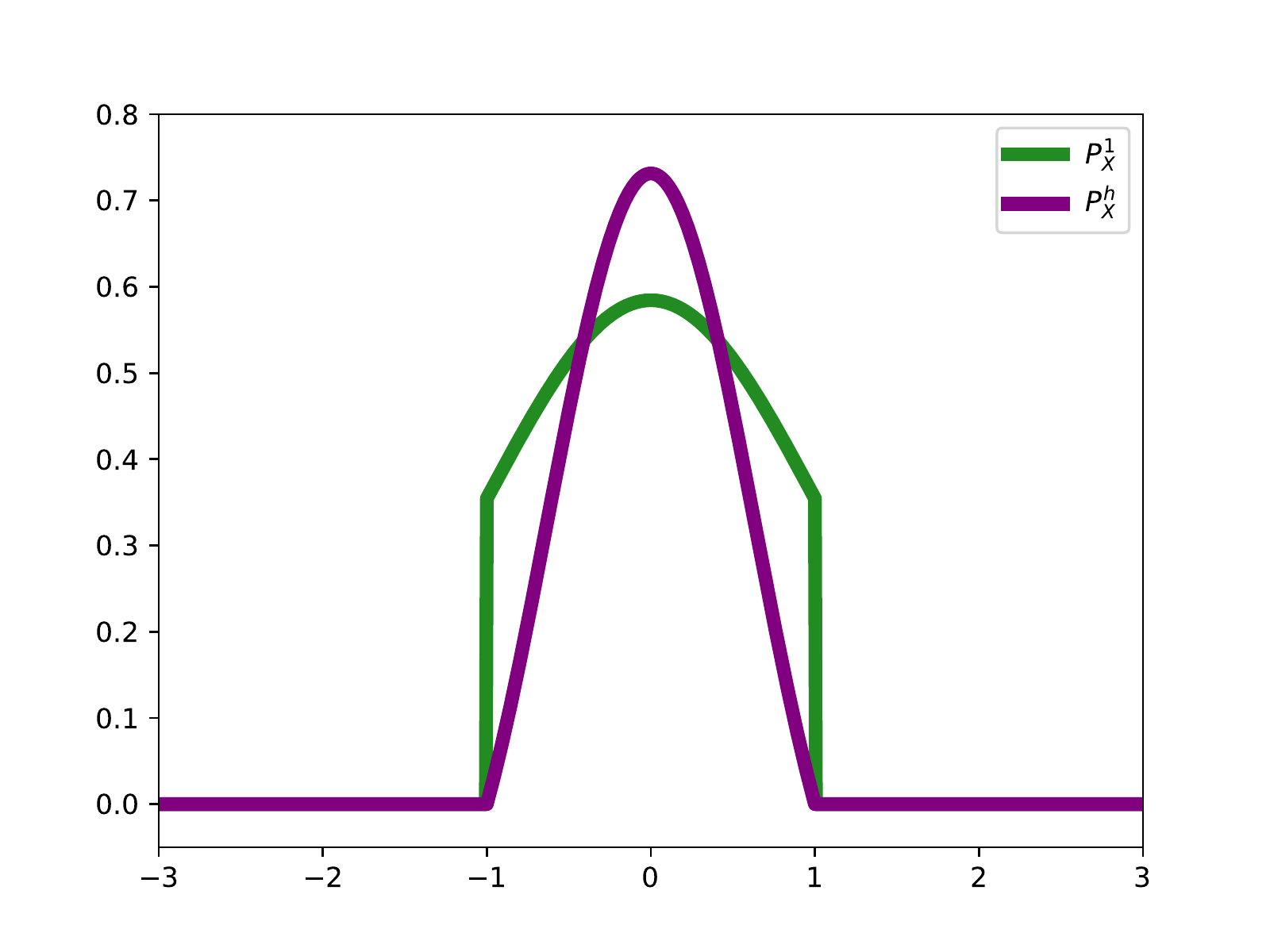}
  \caption{}
  \label{fig:tilted_dist}
\end{subfigure}
\caption{In \Cref{fig:tilt_funcs}, we plot an indicator tilt that corresponds to conditioning on $X$ belonging to the interval $[-1,1]$, and a non-negative tilt that resembles the indicator. \Cref{fig:tilted_dist} plots the (similar) probability densities over $X$ produced by both tilts.}
\label{fig:tilts}
\end{figure}
In this section, we consider the natural generalization of these tilts from 0-1 valued functions to collections of non-negative functions. For such a collection, $\cH$, we might wish to bound the following discrepancy for each $h$, 
\begin{align*}
    \epsilon(h) = \E_{P_h}[L(f(X), Y)] - \theta_P,
\end{align*}
where $dP_h(x, y) \propto h(x, y) dP(x, y)$. \Cref{fig:tilts} shows how non-negative functions that approximate an indicator can produce similar tilted distributions over $\cX$.

To motivate this auditing task, consider the problem of assessing an autonomous vehicle's performance over a diverse set of environments. While most work on distributional robustness focuses on assessing performance under the worst-case shift belonging to some set, such an appraisal of model robustness is inherently limited \citep{duchi2021learning}. The worst-case over some tractable collection of tilts is unlikely to be a useful proxy for most real-world distribution shifts. By contrast, our methods allow the modeler to query any shift belonging to the audited collection. As new environments are encountered, the modeler can reliably assess which environments may have benign effects on performance and which may be especially problematic. 

When $\cH$ is finite, our methods for certifying and flagging performance are unchanged. For infinite $\cH$, our previous results still apply: we can certify performance discrepancies over any VC class of tilts. While this is by no means an exhaustive list, such classes include the set of non-negative polynomials and any collection of bounded monotone tilts over a single covariate \citep{van2000asymptotic}.

Generalizing from binary to non-negative tilts expands auditing to certain function classes with \emph{infinite} VC dimension. We describe sufficient conditions for bootstrap-based auditing over general function classes $\cF$ in \Cref{sec:rkhs_appendix}, but here we highlight one collection in particular: the unit ball of a Reproducing Kernel Hilbert Space (RKHS). 

Let $h$ denote any non-negative function belonging to the unit ball of a RKHS; we denote the collection of such functions by $\cH^+_1$. Define $\epsilon(h)$ as the disparity under this tilt, i.e.,
\begin{align*}
\epsilon(h) \defeq \E_{P_{{h}}} [L] = \frac{\E_P \l[(L - \theta_P) h(X) \r]}{\E_P[h(X)]}.
\end{align*}
Let $\hat{\E}_n[f(X)]$ denote the plug-in estimator for the expectation of $f$ under the empirical distribution. Then, we define $\hat{\epsilon}(h) \defeq (\hat{\E}_n[h(X)])^{-1} \hat{\E}_n [(L - \hat{\theta}) h(X)]$. We will assume hereafter that $\theta_P = 0$, but a generalization of our approach to estimated targets is given in \Cref{sec:rkhs_appendix}.

We highlight two important characteristics of the RKHS auditing task. First, for a suitably chosen RKHS, $h$ can approximate \emph{any} smooth tilt of the covariate distribution, $P_X$, defined over a compact subset \citep{micchelli2006universal}. Given sufficient data, this allows us to issue guarantees on model performance for essentially arbitrary groups and covariate shifts. 

Auditing over the RKHS unit ball offers another advantage: we can construct a confidence bound for $\epsilon(h)$ without the onerous optimization present at each step of \textsf{\Cref{algo:bootstrap_ci}}. Recall that in \textsf{\Cref{algo:bootstrap_ci}}, we used the bootstrap to estimate the $(1 - \alpha)$-quantile of $\sup_{G \in \cG} \{\P_n(G) \cdot \P(G) \cdot (\hat{\epsilon}(G) - \epsilon(G))\}$. Each iteration then required solving a challenging combinatorial optimization problem over $\mathcal{G}$. At first glance, the analogous task for RKHS-based auditing appears even more difficult. For each bootstrap sample, we must compute
\begin{align*}
    \max_{h \in \cH_1^+} \{\hat{\E}_n[h(X)] \cdot \hat{\E}^*_b[h(X)] \cdot (\epsilon^*_b(h) - \hat{\epsilon}(h))\}.
\end{align*}
Naively, this optimization problem is intractable.

In \textsf{\Cref{algo:bootstrap_rkhs}}, however, we are able to reduce this task to computing the top eigenvalue of a low-rank matrix. Two observations are crucial to this reduction. First, the supremum of the process indexed by $h \in \cH_1^+$ is upper bounded by the supremum of the same process indexed by the unrestricted unit ball $\cH_1$, i.e.,
\begin{align}
    t^* \stackrel{\cdot}{=} \text{Quantile} \l(1 - \alpha; \sup_{h \in \cH_1} \{\E_P[h(X)] \cdot \hat{\E}_n[h(X)] \cdot (\hat{\epsilon}(h) - \epsilon(h))\} \r). \label{eqn:rkhs_quantile}
\end{align}

We can then simplify the maximization problem in each iteration of the bootstrap algorithm by exploiting the finite-dimensional representer theorem for RKHS functions. This theorem states that for any finite set $\{x_i\}_{i = 1}^n$ and $h \in \cH_1$, $\{h(x_i)\}_{i = 1}^n = \mathbf{K} \mathbf{w}$ for $\{\mathbf{K}_{ij}\}_{i,j = 1}^n = \{k(x_i, x_j)\}_{i,j=1}^n$ and some $\mathbf{w} \in \mathbb{R}^n$ \citep{steinwart2008support}. The other steps of this reduction, which are technical but uninformative, are deferred to the proof of \Cref{thm:uniform_ci_rkhs}.

\begin{algorithm}[t]
  \caption{Bootstrapping the RKHS confidence set critical value}
  \label{algo:bootstrap_rkhs}
  \begin{algorithmic}[1]
    \State \textbf{Input:} Kernel $k$, audit trail $\cD$, level $\alpha$, bootstrap samples $B$
    \State Define $\mathbf{L} \defeq \{L(f(x_i), y_i)\}_{i = 1}^n$;
    \State Define $\mathbf{K} \defeq \{k(x_i, x_j)\}_{i,j = 1}^n$;
    \For{$b = 1,\dots,B$}
        \State Sample $\mathbf{w} \sim \text{Mult}\l (n; \frac{1}{n},\dots,\frac{1}{n} \r)$;
        \State $\mathbf{A} = \frac{1}{n^2} \l(\l(\mathbf{w} \odot \mathbf{L}\r) \mathbf{1}^\top - \mathbf{w} \mathbf{L}^\top\r)$;
        \State $t^{(b)} = \lambda_{\max} \l(\mathbf{K}^{1/2} \l(\frac{\mathbf{A} + \mathbf{A}^\top}{2}\r) \mathbf{K}^{1/2} \r)$;
    \EndFor
    \State \textbf{Return:} $t^* = \text{Quantile} (1 - \alpha; \{t^{(b)}\}_{b = 1}^B ) $
  \end{algorithmic}
\end{algorithm}

Using the output of \textsf{\Cref{algo:bootstrap_rkhs}}, we construct a lower confidence bound for $\epsilon(h)$ as
\begin{align}
    \epsilon_{\text{lb}}(h) \defeq \hat{\epsilon}(h) - \frac{t^*}{\l(\frac{1}{n} \sum_{i = 1}^n h(x_i)\r)^2}. \label{eqn:rkhs_ub}
\end{align}
We remark that a rescaling similar to \eqref{eqn:rescaled_process} can also be applied in this setting. The resulting bootstrap computation for the rescaled RKHS process, however, is prohibitively expensive. Nevertheless, we include a detailed description of the appropriate rescaling and bootstrap algorithm in \Cref{sec:rkhs_appendix}.

\Cref{thm:uniform_ci_rkhs} states the assumptions under which $\epsilon_{\text{lb}}(h)$ is a simultaneously valid confidence bound.
\begin{theorem}[Simultaneous RKHS confidence bound validity] \label{thm:uniform_ci_rkhs}
Assume that $\var(L)$ is bounded away from $0$, $\|L\|_\infty$ and $\|k(X, X)\|_\infty$ are finite, $k(\cdot, x)$ is continuous for all $x$, and that $k(\cdot, \cdot)$ is a positive definite kernel. Then, 
\begin{align*}
    \lim_{n \to \infty} \P \l(\epsilon_{\textup{lb}}(h) \leq \epsilon(h) \textup{ for all $h \in \mathcal{H}^+_1$} \r) \geq 1 - \alpha.
\end{align*}
\end{theorem}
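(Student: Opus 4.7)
The plan is to show that $t^*$ is asymptotically a valid $(1-\alpha)$ upper quantile for the supremum over the full unit ball $\cH_1$ (not just the positive cone $\cH_1^+$) of the bilinear empirical process
\begin{align*}
Z_n(h) \;\defeq\; \E_P[h]\,\hat{\E}_n[Lh] - \hat{\E}_n[h]\,\E_P[Lh],
\end{align*}
after which the containment $\cH_1^+ \subseteq \cH_1$ will furnish the conservative ($\geq 1-\alpha$) coverage claimed in the theorem. First, I rearrange the defining inequality in \eqref{eqn:rkhs_ub}: the event $\{\epsilon_{\textup{lb}}(h) \leq \epsilon(h) \text{ for all } h \in \cH_1^+\}$ equals $\{\sup_{h \in \cH_1^+} (\hat{\E}_n[h])^2(\hat{\epsilon}(h) - \epsilon(h)) \leq t^*\}$. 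A direct expansion gives $\hat{\E}_n[h]\,\E_P[h]\,(\hat{\epsilon}(h) - \epsilon(h)) = Z_n(h)$, so the discrepancy between $(\hat{\E}_n[h])^2(\hat{\epsilon}(h) - \epsilon(h))$ and $Z_n(h)$ equals $(\hat{\E}_n[h] - \E_P[h]) \cdot (\hat{\E}_n - \E_P)[(L - \epsilon(h))h]$, a product of two uniformly $O_P(n^{-1/2})$ empirical processes, hence uniformly $O_P(n^{-1})$ and asymptotically negligible relative to $t^* = O_P(n^{-1/2})$.

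Next, I invoke the Donsker property. Under the kernel assumptions (continuous, bounded, positive definite) combined with $\|L\|_\infty < \infty$, the unit ball $\cH_1$ admits the constant envelope $\|k\|_\infty^{1/2}$ and is $P$-Donsker, as is the class $\{(L - \epsilon(h))h : h \in \cH_1\}$. This yields weak convergence of $\sqrt{n}\,Z_n$ in $\ell^\infty(\cH_1)$ to a tight Gaussian process $Z_\infty$, and the nonparametric bootstrap is consistent for its law by the Gin\'e--Zinn theorem. The bootstrap analogue $Z^*_{n,b}(h) = \hat{\E}_n[h]\hat{\E}^*_b[Lh] - \hat{\E}^*_b[h]\hat{\E}_n[Lh]$ takes, by the finite-sample representer theorem, the quadratic form $\mathbf{h}^\top \mathbf{A}_b\,\mathbf{h}$ with $\mathbf{h} = \mathbf{K}\alpha$ and $\mathbf{A}_b$ exactly as in line $6$ of Algorithm~\ref{algo:bootstrap_rkhs}. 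After symmetrization and the change of variables $\beta = \mathbf{K}^{1/2}\alpha$, the constraint $\alpha^\top \mathbf{K}\alpha \leq 1$ becomes $\|\beta\|_2 \leq 1$, so that $\sup_{h \in \cH_1} Z^*_{n,b}(h) = \lambda_{\max}\bigl(\mathbf{K}^{1/2}\tfrac{\mathbf{A}_b + \mathbf{A}_b^\top}{2}\mathbf{K}^{1/2}\bigr)$, matching $t^{(b)}$.

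Putting the pieces together, $t^*$ converges to the $(1-\alpha)$-quantile of $\sup_{h \in \cH_1} Z_n(h)$, which asymptotically dominates $\sup_{h \in \cH_1^+} Z_n(h)$, which in turn equals the process we actually wish to bound up to the uniformly $o_P(n^{-1/2})$ remainder identified above; the three inequalities chain to give $\P(\sup_{h \in \cH_1^+} (\hat{\E}_n[h])^2(\hat{\epsilon}(h) - \epsilon(h)) \leq t^*) \to 1-\alpha$ at least. The main obstacle will be the bootstrap-consistency step: establishing that $\cH_1$ is $P$-Donsker and that the multinomial bootstrap reproduces the limiting Gaussian process conditional on the data, and then passing from functional convergence to convergence of the quantile of its supremum. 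The latter requires that the CDF of $\sup_{h \in \cH_1} Z_\infty(h)$ be continuous at its $(1-\alpha)$-quantile, which follows from non-degeneracy of $Z_\infty$; this in turn is guaranteed by $\var(L)$ being bounded away from $0$ together with the positive definiteness of $k$, so that the covariance kernel of $Z_\infty$ is not identically zero on $\cH_1$.
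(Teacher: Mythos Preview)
Your proposal is correct and follows essentially the same route as the paper's proof: rearrange the coverage event into a supremum over $\cH_1^+$, linearize the bilinear process, invoke the $P$-Donsker property of $\cH_1$ to obtain weak convergence and bootstrap consistency, enlarge $\cH_1^+$ to $\cH_1$ for the conservative inequality, reduce the bootstrap supremum to the eigenvalue form via the representer theorem, and finally use non-degeneracy of the limiting Gaussian to pass to quantiles. The only stylistic difference is that you handle the replacement of $\hat{\E}_n[h]$ by $\E_P[h]$ via an explicit $O_P(n^{-1})$ remainder (product of two centered empirical processes), whereas the paper phrases the same step through Slutsky's lemma; the content is identical.
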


Commonly used kernels, such as the Gaussian and Laplace kernels, satisfy the assumptions given in \Cref{thm:uniform_ci_rkhs}. 

\subsection{Empirical results}
We first validate our coverage guarantee for RKHS-based confidence sets using a synthetic experiment in which the ground truth is known. Formally, we use the same data-generating process as in \eqref{eqn:heteroskedastic_dgp}, but discretize $\cX$ as follows,
\begin{align}
    X_i \simiid \text{Unif}(\{0,0.01,0.02,\dots,1\}),\quad Y_i = \mathcal{N}(\beta_0 X_i, X_i).\label{eqn:dgp_rkhs}
\end{align}
In each trial, we sample $\beta_0 \simiid \mathcal{N}(0, 1)$ and use $1000$ training points, $(X_i, Y_i)$, to fit the OLS predictor, $f(x) = \hat{\beta}^\top x$. We then audit over covariate shifts corresponding to the non-negative functions belonging to the unit ball of a Gaussian RKHS with varying bandwidths $\sigma \in \{0.1, 0.5, 1\}$.

\begin{table}[t]
	\centering
	\begin{tabular}{llllllll}
	\toprule
	&&\multicolumn{5}{c}{Sample size ($n$)}\\
	\cmidrule{3-7}
	&$\sigma$&100&200&400&800&1600\\
	\midrule
	\multirow{3}{*}{Model \eqref{eqn:dgp_rkhs}}&$1$&0.92&0.945&0.94&0.89&0.91\\
&$0.5$&0.93&0.955&0.935&0.885&0.915\\
&$0.1$&0.95&0.95&0.935&0.93&0.95\\
\bottomrule
	\end{tabular}
	\caption{Realized percentile of $t^*$ output by \textsf{\Cref{algo:bootstrap_rkhs}} with $B = 500$ and $\alpha = 0.1$. The nominal percentile is $1 - \alpha = 0.9$. All results are based on $200$ trials. In small samples, the bootstrap approximation can be conservative.}
	\label{tab:rkhs_validation}
\end{table}

Since evaluating the coverage under all non-negative functions in the RKHS is infeasible, we instead check that the bootstrap approximation to \eqref{eqn:rkhs_quantile} is valid. Recall that $t^*$ estimates the $(1 - \alpha)$-quantile of 
\begin{align}
    \sup_{h \in \cH_1} \{\E_P[h(X)] \cdot \hat{\E}_n[h(X)] \cdot (\hat{\epsilon}(h) - \epsilon(h))\}. \label{eqn:rkhs_process}
\end{align} 
In \Cref{tab:rkhs_validation}, we compute the percentile of \eqref{eqn:rkhs_process} realized by $t^*$ for the synthetic experiment, i.e., $\P(\sup_{h \in \cH_1} \{\E_P[h(X)] \cdot \hat{\E}_n[h(X)] \cdot (\hat{\epsilon}(h) - \epsilon(h))\} \leq t^*)$. Our results show that in small samples, the RKHS audit may be conservative: the realized percentile is sometimes larger than the nominal level.

\begin{figure}
\centering
\includegraphics[scale=0.4]{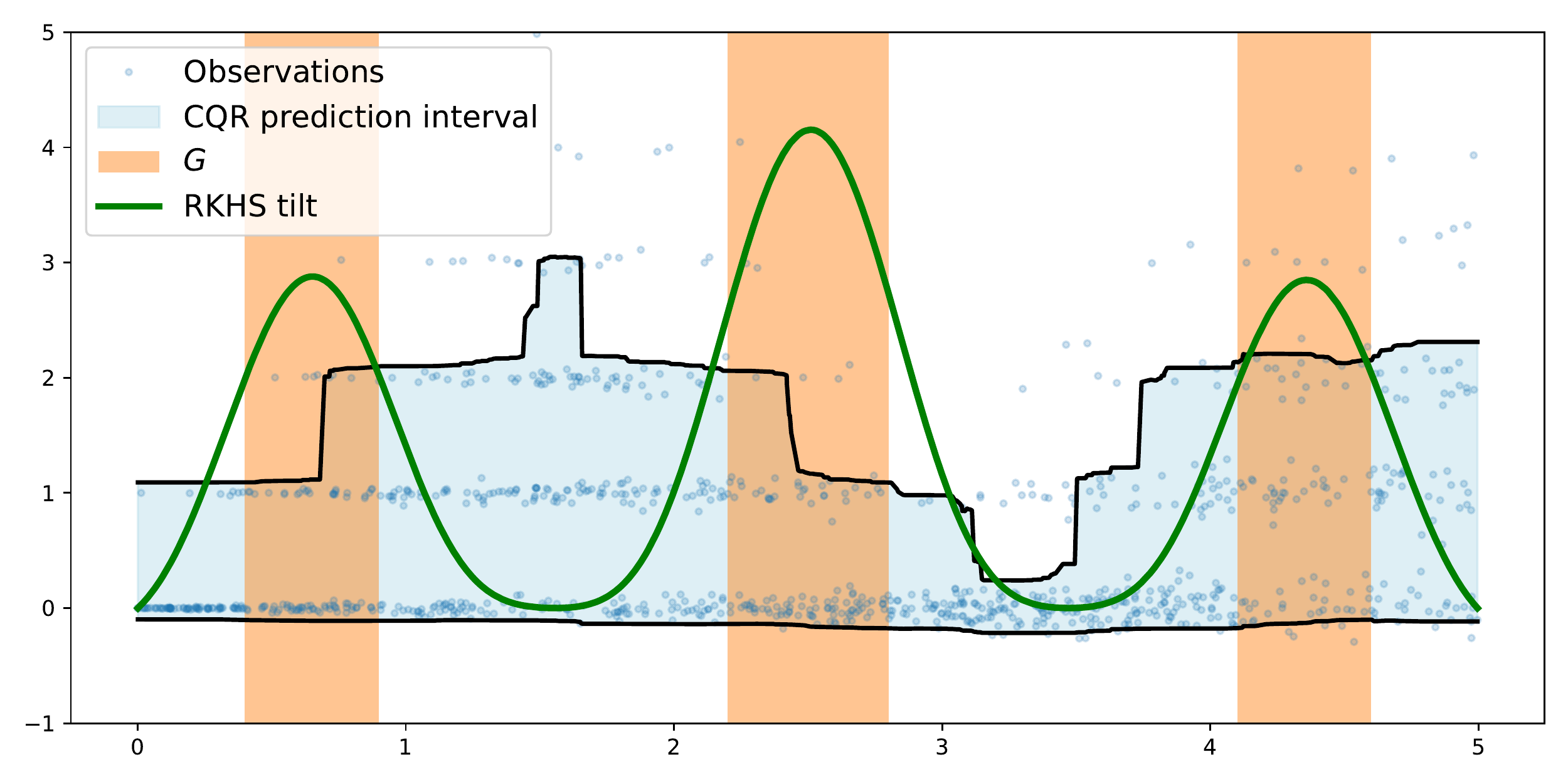}
\caption{To provide an approximate coverage guarantee on the union of the three highlighted intervals, we tilt the covariate distribution by the depicted function belonging to the Gaussian RKHS with bandwidth $\sigma = 0.5$. For the audit trail displayed, we can lower bound the ``tilted'' coverage at 84.0\%. }
\label{fig:rkhs_fit}
\end{figure}

We also revisit the synthetic dataset of \cite{romano2019conformalized} to practically demonstrate how \eqref{eqn:rkhs_ub} can provide coverage guarantees over complex subgroups. \Cref{fig:rkhs_fit} shows how an arbitrarily chosen union of 3 sub-intervals\footnote{This subgroup is not an element of the sub-interval collection we previously considered.} can be approximated by a Gaussian RKHS function with bandwidth $\sigma = 0.5$. Given an audit trail of $n = 1000$ points, we issue a conditional coverage lower bound of 84.0\%; over infinitely many such tilts, the issued bounds are simultaneously valid with probability at least $90\%$.

\paragraph{Acknowledgments}
We thank Lihua Lei, Jonathan Taylor, Isaac Gibbs, Tim Morrison, and Anav Sood for helpful discussions. We especially thank Kevin Guo for sharing his notes on bootstrap and empirical process theory. 

J.J.C. was supported by the John and Fannie Hertz Foundation. E.J.C. was supported by the Office of Naval Research grant N00014-20-1-2157, the National Science Foundation grant DMS-2032014, the Simons Foundation under award 814641, and the ARO grant 2003514594.

\newpage 
\bibliographystyle{plainnat}
\bibliography{citations}

\newpage
\appendix

\section{Connections to fairness} \label{sec:fairness_examples}
Recall that group fairness definitions ask for approximate parity of $L$ across all protected subpopulations \citep{dwork12fairness, hardt2016equality, corbett2017algorithmic, kearns2018preventing}. We show how to use our methods to audit two popular group fairness definitions below.

\paragraph{Multicalibration} We say that a predictor $f(X)$ is calibrated if the conditional expectation of the binary label $Y$ given that $f(X) = v$ matches $v$, i.e., $\E[Y \mid f(X) = v] = v$. For the purpose of measuring classification quality across groups, we might require the binary classifier to be calibrated over many subgroups. Thus, for all subgroups $G \in \tilde{\cG}$ and (potentially binned) predicted values $v$, the $\gamma$-multicalibration fairness criterion \citep{hebert2018multicalibration} requires that
\begin{align*}
    |\E[Y \mid f(X) = v, X \in G] - v| \leq \gamma.
\end{align*}
The formal multicalibration definition excludes groups smaller than some auditor-set threshold. Although the user may filter $\tilde{\cG}$ as they see fit, a-priori, our methods do not exclude groups by their size. However, we can set a similar threshold in the rescaled certification method (\textsf{\Cref{algo:bootstrap_ci_rescaled}}). This threshold does not prevent the auditor from querying any group for a bound on $\gamma$, but it ensures that confidence sets are narrower for groups above the threshold.

Choosing $L, \theta_P$, and $\cG$ carefully, we can apply our methods to certify and flag multicalibration over $\tilde{\cG}$. Letting $\mathcal{V}$ denote the set of unique values $f(X)$ can take, we set 
\begin{gather*}
    L \defeq Y - f(X) \qquad \theta_P \defeq 0 \\
    \cG \defeq \{G \cap \{(X, Y) \mid f(X) = v\} \mid G \in \tilde{\cG}, v \in \mathcal{V}\}.
\end{gather*}
Let $G_v$ denote each group in $\cG$. 

Certifying $\gamma$-multicalibration for $G$ is equivalent to establishing that $\max_{v} |\epsilon(G_v)| \leq \gamma$.  Using our auditing methods, we can estimate $\gamma$ by constructing simultaneously valid confidence intervals for all $\{\epsilon(G_v)\}_{G_v \in \cG}$. For a particular $G$, our bound on $\gamma$ then equals the maximum (absolute) value taken over all issued intervals for $\{\epsilon(G_v)\}_{v \in \mathcal{V}}$. For a fixed threshold $\gamma$, we could also run our Boolean certification method with $H_0(G_v) : |\epsilon(G_v)| \geq \gamma$ and certify $G$ if the null is rejected for all $\{G_v\}_{v \in \mathcal{V}}$. 

Alternatively, we might flag $G$ for violation of $\gamma$-multicalibration by testing $H_0(G_v) : |\epsilon(G_v)| > \gamma + \epsilon$ for some disparity tolerance $\epsilon$. Using the method outlined in \Cref{sec:flag_G}, we can construct p-values for each of these null hypotheses. Then, the \textsf{BH} procedure is run on all of the p-values computed. If any $H_0(G_v)$ is rejected, we flag $G$.

\paragraph{Equalized odds}
Given a collection of subsets of $\cX$ denoted by $\cG_X$, we say that a binary predictor $f$ satisfies the equalized odds criterion \citep{hardt2016equality, woodworth2017learning} if for all $G \in \cG_X$, both its true positive rates are equalized,
\begin{align*}
    \P(f(X) = 1 \mid Y = 1, X \in G) = \P(f(X) = 1 \mid Y = 1),
\end{align*}
and its false positive rates are equalized,
\begin{align*}
    \P(f(X) = 1 \mid Y = 0, X \in G) = \P(f(X) = 1 \mid Y = 0).
\end{align*}

A practitioner interested in this fairness criterion might then wish to audit the performance disparity $\epsilon(G) \defeq \max(|\epsilon(G_0)|, |\epsilon(G_1)|)$ where
\begin{align*}
    \epsilon(G_i) \defeq \P(f(X) = 1 \mid Y = i, X \in G) - \P(f(X) = 1 \mid Y = i).
\end{align*}

We can instantiate a certification audit for $\epsilon(G)$ by running our methods twice: once to bound $\epsilon(G_0)$ and a second time for $\epsilon(G_1)$. For the first audit, let 
\begin{gather*}
    L \defeq \indic{f(x) = 1} \qquad \theta_P \defeq \P(f(X) = 1 \mid Y = 0)  \qquad 
    \cG = \l \{ G \times \{0\} \mid G \in \cG_X \r \}.
\end{gather*}
For the second audit, let 
\begin{gather*}
    L \defeq \indic{f(x) = 1} \qquad \theta_P \defeq \P(f(X) = 1 \mid Y = 1) \qquad 
    \cG = \l \{ G \times \{1\} \mid G \in \cG_X \r \}.
\end{gather*}
We construct confidence sets or Boolean certificates using a nominal Type I error threshold of $\alpha/2$ for each auditing task. We remark that this union bound is practically tight since the two data sets corresponding to these audits are disjoint. Our final certificate on $\epsilon(G)$ then consists of the ``worse'' of the two auditor outputs. For example, we might upper bound $\epsilon(G)$ by the maximum (absolute) value included in the two issued confidence intervals for $\epsilon(G_0)$ and $\epsilon(G_1)$.  

For the flagging task, we test $H_0(G_i) : |\epsilon(G_i)| \leq \epsilon$. Using the method outlined in \Cref{sec:flag_G}, we can construct p-values for each of these null hypotheses. Then, the \textsf{BH} procedure can be directly run on all of the p-values computed. We flag $G$ if either $H_0(G_0)$ or $H_0(G_1)$ is rejected.

\paragraph{Individual fairness} Other fairness criteria fall into a category known as ``individual fairness'' measures. These quantify the intuition that similar inputs, $x$ and $x'$, should be treated by $f$ similarly \citep{dwork12fairness}. While this definition of fairness cannot be tested using sampled model predictions, one might audit whether some notion of individual fairness holds with high probability among protected subgroups. Though we do not elaborate on such an extension here, we remark that prior work on the bootstrap of U-statistics and processes allows the natural extension of our auditing procedures to  fairness measures defined over pairs of data points \citep{arcones1992bootstrap, huskova1993consistency}.

\section{Certification audits}
\subsection{Notation and review}
We will begin by defining some relevant notation and reviewing certain basic results about the convergence of stochastic processes.  Given $n$ i.i.d. samples, $\P_n \defeq n^{-1} \sum_{i = 1}^n \delta_{(X_i, Y_i)}$ denotes their empirical distribution.  If $\{ (X_i^*, Y_i^*) \}_{i = 1}^n$ are i.i.d. samples from $\P_n$ conditional on $\{ (X_i, Y_i) \}_{i = 1}^n$, then $\P_n^* \defeq n^{-1} \sum_{i = 1}^n \delta_{(X_i^*, Y_i^*)}$ denotes their empirical distribution.

For a function $f : \cX \times \cY \to \R^k$, $P[f]$ is shorthand for $\E_P[f(X, Y)]$, $\P_n[f]$ is shorthand for $n^{-1} \sum_{i = 1}^n f(X_i, Y_i)$, and $\P_n^*[f]$ is shorthand for $n^{-1} \sum_{i = 1}^n f(X_i^*, Y_i^*)$.  We also write $(\P_n - P)[f]$ in place of $\P_n[f] - P[f]$.  Given a class of functions $\cF$, we think of $f \mapsto \sqrt{n}(\P_n - P)[f]$ as a mapping belonging to $\ell_\infty(\cF)$. 

We will typically require and/or argue that $\cF$ is a $P$-Donsker class. This means that the empirical process indexed by $f \in \cF$ converges in distribution to a tight Gaussian limit in $\ell_{\infty}(\cF)$. Formally, $\sqrt{n}(\P_n - P)[\cdot] \cd \mathbb{G}[\cdot]$ where the limiting process $f \mapsto \mathbb{G}[f]$ is a Gaussian process that is also a tight Borel-measurable element of $\ell_\infty(\cF)$. If $\cF$ is a $P$-Donsker class, then it is also $P$-Glivenko-Cantelli \citep{van2000asymptotic}, i.e., $\sup_{f \in \cF} |(\P_n - P) [f]| \cp 0$.

Next, we recall some results relating Donsker classes to VC dimension and bootstrap consistency. When we consider function classes of indicators indexed by subpopulations $\cG$, i.e. $\cF = \{\indic{(X, Y) \in G} \mid G \in \cG\}$, the Donsker property is \emph{equivalent} to assuming that $\cG$ is VC.
\begin{lemma}[Theorem 11.4.1 in \cite{dudley1984course}] \label{lma:vc_iff}
Under suitable measurability assumptions, $\cF := \{ (x, y) \mapsto \mathbf{1} \{ (x, y) \in G \} \, : \, G \in \cG \}$ is Donsker if and only if $\text{VC}(\cG) < \infty$.
\end{lemma}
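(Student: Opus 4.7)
Since this is presented as a classical result (Theorem 11.4.1 of Dudley), my plan is to sketch the standard two-step argument that links VC dimension to empirical-process entropy bounds, rather than reproduce the full proof.

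For the forward direction (finite VC implies Donsker), I would begin by applying the Sauer--Shelah lemma to bound the shatter coefficient of $\cG$. If $V = \text{VC}(\cG) < \infty$, then for any $n$ points the number of distinct subsets cut out by elements of $\cG$ is at most $\sum_{k=0}^V \binom{n}{k} \leq (en/V)^V$. Because $\cF$ consists of $\{0,1\}$-valued indicators, this combinatorial bound translates directly into a uniform covering-number estimate: there exists a universal constant $C$ such that
\[
\sup_Q N(\varepsilon, \cF, L_2(Q)) \leq C\, V\, (16 e)^V\, \varepsilon^{-2(V-1)},
\]
where the supremum is over all probability measures $Q$ on $\cX \times \cY$. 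Since $\cF$ has the constant envelope $F \equiv 1$, Dudley's uniform entropy integral $\int_0^1 \sup_Q \sqrt{\log N(\varepsilon, \cF, L_2(Q))}\, d\varepsilon$ is finite. Combined with the measurability hypothesis in the statement (typically image admissibility, so that suprema over $\cF$ are measurable), this entropy bound implies $\cF$ is $P$-Donsker for every $P$.

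For the reverse direction (Donsker implies finite VC), I would argue by contrapositive. Suppose $\text{VC}(\cG) = \infty$, so for every $n$ there exists a set $\{(x_1,y_1),\dots,(x_n,y_n)\}$ shattered by $\cG$, meaning the map $G \mapsto (\mathbf{1}_G(x_i,y_i))_{i=1}^n$ surjects onto $\{0,1\}^n$. Any Donsker class must be totally bounded under the $L_2(P)$ pseudometric $\rho(f,g)^2 = P(f-g)^2 - (P(f-g))^2$, since the limit is a tight Gaussian element of $\ell_\infty(\cF)$. Placing $P$-mass $1/n$ on a shattered $n$-set, a standard combinatorial argument (e.g.\ a Gilbert--Varshamov-style selection) produces $2^{\Omega(n)}$ indicators in $\cF$ that are pairwise $\rho$-separated by a constant. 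This blows up the $\rho$-covering numbers faster than any polynomial as $n \to \infty$ and contradicts total boundedness, so $\cF$ cannot be Donsker.

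The main technical obstacle is the uniformity of the entropy bound over all measures $Q$, which is what allows the entropy-integral criterion to be applied once and for all; deriving it requires the sharp form of Sauer--Shelah combined with a Haussler-type covering argument. The measurability conditions alluded to in the statement are a secondary subtlety, needed to make the suprema defining the empirical process well-behaved. Because the result is classical and used only as a black box in our downstream bootstrap-consistency arguments, in the paper I would simply cite \cite{dudley1984course} rather than reproduce these details.
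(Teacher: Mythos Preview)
The paper does not prove this lemma at all; it is stated purely as a citation to Dudley and used as a black box, which is precisely what you conclude you would do. Your sketch of the forward direction via Sauer--Shelah and the uniform entropy integral is the standard route and is correct; the reverse-direction sketch is slightly loose (placing mass $1/n$ on a shattered $n$-set gives a different $P$ for each $n$, whereas one needs a single $P$ for which the class fails to be Donsker, typically built by concatenating shattered sets of growing size with geometrically decaying mass), but since neither you nor the paper actually relies on these details, this is immaterial.
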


Last, we recall that the bootstrap approximation, $\sqrt{n} (\P_n^* - \P_n) [h]$, is valid if $\cH$ is $P$-Donsker. Let $\mathbb{G}_n^*$ be shorthand for the bootstrap empirical process $\sqrt{n} (\P_n^* - \P_n)$ and $\mathbb{G}$ denote the limiting empirical process. 
\begin{lemma}[Theorem 23.7 in \cite{van2000asymptotic}] \label{lma:donsker_bootstrap}
For every Donsker class $\cH$ of measurable functions with finite envelope function $F$, $\sup_{g \in \text{BL}_1(\ell^\infty(\cH))} \l| \E^*[g(\mathbb{G}^*_n)] - \E[g (\mathbb{G})] \r| \cp 0$ where $BL_1(\ell_{\infty}(\cH))$ is the set of bounded 1-Lipschitz functions taking $\ell_{\infty}(\cH)$ into $\R$.
\end{lemma}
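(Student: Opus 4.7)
The plan is to follow the Gine--Zinn strategy. First, encode the bootstrap via multinomial weights: with $(M_{n1},\ldots,M_{nn})\sim\mathrm{Mult}(n;1/n,\ldots,1/n)$ drawn independently of the data, one obtains the multiplier representation
\begin{equation*}
\mathbb{G}_n^*[h] \;=\; \frac{1}{\sqrt{n}}\sum_{i=1}^n (M_{ni}-1)\,h(X_i,Y_i).
\end{equation*}
Convergence in the bounded-Lipschitz metric stated in the lemma is equivalent to weak convergence of the conditional law of $\mathbb{G}_n^*$ to that of $\mathbb{G}$ in outer probability. By a Portmanteau-type reduction (cf.\ Lemma 1.9.3 of van der Vaart and Wellner), this weak convergence follows from two ingredients: (a) convergence of finite-dimensional conditional distributions in probability, and (b) conditional asymptotic equicontinuity of $h\mapsto\mathbb{G}_n^*[h]$ on $\cH$ in probability.

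For (a), fix $h_1,\ldots,h_k\in\cH$. The conditional covariance of $(\mathbb{G}_n^*[h_j])_{j\le k}$ equals the sample covariance $\P_n[h_j h_\ell]-\P_n[h_j]\P_n[h_\ell]$, which converges in probability to $\mathrm{Cov}_P(h_j,h_\ell)$ by the $P$-Glivenko--Cantelli property inherited from Donsker. Verifying a conditional Lindeberg condition---using the finite envelope $F$, the second moment $\E(M_{ni}-1)^2=1-1/n$, and uniform integrability of $F^2$---then yields a conditional Gaussian limit with the correct covariance of $\mathbb{G}$. As a technical convenience, one may first Poissonize (replace the multinomial by i.i.d.\ $\mathrm{Poisson}(1)-1$ weights); the constraint $\sum_i M_{ni}=n$ contributes only a vanishing one-dimensional correction that is absorbed by a final symmetrization.

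The main obstacle is (b), since the \emph{unconditional} Donsker hypothesis must be upgraded to a \emph{conditional} tightness statement in probability. The workhorse is the multiplier inequality (Lemma 2.9.1 of van der Vaart and Wellner): writing $\cH_\delta\defeq\{h-h':h,h'\in\cH,\,P(h-h')^2<\delta^2\}$, for i.i.d.\ mean-zero multipliers $\xi_i$ with $\|\xi_1\|_{2,1}<\infty$ there is a universal constant $C$ with
\begin{equation*}
\E^*\!\sup_{h\in\cH_\delta}\!\left|\tfrac{1}{\sqrt{n}}\sum_{i=1}^n \xi_i \,h(X_i,Y_i)\right| \;\le\; C\,\|\xi_1\|_{2,1}\cdot \E^*\!\sup_{h\in\cH_\delta}\!\left|\tfrac{1}{\sqrt{n}}\sum_{i=1}^n \varepsilon_i \,h(X_i,Y_i)\right|,
\end{equation*}
where $\varepsilon_i$ are independent Rademacher signs. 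Because $\cH$ is $P$-Donsker, the symmetrized empirical modulus on the right tends to zero as $n\to\infty$ and then $\delta\downarrow 0$; Markov's inequality then transfers this bound into the required conditional statement in probability. The multinomial weights are subsumed by first Poissonizing and then invoking the inequality, and the finite envelope is needed to verify $\|\xi_1 F(X_1,Y_1)\|_{2,1}<\infty$ and to discharge the Lindeberg condition in (a).

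Combining (a) and (b) via the Portmanteau equivalence yields
\begin{equation*}
\sup_{g\in\mathrm{BL}_1(\ell^\infty(\cH))}\bigl|\E^*[g(\mathbb{G}_n^*)]-\E[g (\mathbb{G})]\bigr|\cp 0,
\end{equation*}
which is the claim. The measurability subtleties caused by taking suprema over an uncountable class $\cH$ are handled by the standard outer-probability/measurable-cover formalism (Section 1.2 of van der Vaart and Wellner), which is compatible with the finite-envelope assumption.
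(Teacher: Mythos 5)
The paper does not actually prove this lemma: it is quoted verbatim as Theorem 23.7 of van der Vaart's \emph{Asymptotic Statistics} (the Gin\'e--Zinn bootstrap central limit theorem) and used as a black box. Your sketch is therefore not competing with an in-paper argument but reconstructing the cited textbook proof, and it does so faithfully: the multinomial multiplier representation, the reduction of bounded-Lipschitz convergence to conditional fidi convergence plus conditional asymptotic equicontinuity, the Glivenko--Cantelli argument for the sample covariances, and the multiplier inequality combined with Poissonization and symmetrization are exactly the ingredients of the standard proof (van der Vaart and Wellner, Sections 2.9 and 3.6).

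Two points are elided in a way worth flagging, though neither is a fatal gap for a sketch. First, the multiplier inequality you quote is a simplification: the right-hand side of Lemma 2.9.1 of van der Vaart and Wellner controls the multiplier process by a \emph{maximum over partial-sample} Rademacher moduli $\max_{n_0\le k\le n}$, not by the single full-sample symmetrized modulus; the passage from the Donsker property to the vanishing of that maximum (uniformly in $n$, then $\delta\downarrow 0$) requires the Hoffmann-J{\o}rgensen/L\'evy-type maximal inequalities that underlie the unconditional multiplier CLT. Second, the multinomial-to-Poisson comparison is not a ``one-dimensional correction'': the standard argument couples the multinomial bootstrap with a Poissonized resample of random size $N_n\sim\mathrm{Poisson}(n)$ and bounds the discrepancy, which involves $|N_n-n|=O_P(\sqrt{n})$ extra or missing draws, using the envelope $F$ and a second application of the symmetrization/multiplier machinery. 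Your instinct that the finite envelope enters precisely to make $\|\xi_1\|_{2,1}<\infty$ usable and to discharge the Lindeberg condition is correct. Given that the paper's ``proof'' is a citation, your reconstruction is the right one and is sound at the level of detail offered.
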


\subsection{Proofs of certification theorems}
\label{sec:power_certify}
\paragraph{Estimating $\theta_P$.} In the main text, we assumed that $\hat{\theta}$ is a differentiable function of sample means. Here we will generalize this assumption to accommodate other estimators that can be bootstrapped. 

We assume that the bootstrap and sampling distributions of $\hat{\theta}$ admit asymptotic linear expansions:
\begin{align*}
    \sqrt{n}(\hat{\theta}(\cD^*) - \hat{\theta}) &= \frac{1}{\sqrt{n}} \sum_{i = 1}^n \psi(X^*_i, Y^*_i) + o_{P_n}(1) \\
    \sqrt{n} (\hat{\theta} - \theta_P) &= \frac{1}{\sqrt{n}} \sum_{i = 1}^n \psi(X_i, Y_i) + o_P(1).
\end{align*}
Here, $\psi$ is an influence function with mean zero and finite variance.  It is easy to verify that this condition is satisfied by the $\hat{\theta}$ given in the main text. Our generalization enables auditing even if $\hat{\theta}$ is more complicated, e.g., it is the solution to some maximum likelihood estimation problem. 


Next, we prove the main theorems of \Cref{sec:certify_G}. As explained in the main text, the primary technical difficulty lies in proving that $t^*$ consistently estimates the $(1 - \alpha)$-quantile of a particular stochastic process. There are two crucial preliminary results required to establish this result. First, we must show that bootstrap consistency implies consistency of the quantile estimate. Second, we must show that our proposed rescaling estimator $\hat{s}(G)$ is uniformly consistent for some estimand.

\paragraph{Quantile consistency}
First, to provide a unique definition of the quantile, we define the $\alpha$-quantile of a random variable $X$ as
\begin{align*}
    \text{Quantile}(\alpha; X) \defeq \inf_{x} \{x \mid \alpha \leq \P(X \leq x)\}.
\end{align*}

The proof of the main theorem requires continuity and strict increase for the supremum of the limiting process at its $(1 - \alpha)$-quantile. The following result establishes mild conditions under which this holds.
\begin{lemma} \label{cor:quantile_gaussian_2}
Let $\mathbb{Z}(G)$ denote some Gaussian process that is the limit of an empirical process indexed by some countable class $\cG$. Then, if $\textup{Var}(\mathbb{Z}(G)) > 0$ for some $G \in \cG$, the distribution function of $\sup_{G \in \cG} \mathbb{Z}(G)$ is continuous and strictly increasing on $\R_+$ and the distribution function of $\inf_{G \in \cG} \mathbb{Z}(G)$ is continuous and strictly increasing on $\R_-$.
\end{lemma}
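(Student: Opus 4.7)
My plan is to reduce the claim to standard facts about suprema of separable Gaussian processes. By the symmetry of centered Gaussian laws, $\mathbb{Z}$ and $-\mathbb{Z}$ have identical distributions as processes, so $\inf_{G \in \cG} \mathbb{Z}(G) \stackrel{d}{=} -\sup_{G \in \cG} \mathbb{Z}(G)$; under the reflection $x \mapsto -x$, any claim about the distribution function of $\sup$ on $\R_+$ transfers directly to the corresponding claim about $\inf$ on $\R_-$. It therefore suffices to prove that $F_M$ is continuous and strictly increasing on $\R_+$, where $M := \sup_{G \in \cG} \mathbb{Z}(G)$.

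For continuity, I would invoke Tsirelson's theorem on convex functionals of Gaussian measures. The map $\phi : f \mapsto \sup_{G \in \cG} f(G)$ is convex and $1$-Lipschitz on $\ell_\infty(\cG)$. Since $\cG$ is countable, $\mathbb{Z}$ can be realized as a centered Gaussian element of the separable space $\ell_\infty(\cG)$; since $\mathbb{Z}$ is the limit of an empirical process (and hence has tight, bounded sample paths), $M = \phi(\mathbb{Z})$ is almost surely finite; and since $\textup{Var}(\mathbb{Z}(G_0)) > 0$ with $M \geq \mathbb{Z}(G_0)$, the random variable $M$ is not a.s.~constant. Tsirelson's theorem then gives that the law of $M$ is absolutely continuous with respect to Lebesgue measure, so $F_M$ is continuous on all of $\R$.

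For strict increase on $\R_+$, the plan is to establish $[0, \infty) \subseteq \mathrm{supp}(M)$; together with continuity, this immediately yields $F_M(a) < F_M(b)$ whenever $0 \leq a < b$. By the standard fact that the support of a pushforward under a continuous map is the closure of the image of the support, $\mathrm{supp}(M) = \overline{\phi(\mathrm{supp}(\mathbb{Z}))}$. The support of any centered Gaussian measure on a separable Banach space is a closed linear subspace (the closure of the Cameron--Martin space), hence connected and containing the origin, so $0 = \phi(0) \in \mathrm{supp}(M)$ and $\mathrm{supp}(M)$ is itself a closed interval. Moreover, since $\mathbb{Z}(G_0) \sim \mathcal{N}(0, \textup{Var}(\mathbb{Z}(G_0)))$ is nondegenerate and $M \geq \mathbb{Z}(G_0)$, we have $\P(M > x) \geq \P(\mathbb{Z}(G_0) > x) > 0$ for every $x \in \R$, so this interval is unbounded above. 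An interval containing $0$ and unbounded above must include $[0, \infty)$.

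The main obstacle lies in correctly setting up $\mathbb{Z}$ as a Gaussian element of a separable Banach space so that Tsirelson's theorem applies, and verifying that $\phi = \sup$ is a measurable convex functional in this setting; the countability of $\cG$ makes this essentially a bookkeeping exercise. Once those hypotheses are in place, the remaining support-based argument is topological and causes no further difficulty.
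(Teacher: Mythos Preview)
Your approach is correct and takes a genuinely different route from the paper. The paper reduces both claims to a result of Gaenssler, Moln\'ar, and Rost (2007) about $\sup_{G} |\mathbb{Z}(G)|$: that reference gives continuity and strict increase of the distribution function of the absolute supremum, and the paper then deduces the same for $\sup_G \mathbb{Z}(G)$ via the identity $\{\sup_G |\mathbb{Z}(G)| = x\} = \{\sup_G \mathbb{Z}(G) = x\} \cup \{\inf_G \mathbb{Z}(G) = -x\}$ and a short inclusion--exclusion contradiction. You instead appeal directly to Tsirelson's absolute-continuity theorem for convex functionals of Gaussian measures, and handle strict increase by a topological support argument (the support of a centered Gaussian law is a closed subspace, hence connected, so $\mathrm{supp}(M)$ is a closed interval containing $0$ and unbounded above). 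Your route is more self-contained in that it avoids the somewhat specialized external citation, while the paper's proof is shorter once that citation is granted.

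One technical slip to fix: you assert that ``$\mathbb{Z}$ can be realized as a centered Gaussian element of the separable space $\ell_\infty(\cG)$,'' but $\ell_\infty(\cG)$ is \emph{not} separable when $\cG$ is countably infinite. What you actually need, and have, is that $\mathbb{Z}$ is a \emph{tight} Borel-measurable element of $\ell_\infty(\cG)$ (this is part of being a Donsker limit); tightness forces $\mathbb{Z}$ to concentrate on a $\sigma$-compact, hence separable, subset, and you can then work in the separable closed linear span of the support. That is the setting in which Tsirelson's theorem and the Cameron--Martin description of the support apply. With this adjustment the remainder of your argument goes through unchanged.
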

\begin{proof}
    Because $\mathbb{Z}(G)$ is a centered process, we remark that $\sup_{G} \mathbb{Z}(G) \stackrel{d}{=} -\inf_{G} \mathbb{Z}(G)$. First, we prove that for all $x > 0$, the distribution function of $\sup_{G \in \cG} \mathbb{Z}(G)$ is continuous at $x$. Under the stated assumptions, \citet[Corollary~1.3]{gaenssler2007continuity} prove that $\sup_{G} |\mathbb{Z}(G)|$ has a continuous distribution function at $x$, i.e., $\P(\sup_{G} |\mathbb{Z}(G)| = x) = 0$. Then,
    \begin{align*}
        0 = \P(\sup_{G} |\mathbb{Z}(G)| = x) = \P(\{\sup_{G} \mathbb{Z}(G) = x\} \cup \{\inf_{G} \mathbb{Z}(G) = -x\}),
    \end{align*}
    which implies $\P(\sup_{G} \mathbb{Z}(G) = x) = \P(\inf_{G} \mathbb{Z}(G) = -x) = 0$. This is equivalent to the desired continuity statement.

    Next, to prove strict increase at $x$, we argue by contradiction. If $\sup_{G} \mathbb{Z}(G)$ did \emph{not} have a strictly increasing distribution function on $\R_+$, then there exists $0 < x_1 < x_2$ such that 
    \begin{align*}
        \P(\sup_{G} \mathbb{Z}(G) \leq x_1) = \P(\sup_{G} \mathbb{Z}(G) \leq x_2) 
        \iff 
        \P(\inf_{G} \mathbb{Z}(G) \geq -x_1) = \P(\inf_{G} \mathbb{Z}(G) \geq -x_2).
    \end{align*}
    \citet[Corollary~1.3]{gaenssler2007continuity} also prove that $\sup_{G} |\mathbb{Z}(G)|$ satisfies $\P(\sup_{G} |\mathbb{Z}(G)| \leq x_1) < \P(\sup_{G} |\mathbb{Z}(G)| \leq x_2)$.  Since 
    \begin{multline*}\P(\sup_{G} |\mathbb{Z}(G)| \leq x) = \P(\sup_{G} \mathbb{Z}(G) \leq x) + \P(\inf_G \mathbb{Z}(G) \geq -x) \\
    - \P(\{\sup_{G} \mathbb{Z}(G) \leq x\} \cap \{\inf_{G} \mathbb{Z}(G) \geq -x\}),
    \end{multline*}
    we conclude that
    \begin{align*}
        \P(\{\sup_{G} \mathbb{Z}(G) \leq x_1\} \cap \{\inf_{G} \mathbb{Z}(G) \geq -x_1\}) > \P(\{\sup_{G} \mathbb{Z}(G) \leq x_2\} \cap \{\inf_{G} \mathbb{Z}(G) \geq -x_2\}).
    \end{align*}
    This is a contradiction since the event on the LHS is a subset of the event on the RHS. We conclude that both $\sup_G \mathbb{Z}(G)$ and $\inf_G \mathbb{Z}(G)$ have strictly increasing distribution functions on the positive and negative reals, respectively.
\end{proof}

We make two comments on \Cref{cor:quantile_gaussian_2}. First, the cited result of \citet{gaenssler2007continuity} is stated for empirical processes indexed by a VC class, but their argument applies to any limiting Gaussian process indexed by a $P$-Donsker class. Second, the countable assumption is not crucial, and is only included to avoid verifying certain technical measurability conditions. In this article, we expect the ``pointwise measurability'' (approximately equivalent to well-approximation by a dense countable subset) condition given in \citet{vaart1996weak} to hold, but this condition must be checked for each function class. As a consequence, for simplicity, we will also assume throughout that the function classes we work with are countable. 

\paragraph{Uniform consistency of $\hat{s}$}
Recall that we define 
\begin{align*}
    \hat{s}(G) \defeq \max(\P_n(G), p_*)^{3/2} \cdot \hat{\sigma}(G, w_0)
\end{align*}
for
\begin{align*}
   \hat{\sigma}(G, w_0) \defeq \l(\frac{\P_n(G)}{\P_n(G) + w_0}\r) \cdot \hat{\sigma}_G + \l(\frac{w_0}{\P_n(G) + w_0}\r) \cdot  \sqrt{\widehat{\text{Var}}(L)}
\end{align*}
and
\begin{align*}
    \hat{\sigma}^2_G \defeq \widehat{\text{Var}}(L \mid G) + \P_n(G) \l(\widehat{\text{Var}}(\psi) - 2 \cdot \widehat{\text{Cov}}(L, \psi \mid G) \r).
\end{align*}

\begin{lemma} \label{lma:rescaling_consistency}
    Assume that $0 < \textup{Var}(L) < \infty$ and $\text{VC}(\cG) < \infty$. If $p_*, w_0 > 0$, then $\sup_{G \in \cG} | 1/\hat{s}(G) - 1/s(G)| \cp 0$, where 
    \begin{align*}
        s(G) = \max(\P(G), p_*)^{3/2} \cdot  \l[\l(\frac{\P(G)}{\P(G) + w_0}\r) \cdot \sigma_G + \l(\frac{w_0}{\P(G) + w_0}\r) \cdot \sqrt{\textup{Var}(L)} \r].
    \end{align*}
\end{lemma}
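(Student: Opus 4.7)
The plan is to exploit the algebraic identity
$1/\hat{s}(G) - 1/s(G) = (s(G) - \hat{s}(G))/(\hat{s}(G)\, s(G))$.
First, both $\hat{s}(G)$ and $s(G)$ are uniformly bounded away from $0$ with probability tending to one: since $w_0, p_* > 0$ and $\sqrt{\widehat{\text{Var}}(L)} \cp \sqrt{\text{Var}(L)} > 0$ by the law of large numbers, we have $\hat{s}(G) \geq p_*^{3/2} \cdot \tfrac{w_0}{1+w_0} \sqrt{\widehat{\text{Var}}(L)}$ uniformly in $G$, and the analogous lower bound holds for $s(G)$. Uniform boundedness from above of $s(G)$ follows from Cauchy--Schwarz applied to the identity
\begin{align*}
\P(G)^2 \sigma_G^2 &= \P(G)\E[L^2 \mathbf{1}_G] - \E[L\mathbf{1}_G]^2 + \P(G)^3 \text{Var}(\psi) \\
&\qquad - 2\P(G)\bigl[\P(G)\E[L\psi\mathbf{1}_G] - \E[L\mathbf{1}_G]\E[\psi\mathbf{1}_G]\bigr],
\end{align*}
which yields $\P(G)\sigma_G = O(\sqrt{\P(G)})$ and hence $\sigma(G, w_0) = O(1/w_0)$. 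These bounds reduce the problem to showing $\sup_G |\hat{s}(G) - s(G)| \cp 0$.

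Next I would decompose using the product structure of $\hat{s}(G)$,
\begin{align*}
\hat{s}(G) - s(G) &= \bigl[\max(\P_n(G), p_*)^{3/2} - \max(\P(G), p_*)^{3/2}\bigr] \hat{\sigma}(G, w_0) \\
&\qquad + \max(\P(G), p_*)^{3/2} \bigl[\hat{\sigma}(G, w_0) - \sigma(G, w_0)\bigr].
\end{align*}
Since $x \mapsto \max(x, p_*)^{3/2}$ is Lipschitz on $[0,1]$, the first bracket vanishes uniformly by the VC Glivenko--Cantelli theorem (cf.\ \Cref{lma:vc_iff}) combined with the uniform boundedness of $\hat{\sigma}(G, w_0)$ (inherited from the bound on $\sigma(G, w_0)$ once the second term is controlled). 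The second term reduces, via the $1/w_0$-Lipschitz map $x \mapsto x/(x + w_0)$, to proving $\sup_G |\P_n(G) \hat{\sigma}_G - \P(G)\sigma_G| \cp 0$, together with standard consistency of the scalar estimators $\widehat{\text{Var}}(L)$ and $\widehat{\text{Var}}(\psi)$.

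The crux is this last uniform convergence, and I would work with the square. Expanding $\widehat{\text{Var}}(L \mid G) = \hat{\E}_n[L^2 \mathbf{1}_G]/\P_n(G) - \hat{\E}_n[L \mathbf{1}_G]^2/\P_n(G)^2$ and treating the conditional covariance similarly gives
\begin{align*}
[\P_n(G)\hat{\sigma}_G]^2 &= \P_n(G)\hat{\E}_n[L^2 \mathbf{1}_G] - \hat{\E}_n[L\mathbf{1}_G]^2 + \P_n(G)^3 \widehat{\text{Var}}(\psi) \\
&\qquad - 2\P_n(G)\bigl[\P_n(G)\hat{\E}_n[L\psi\mathbf{1}_G] - \hat{\E}_n[L\mathbf{1}_G]\hat{\E}_n[\psi\mathbf{1}_G]\bigr].
\end{align*}
Every building block $\hat{\E}_n[f \mathbf{1}_G]$ with $f \in \{L, L^2, \psi, L\psi\}$ indexes a function class $\{f \cdot \mathbf{1}_G : G \in \cG\}$. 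Since $\cG$ is VC, each such class has bounded uniform entropy, and each $f$ is $P$-integrable under the assumed moment conditions ($\text{Var}(L), \text{Var}(\psi) < \infty$, with Cauchy--Schwarz handling $L\psi$). Each class is therefore $P$-Glivenko--Cantelli, so $\sup_G |(\P_n - P)[f\mathbf{1}_G]| \cp 0$. Combining these uniform convergences with the uniform boundedness of $(\P(G)\sigma_G)^2$ and applying continuous mapping to the square root delivers $\sup_G |\P_n(G)\hat{\sigma}_G - \P(G)\sigma_G| \cp 0$. The main obstacle is precisely this bookkeeping step: $\hat{\sigma}_G$ on its own is \emph{not} uniformly consistent because of the unstable division by $\P_n(G)$, and the algebraic rearrangement is what absorbs that instability into the $\P_n(G)$ prefactor so that each remaining piece becomes a well-behaved Glivenko--Cantelli average.
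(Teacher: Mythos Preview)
Your proposal is correct and follows essentially the same route as the paper. Both arguments hinge on the identical key observation: $\hat{\sigma}_G$ is not uniformly controllable, but $\P_n(G)\hat{\sigma}_G$ is, because squaring and expanding turns every term into an empirical average over a class $\{f\mathbf{1}_G : G \in \cG\}$ that is $P$-Glivenko--Cantelli whenever $f$ is integrable (the paper cites Gin\'e--Zinn for this). The only organizational difference is that you reduce via the identity $1/\hat{s} - 1/s = (s - \hat{s})/(\hat{s}\,s)$ and then invoke Lipschitz continuity of $x \mapsto \max(x,p_*)^{3/2}$ and $x \mapsto x/(x+w_0)$, whereas the paper combines $1/\hat{s}^2 - 1/s^2$ into a single fraction and performs explicit add-and-subtract manipulations on the numerator; these are equivalent bookkeeping choices.
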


\begin{proof}
    To simplify notation, we will replace $\indic{(X, Y) \in G}$ with $\indG$ throughout. 
    
    We first compute the asymptotic variance of $\sqrt{|G|} (\hat{\epsilon}(G) - \epsilon(G))$ by linearizing the random quantity:
    \begin{align*}
        \sqrt{|G|} \l(\hat{\epsilon}(G) - \epsilon(G)\r) &= \sqrt{|G|} \l[\l(\frac{1}{|G|} \sum_{(X_i, Y_i) \in G} L_i - \hat{\theta}\r) - (\E[L \mid G] - \theta_P) \r] \\
        &= \sqrt{|G|} \l[\frac{1}{|G|} \sum_{i = 1}^n (L_i - \E[L \mid G]) \cdot \indG - \frac{1}{n} \sum_{i = 1}^n \psi_i \r] \\
        &= \sqrt{n} \l[\frac{1}{n} \sum_{i = 1}^n \frac{L_i - \E[L \mid G]}{\sqrt{\P(G)}} \cdot \indG - \sqrt{\P(G)} \cdot \psi_i \r] + o_P(1).
    \end{align*}
    So, we conclude that
    \begin{align*}
        \sigma^2_G &\defeq \text{Var} \l(\frac{L - \E[L \mid G]}{\sqrt{\P(G)}} \cdot \indG - \sqrt{\P(G)} \cdot \psi \r) \\
        &= \text{Var} \l(\frac{L - \E[L \mid G]}{\sqrt{\P(G)}} \cdot \indG \r) + \text{Var} \l(\sqrt{\P(G)} \cdot \psi \r) -2 \cdot \text{Cov}\l(\frac{L - \E[L \mid G]}{\sqrt{\P(G)}} \cdot \indG, \sqrt{\P(G)} \cdot \psi  \r) \\
        &= \text{Var}(L \mid G) + \P(G) \cdot \left(\text{Var}(\psi) - 2 \cdot \text{Cov}(L, \psi \mid G) \right).
    \end{align*}

    Next, after some rearrangement, observe that $1/\hat{s}^2(G) - 1/s^2(G)$ equals 
    \begin{multline*}
        \frac{\P_n(G) + w_0}{\max(\P_n(G), p_*)^{3} \cdot \l(\P_n(G) \cdot \hat{\sigma}_G + w_0 \cdot \sqrt{\widehat{\text{Var}}(L)}\r)} \\
        - \frac{\P(G) + w_0}{\max(\P(G), p_*)^{3/2} \cdot \l(\P(G) \cdot \sigma_G + w_0 \cdot \sqrt{\text{Var}(L)}\r)}.
    \end{multline*}
    Combining the two fractions yields a numerator of
    \begin{multline*}
        (\P_n(G) + w_0) \l(\max(\P(G), p_*)^{3/2} \cdot \l(\P(G) \cdot \sigma _G + w_0 \cdot \sqrt{\text{Var}(L)}\r) \r) \\
        - (\P(G) + w_0) \l(\max(\P_n(G), p_*)^{3/2} \cdot \l(\P_n(G) \cdot \hat{\sigma}_G + w_0 \cdot \sqrt{\widehat{\text{Var}}(L)} \r) \r)
    \end{multline*}
    and a denominator of
    \begin{multline*}
        \max(\P_n(G), p_*)^{3/2} \cdot \max(\P(G), p_*)^{3/2} \cdot \l(\P_n(G) \cdot \hat{\sigma}_G + w_0 \cdot \sqrt{\widehat{\text{Var}}(L)}\r)  \\ 
        \cdot \l(\P(G) \cdot \sigma_G + w_0 \cdot \sqrt{\text{Var}(L)}\r).
    \end{multline*}
    To prove uniform consistency, it is sufficient to prove that
    \begin{multline*}
        \lim_{n \to \infty} \inf_{G \in \cG} \max(\P_n(G), p_*)^{3/2} \cdot \max(\P(G), p_*)^{3/2} \cdot \l(\P_n(G) \cdot \hat{\sigma}_G + w_0 \cdot \sqrt{\widehat{\text{Var}}(L)}\r) \\\cdot \l(\P(G) \cdot \sigma_G + w_0 \cdot \sqrt{\text{Var}(L)}\r) > 0
    \end{multline*}
    and
    \begin{multline*}
        \lim_{n \to \infty} \sup_{G \in \cG} \l | (\P_n(G) + w_0) \l(\max(\P(G), p_*)^{3/2} \cdot \l(\P(G) \cdot \sigma_G + w_0 \cdot \sqrt{\text{Var}(L)}\r) \r) \r. \\
        \l . - (\P(G) + w_0) \l(\max(\P_n(G), p_*)^{3/2} \cdot \l(\P_n(G) \cdot \hat{\sigma}_G + w_0 \cdot \sqrt{\widehat{\text{Var}}(L)} \r) \r) \r| = 0.
    \end{multline*}

    For the first of these tasks, observe that the denominator is lower bounded by
    \begin{align*}
        p_*^3 \cdot \l(w_0^2 \cdot \sqrt{\widehat{\text{Var}}(L)} \cdot \sqrt{\text{Var}(L)}\r) \to p_*^3 \cdot w_0^2 \cdot \text{Var}(L) > 0.
    \end{align*}
    The numerator requires a more careful analysis. We distribute the $\P_n(G) + w_0$ factor, apply the triangle inequality, and analyze the first term in the numerator:
    \begin{multline*}
        \l | \P_n(G) \cdot \max(\P(G), p_*)^{3/2} \cdot \l(\P(G) \cdot \sigma _G + w_0 \cdot \sqrt{\text{Var}(L)}\r) \r .\\
        -  \l. \P(G) \cdot \max(\P_n(G), p_*)^{3/2} \cdot \l(\P_n(G) \cdot \hat{\sigma}_G + w_0 \cdot \sqrt{\widehat{\text{Var}}(L)} \r) \r|.
    \end{multline*}
    Adding and subtracting $\P(G) \cdot \max(\P(G), p_*)^{3/2} \cdot (\P(G) \cdot \sigma_G + w_0 \cdot \sqrt{\text{Var}(L)} )$ and applying a triangle inequality, we obtain:
    \begin{multline*}
        \l |\P_n(G) - \P(G)\r| \cdot \max(\P(G), p_*)^{3/2} \cdot (\P(G) \cdot \sigma_G + w_0 \cdot \sqrt{\text{Var}(L)} ) \\
        + \l | \max(\P(G), p_*)^{3/2} - \max(\P_n(G), p_*)^{3/2} \r| \cdot \P(G) \cdot (\P(G) \cdot \sigma_G + w_0 \cdot \sqrt{\text{Var}(L)} )  \\
         + \P(G) \cdot \max(\P_n(G), p_*)^{3/2} \cdot \l|\P(G) \cdot \sigma_G - \P_n(G) \cdot \hat{\sigma}_G  \r| \\
         + \P(G) \cdot \max(\P_n(G), p_*)^{3/2} \l |w_0 \cdot (\sqrt{\text{Var}(L)} - \sqrt{\widehat{\text{Var}}(L)} )  \r|.
    \end{multline*}
    Since the sample variance of $L$ does not depend on $G$ and is consistent under the stated assumptions, the uniform convergence of the fourth term to $0$ is easy to see. To prove uniform convergence of the first and second terms, we must show that the $n$-independent term is bounded, i.e.,
    \begin{align*}
        (\P(G) \cdot \sigma_G + w_0 \cdot \sqrt{\text{Var}(L)}) &\leq \P(G)^{1/2} \sqrt{\E[(L - \E[L \mid G])^2 \cdot \indG]} + w_0 \cdot \sqrt{\text{Var}(L)}) \\
        &\leq \P(G)^{1/2} \sqrt{\E[(L - \E[L])^2 \cdot \indG]} + w_0 \cdot \sqrt{\text{Var}(L)}) \\
        &\leq \P(G)^{1/2} \sqrt{\E[(L - \E[L])^2]} + w_0 \cdot \sqrt{\text{Var}(L)}) \\
        &\leq (1 + w_0) \cdot \sqrt{\text{Var}(L)} = C < \infty.
    \end{align*}
    Thus, since $\text{VC}(\cG) < \infty$, $(\P_n - P) [\indG]$ is $P$-Glivenko-Cantelli, and 
    \begin{multline*}
        \sup_{G \in \cG} \l |(\P_n(G) - \P(G)) \cdot \max(\P(G), p_*)^{3/2} \cdot (\P(G) \cdot \sigma_G + w_0 \cdot \sqrt{\text{Var}(L)} ) \r| \\ \leq C \cdot \sup_{G \in \cG} \l |\P_n(G) - \P(G) \r| \cp 0.
    \end{multline*}   
    Since a $P$-Glivenko Cantelli class is preserved under truncation (Example~2.10.11 in \cite{vaart1996weak}), we then also conclude that
    \begin{multline*}
        \sup_{G \in \cG} |\max(\P(G), p_*)^{3/2} - \max(\P_n(G), p_*)^{3/2}| \cdot \P(G) \cdot \l(\P(G) \cdot \sigma_G + w_0 \cdot \sqrt{\text{Var}(L)} \r)\\
        \leq C \cdot \sup_{G \in \cG} |\max(\P(G), p_*)^{3/2} - \max(\P_n(G), p_*)^{3/2}| \cp 0.
    \end{multline*}
    Last, we show the third term converges to $0$. The $n$-independent part is bounded by $1$, so we can ignore that. So, we need to show that $\sup_{G \in \cG} |\P(G) \cdot \sigma_G - \P_n(G) \cdot \hat{\sigma}_G| \cp 0$. To avoid writing the square root, we will square the two terms for the time being. Then,
    \begin{align*}
        \P(G)^2 \cdot \sigma^2_G &= \P(G)^2 \l[\text{Var}(L \mid G) + \P(G) \cdot (\text{Var}(\psi) - 2\cdot \text{Cov}(L, \psi \mid G))\r] \\
        &= \P(G)^2 (\E[L^2 \mid G] - \E[L \mid G]^2) \\
        &\quad + \P(G)^3 \cdot (\text{Var}(\psi) - 2 \cdot (\E[L \cdot \psi \mid G]  - \E[L \mid G] \E[\psi \mid G])) \\
        &= \P(G) \E[L^2 \cdot \indG] - \E[L \cdot \indG]^2 + \P(G)^3 \cdot \text{Var}(\psi) \\
        &\quad - 2 \cdot \P(G)^2 \cdot \E[(L \cdot \psi) \indG] \\
        &\quad - 2 \cdot \P(G) \cdot \E[L \cdot \indG]\E[\psi \cdot \indG]
    \end{align*}
    We can obtain an analogous expansion of $\P_n(G)^2 \hat{\sigma}^2_G$. Then, we need to show that each matching term converges uniformly. We will only explicitly work out the argument for one pair of terms, but the argument for the remainder should be clear. First, observe that $\cF_g = \{g \cdot \indG : G \in \cG\}$ is a $P$-Glivenko-Cantelli class so long as $P[|g|] < \infty$ (Corollary~3 in \cite{gine1984some}). Then,
    \begin{multline*}
        \sup_{G \in \cG} \l |\P_n(G)^2 \cdot \P_n[(L \cdot \psi) \cdot \indG] - \P(G)^2 \cdot P[(L \cdot \psi) \cdot \indic{X, Y) \in G}] \r| \\
        \leq \sup_{G \in \cG} \l |\P_n(G)^2 \cdot (\P_n[(L \cdot \psi) \cdot \indG] -P[(L \cdot \psi) \cdot \indG] )\r|  \\
        + \sup_{G \in \cG} \l|(\P_n(G)^2 - \P(G)^2) \cdot  P[(L \cdot \psi) \cdot \indG] \r|
    \end{multline*}
    We can further upper bound the RHS by
    \begin{align*}
        \underbrace{\sup_{G \in \cG} \l| (\P_n - P)[(L \cdot \psi) \cdot \indG] \r|}_{o_P(1)} \, + \, 2 \cdot \underbrace{P[|L \cdot \psi|]}_{\text{$\leq P[L^2]P[\psi^2] < \infty$}} \cdot \underbrace{\sup_{G \in \cG} |\P_n(G) - \P(G)|}_{o_P(1)}.
    \end{align*}
    A similar argument can be made for each of the other terms in this expression. 

    The above derivation was only for the term obtained by multiplying $\P_n(G)$. The proof of uniform convergence to $0$ is essentially identical for the other term, so we will not repeat the derivation. Thus, we conclude that the numerator converges uniformly over all $G$ to $0$. This yields the claimed uniform consistency of $1/\hat{s}(G)$.
\end{proof}

\paragraph{Bound certification}
First, we prove that our lower confidence bound construction is valid.
\textsf{\Cref{algo:bootstrap_ci_appendix}} restates our method for defining the critical value $t^*$. Then, using the output of \textsf{\Cref{algo:bootstrap_ci_appendix}}, we construct an asymptotically valid lower confidence bound for arbitrary $G \in \cG$ by setting
\begin{align*}
\epsilon_{\text{lb}}(G) = \hat{\epsilon}(G) - t^*  \cdot \frac{\hat{s}(G)}{\P_n(G)^2}.
\end{align*}
\Cref{thm:uniform_ci_appendix} restates \Cref{thm:uniform_ci}. 

\begin{algorithm}
  \caption{Bootstrapping the (rescaled) lower confidence bound critical value}
  \label{algo:bootstrap_ci_appendix}
  \begin{algorithmic}[1]
    \State \textbf{Input:} Subpopulations $\cG$, audit trail $\cD$, level $\alpha$, threshold $p_*$, weight $w_0$, number of bootstrap samples $B$
    \For{$b = 1,\dots,B$}
        \State Let $\cD^*_b$ be a sample with replacement of size $n$ from $\cD$;
        \State Define $\P^*_b(G) \defeq \frac{1}{n} \sum_{(x^*_i, y^*_i) \in \cD^*_b} \indic{(x^*_i, y^*_i) \in G}$;
        \State Define $\epsilon^*_b(G) \defeq \frac{1}{\P^*_b(G) \cdot n} \sum_{(x^*_i, y^*_i) \in G} L^*_i - \hat{\theta}(\cD^*_b)$;
    \EndFor
    \State Define the asymptotic variance estimator by \begin{align*}
        \hat{\sigma}^2_G &\defeq \widehat{\text{Var}}(L \mid G) + \P_n(G) \l (\widehat{\text{Var}}(\psi) - 2 \cdot \widehat{\text{Cov}}(\,L\,, \psi \mid G) \r )
    \end{align*}
    where $\widehat{\text{Var}}(\cdot)$ and $\widehat{\text{Cov}}(\cdot)$ correspond to the sample (conditional) variance and covariance;
    \State Define $\hat{s}(G)$ by \eqref{eqn:rescaling_est};
    \For{$b = 1,\dots,B$}
        \State $t^{(b)} = \max_{G \in \cG} \{\frac{1}{\hat{s}(G)} \cdot \P_n(G) \cdot \P^*_b(G) \cdot (\epsilon^*_b(G) - \hat{\epsilon}(G)) \}$;
    \EndFor

    \State \textbf{Return:} $t^* = \text{Quantile} (1 - \alpha; \{t^{(b)} \}_{b = 1}^B )$
  \end{algorithmic}
\end{algorithm}
\begin{theorem} \label{thm:uniform_ci_appendix}
Assume that $L$ is bounded and that $L - \hat{\theta}$ is non-constant over at least one non-empty group. Further assume that $\cG$ has finite Vapnik-Chernovenkis (VC) dimension. Then, $\epsilon_{\textup{lb}}(G)$ is an asymptotic $(1 - \alpha)$-lower confidence bound for $\epsilon(G)$ that is simultaneously valid for all $G \in \cG$.
\end{theorem}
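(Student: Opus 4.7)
The plan is to reduce the simultaneous coverage statement to a bootstrap central limit theorem for a rescaled supremum process. Rearranging the definition of $\epsilon_{\textup{lb}}(G)$, the target probability equals
\begin{align*}
    \P\!\l(\sup_{G \in \cG} \frac{\P_n(G)^2\,(\hat{\epsilon}(G)-\epsilon(G))}{\hat{s}(G)} \leq t^*\r).
\end{align*}
First I would replace $\P_n(G)^2/\hat s(G)$ with $\P(G)\P_n(G)/s(G)$. Since $\cG$ is VC, \Cref{lma:vc_iff} gives the Donsker (hence Glivenko-Cantelli) property for $\{\mathbf{1}_G\}$, so $\sup_G |\P_n(G)-\P(G)| \cp 0$; combined with the uniform consistency $\sup_G |1/\hat s(G)-1/s(G)| \cp 0$ from \Cref{lma:rescaling_consistency}, and the uniform $O_P(n^{-1/2})$ bound on the numerator derived below, this substitution introduces only $o_P(n^{-1/2})$ error.

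Second, I would linearize the numerator. Writing $\tilde L=L-\theta_P$ and $\Delta_\theta=\hat\theta-\theta_P$, direct algebra and the identity $\P(G)\E[\tilde L\mid G]=P[\tilde L\mathbf{1}_G]$ give
\begin{align*}
    \P(G)\P_n(G)(\hat\epsilon(G)-\epsilon(G)) = \P(G)(\P_n-P)[\tilde L\mathbf{1}_G] - P[\tilde L\mathbf{1}_G](\P_n-P)[\mathbf{1}_G] - \P(G)\P_n(G)\Delta_\theta.
\end{align*}
Using the influence-function expansion $\sqrt n\,\Delta_\theta=\sqrt n(\P_n-P)[\psi]+o_P(1)$ and replacing the random coefficient $\P_n(G)$ in the last term by $\P(G)$ at a cost of $o_P(n^{-1/2})$ uniformly (by Glivenko-Cantelli), the scaled process $\sqrt n\cdot(1/s(G))\P(G)\P_n(G)(\hat\epsilon(G)-\epsilon(G))$ equals $\sqrt n(\P_n-P)[h_G]+o_P(1)$ uniformly in $G$, where
\begin{align*}
    h_G \defeq \frac{\P(G)}{s(G)}\tilde L\mathbf{1}_G - \frac{P[\tilde L\mathbf{1}_G]}{s(G)}\mathbf{1}_G - \frac{\P(G)^2}{s(G)}\psi.
\end{align*}
Since $L$ is bounded and $s(G)$ is bounded below by $p_*^{3/2}w_0\sqrt{\textup{Var}(L)}>0$, the $G$-indexed coefficients of $h_G$ are uniformly bounded. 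Because $\{\mathbf{1}_G\}$ and $\{\tilde L\mathbf{1}_G\}$ are Donsker (Donsker classes being stable under multiplication by a fixed bounded function), and finite pointwise linear combinations with bounded coefficients preserve the Donsker property, $\{h_G\}_{G\in\cG}$ is $P$-Donsker with a bounded envelope. Hence the process converges in $\ell_\infty(\cG)$ to a tight centered Gaussian $\mathbb{Z}$.

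Third, I would transfer this to the bootstrap. The same linearization applied conditionally to the bootstrap draws yields the bootstrap analogue $\sqrt n(\P_n^*-\P_n)[h_G]+o_{P_n}(1)$, to which \Cref{lma:donsker_bootstrap} applies, giving conditional convergence to the same Gaussian limit $\mathbb{Z}$. By continuous mapping for the supremum functional, both $\sqrt n\sup_G(1/s(G))\P(G)\P_n(G)(\hat\epsilon(G)-\epsilon(G))$ and its bootstrap analogue $\sqrt n\,t^{(b)}$ converge (unconditionally and conditionally, respectively) to $\sup_G\mathbb{Z}(G)$. The non-constancy assumption on $L-\hat\theta$ on a non-empty group ensures $\textup{Var}(\mathbb Z(G_0))>0$ for some $G_0$, so by \Cref{cor:quantile_gaussian_2} the law of $\sup_G\mathbb{Z}(G)$ has a continuous, strictly increasing distribution function at its $(1-\alpha)$-quantile. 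Hence $\sqrt n\,t^*$ converges in probability to that quantile, and the coverage probability converges to \emph{exactly} $1-\alpha$ by another application of continuous mapping.

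The main obstacle is the joint handling of the two forms of non-uniformity: $\hat\epsilon(G)$ diverging on small groups, which the $\P(G)\P_n(G)$ factor tames algebraically but which forces the Slutsky-type swap $\P_n(G)^2\leftrightarrow\P(G)\P_n(G)$ to be justified uniformly; and the $G$-dependent rescaling $1/\hat s(G)$, which introduces $G$-indexed multipliers into the empirical process. Controlling these requires showing (a) that the $G$-indexed coefficient class $\{h_G\}$ is Donsker despite its $G$-varying weights (which reduces to boundedness of the weights and stability of the Donsker property under bounded pointwise operations), and (b) that $\hat s(G)$ is uniformly close to $s(G)$ — precisely the content of \Cref{lma:rescaling_consistency}. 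The unscaled bound \eqref{eqn:bound_closed_form} follows as the special case $\hat s\equiv 1$, $s\equiv 1$ of the same argument.
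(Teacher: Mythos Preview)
Your proposal is correct and follows essentially the same route as the paper: the rearrangement to a supremum event, the uniform Slutsky swap $\P_n(G)/\hat s(G)\to\P(G)/s(G)$ via \Cref{lma:rescaling_consistency} and Glivenko--Cantelli, the linearization to $\sqrt n(\P_n-P)[h_G]$ with the same influence-function representation, the Donsker preservation argument for $\{h_G\}$, the parallel bootstrap linearization invoking \Cref{lma:donsker_bootstrap}, and the quantile continuity argument via \Cref{cor:quantile_gaussian_2} all mirror the paper's proof. One cosmetic slip: your stated lower bound on $s(G)$ should be $p_*^{3/2}\,(w_0/(1+w_0))\sqrt{\textup{Var}(L)}$ rather than $p_*^{3/2}w_0\sqrt{\textup{Var}(L)}$, but this does not affect the argument.
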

\begin{proof}
    To simplify notation, we will replace $\indic{(X, Y) \in G}$ with $\indG$ throughout. We first restate the result we aim to prove.
    \begin{align*}
        \lim_{n \to \infty} \P \l(\exists\, G \in \cG \text{ s.t. } \epsilon_\text{lb}(G) > \epsilon(G) \r) &= \lim_{n \to \infty} \P \l(\exists\, G \in \cG \text{ s.t. } \hat{\epsilon}(G) - t^* \cdot \frac{ s^*(G) }{\P_n(G)^2} > \epsilon(G) \r)
    \end{align*}
    Rearranging the latter event, we obtain
    \begin{multline*}
        \lim_{n \to \infty}\P \l(\sup_{G \in \cG} \l\{ \l(\frac{\P_n(G)^2}{\hat{s}(G)} \r) \cdot (\hat{\epsilon}(G) - \epsilon(G)) \r\} > t^* \r) \\
        = \lim_{n \to \infty} \P \l(\sup_{G \in \cG} \l\{ \l(\frac{\P_n(G)}{\hat{s}(G)} \r) \cdot \P_n[(L - \hat{\theta} - \epsilon(G)) \cdot \indG] \r\} > t^* \r).
    \end{multline*}
    Then, we claim that $\sup_{G \in \cG} | \P_n(G)/\hat{s}(G) - \P(G)/s(G) | \cp 0$. This follows by observing that
    \begin{align*}
        \sup_{G \in \cG} \l| \frac{\P_n(G)}{\hat{s}(G)} - \frac{\P(G)}{s(G)} \r| &\leq \sup_{G \in \cG} \l| \frac{\P_n(G)}{\hat{s}(G)} - \frac{\P_n(G)}{s(G)} \r| + \sup_{G \in \cG} \l| \frac{\P_n(G)}{s(G)} - \frac{\P(G)}{s(G)} \r| \\&\leq \sup_{G \in \cG} |\P_n(G)| \cdot \sup_{G \in \cG} \l| \frac{1}{\hat{s}(G)} - \frac{1}{s(G)}\r|  + \l|\frac{1}{\inf_{G \in \cG} s(G)} \r| \cdot \sup_{G \in \cG} |\P_n(G) - \P(G)|.
    \end{align*}
    Observe that $\sup_{G \in \cG} |\P_n(G)| \leq 1$ and $\inf_{G \in \cG} s(G) \geq p_*^{3/2} \cdot (w_0 / (1 + w_0)) \cdot \sqrt{\text{Var}(L)} > 0$. Then, by \Cref{lma:rescaling_consistency} and the assumption that $\cG$ is $P$-Donsker, and thus also $P$-Glivenko-Cantelli, we conclude the desired uniform consistency result.

    To apply Slutsky's lemma to the uniformly consistent estimator, we must prove that $\sqrt{n} \cdot \P_n[(L - \hat{\theta} - \epsilon(G)) \cdot \indG] = O_P(1)$. Note that
    \begin{align*}
        &\left |\sup_{G \in \cG} \sqrt{n} \cdot \P_n[(L - \hat{\theta} - \epsilon(G)) \cdot \indG] \right|\\
        &\qquad = \left| \sup_{G \in \cG} \l \{\sqrt{n} \cdot (\P_n - P)[(L - \theta_P - \epsilon(G)) \cdot \indG] - \sqrt{n} (\hat{\theta} - \theta_P) \cdot \P_n [\indG] \r\}\right| \\
        &\qquad \leq \left| \sup_{G \in \cG} \sqrt{n} (\P_n - P)[(L - \theta_P - \epsilon(G)) \cdot \indG] \right| + \l |\sqrt{n} (\hat{\theta} - \theta_P) \r | \cdot \sup_{G \in \cG} \P_n(G).
    \end{align*}
    
    By standard Donsker preservation results \citep[Section 2.10]{vaart1996weak}, the function class $\{(L - \theta_P - \epsilon(G)) \cdot \indG \mid G \in \cG\}$ is $P$-Donsker and hence $\sup_{G \in \cG} \sqrt{n}(\P_n - P)[(L - \theta_P - \epsilon(G)) \mathbf{1}_G] = O_P(1)$.  Meanwhile, $| \sqrt{n}(\hat{\theta} - \theta_P)| \cdot \sup_{G \in \cG}] \P_n(G) \leq | \sqrt{n}( \hat{\theta} - \theta_P)| = O_P(1)$ by assumption.  Thus, this upper bound is $O_P(1)$.
    
    
    Then, applying Slutsky's lemma, we obtain
    \begin{align*}
        &\l(\frac{\P_n[\indG]}{\hat{s}(G)} \r) \cdot \sqrt{n} \cdot \P_n \l[\l(L - \hat{\theta} - \epsilon(G)\r) \indG \r]  \\
        &\quad = \frac{1}{s(G)} \cdot \sqrt{n} \cdot \P_n \l[\l((L - \hat{\theta})P[\indG] - P[(L - \theta_P) \indG]\r) \indG    \r] + o_P(1) \\
        &\quad = \frac{1}{s(G)} \cdot \sqrt{n} \left(\P_n [(L - \hat{\theta}) \indG] \cdot P[\indG] - P[(L - \theta_P) \indG] \cdot \P_n[\indG]\right) + o_P(1).
    \end{align*}

    Our goal is to now prove that the bootstrap analogue to this process is consistent, i.e., we must show that this process is indexed by a $P$-Donsker class. To this end, observe that we can rewrite the process as
    \begin{multline*}
        \frac{P[\indG]}{s(G)} \cdot \sqrt{n} \left (\P_n[(L - \hat{\theta}) \cdot \indG] - P[(L - \theta_P) \cdot \indG] \right) - \frac{P[(L - \theta_P)\indG]}{s(G)} \cdot \sqrt{n}(\P_n - P)[\indG] + o_P(1)
    \end{multline*}
    We then linearize the first term by observing that 
    \begin{multline*}
        \sqrt{n} \left(\P_n[(L - \hat{\theta}) \cdot \indG] - P[(L - \theta_P) \cdot \indG]\right) = \sqrt{n}(\P_n - P)[L \cdot \indG] - \theta_P \cdot \sqrt{n}(\P_n - P)[\indG] \\
        - P[\indG] \cdot \sqrt{n}(\hat{\theta} - \theta_P) + \underbrace{\frac{1}{\sqrt{n}} \cdot  \sqrt{n}(\P_n - P)[\indG] \cdot \sqrt{n} (\hat{\theta} - \theta_P )}_{o_P(1)}.
    \end{multline*}
    Replacing this in the previous expansion, we simplify and obtain
    \begin{multline*}
         \sqrt{n} (\P_n- P)\left [\frac{P[\indG]}{s(G)} \cdot L \cdot \indG\right] - \sqrt{n}(\P_n - P) \left[\frac{P[L \cdot \indG]}{s(G)} \cdot \indG \right] -\sqrt{n}(\P_n - P) \left[ \frac{P[\indG]^2}{s(G)} \cdot \psi \right ] \\
         + o_P(1).
    \end{multline*}

    We claim that the three empirical processes in this display are all indexed by Donsker classes. Using the first process as an example, we observe that $\{P[\indG]/s(G) \mid G \in \cG\}$ is a universal, uniformly bounded Donsker class, while $L \cdot \indG$ is a $P$-Donsker class because the product of a bounded measurable function and a Donsker class is Donsker. The pairwise product of two uniformly bounded Donsker classes is Donsker. Clearly, $\{P[\indG]/s(G) \cdot L \cdot \indG \mid G \in \cG\}$ is the subset of such a product, so by Theorem~2.10.1 in \cite{vaart1996weak}, the first process is indexed by a Donsker class. Near-identical arguments prove the last two processes are also $P$-Donsker. Recall that the Donsker property is preserved under addition. So, we conclude that
    \begin{align}
        \sqrt{n}(\P_n - P)\left[\frac{P[\indG]}{s(G)} \cdot L \cdot \indG -  \frac{P[L \cdot \indG]}{s(G)} \cdot \indG - \frac{P[\indG]^2}{s(G)} \cdot \psi\right] + o_P(1) \label{eqn:donsker_ci_process}
    \end{align}
    is indexed by a subset of a Donsker class, i.e., a Donsker class. 

    Via an identical derivation, we also show that the bootstrap process in \textsf{\Cref{algo:bootstrap_ci_appendix}}, $\frac{1}{\hat{s}(G)} \cdot \P_n(G) \cdot \P^*_n(G) \cdot (\epsilon^*(G) - \hat{\epsilon}(G))$ is equal to
    \begin{align*}
        \sqrt{n}(\P^*_n - \P_n)\left[\frac{P[\indG]}{s(G)} \cdot L \cdot \indG -  \frac{P[L \cdot \indG]}{s(G)} \cdot \indG - \frac{P[\indG]^2}{s(G)} \cdot \psi\right] + o_P(1).
    \end{align*}
    Applying \Cref{lma:donsker_bootstrap}, we conclude that the bootstrap is consistent. Due to the continuous mapping theorem, we may take a $\sup$ over $G \in \cG$ and conclude that 
    the process sampled in \textsf{\Cref{algo:bootstrap_ci_appendix}} is asymptotically equivalent to the process stated in the probability of interest, 
    \begin{align*}
        \lim_{n \to \infty} \P \l(\sup_{G \in \cG} \l\{ \sqrt{n} \cdot \l(\frac{\P_n(G)}{\hat{s}(G)} \r) \cdot \P_n[(L - \hat{\theta} - \epsilon(G)) \cdot \indG] \r\} \geq t^* \cdot \sqrt{n} \r).
    \end{align*}

    Last, we must show that $t^* \cdot \sqrt{n}$ consistently estimates the $(1 - \alpha)$-quantile of the supremum of \eqref{eqn:donsker_ci_process}. Recall that \citet[Lemma~11.2.1(ii)]{lehmann2005testing} establishes quantile consistency whenever the distribution function is continuous and strictly increasing at the point of interest. Our assumption that $L - \hat{\theta}$ is non-constant for some non-empty $G$ implies that the asymptotic variance of $\frac{1}{s(G)} \cdot \P_n(G) \cdot \P(G) \cdot (\hat{\epsilon}(G) - \epsilon(G))$ is non-zero for some $G$. Then, the $(1 - \alpha)$-quantile of the supremized process is strictly greater than $0$ and \Cref{cor:quantile_gaussian_2} implies the desired result.

    Completing the proof,
    \begin{multline*}
        \lim_{n \to \infty} \P \l(\exists\, G \in \cG \text{ s.t. } \epsilon_\text{lb}(G) > \epsilon(G) \r) \\ = \lim_{n \to \infty} \left( \sup_{G \in \cG} \l \{ \frac{1}{\hat{s}(G)} \cdot \P_n(G) \cdot \P^*_n(G) \cdot (\epsilon^*(G) - \hat{\epsilon}(G)) \r\} > t^* \right) = \alpha.
    \end{multline*}
\end{proof}

\paragraph{Boolean certification}
We next consider Boolean certification, in which we test the null $$H_0(G): \epsilon(G) \leq \epsilon.$$

In \textsf{\Cref{algo:bootstrap_fwer_appendix}}, we define a rescaled variant of \textsf{\Cref{algo:bootstrap_fwer}}.  Recall that we issue a certificate for $G$ when
\begin{align*}
    \hat{\epsilon}(G) \geq \epsilon + \frac{t^*}{\P_n(G)}.
\end{align*}
\Cref{thm:fwer_appendix} then proves the issued certificates are simultaneously valid with probability $1 - \alpha$. 

\begin{algorithm}
  \caption{Bootstrapping the Boolean certificate critical value (rescaled)}
  \label{algo:bootstrap_fwer_appendix}
  \begin{algorithmic}[1]
    \State \textbf{Input:} Subpopulations $\cG$, disparity $\epsilon$, audit trail $\cD$, level $\alpha$, threshold $p_*$, weight $w_0$, number of bootstrap samples $B$
    \For{$b = 1,\dots,B$}
        \State Let $\cD^*_b$ be a sample with replacement of size $n$ from $\cD$;
        \State Define $\P^*_b(G) \defeq \frac{1}{n} \sum_{(x^*_i, y^*_i) \in \cD^*_b} \indic{(x^*_i, y^*_i) \in G}$;
        \State Define $\epsilon^*_b(G) \defeq \frac{1}{\P^*_b(G) \cdot n} \sum_{(x^*_i, y^*_i) \in G} L^*_i - \hat{\theta}(\cD^*_b)$;
    \EndFor
    \State Define the asymptotic variance estimator by \begin{align*}
        \hat{\sigma}^2_G &\defeq \widehat{\text{Var}}(L \mid G) + \P_n(G) \l (\widehat{\text{Var}}(\psi) - 2 \cdot \widehat{\text{Cov}}(\,L\,, \psi \mid G) \r )
    \end{align*}
    where $\widehat{\text{Var}}(\cdot)$ and $\widehat{\text{Cov}}(\cdot)$ correspond to the sample (conditional) variance and covariance;
    \State Define $\hat{s}(G)$ by \eqref{eqn:rescaling_est};
    \For{$b = 1,\dots,B$}
        \State $t^{(b)} = \max_{G \in \cG} \l \{ \frac{1}{\hat{s}(G)} \cdot \l(\P^*_b(G) \cdot (\epsilon^*_b(G) - \epsilon) - \P_n(G) \cdot (\hat{\epsilon}(G) - \epsilon)\r) \r \}$;
    \EndFor
    \State \textbf{Return:} $t^* = \text{Quantile} (1 - \alpha; \{ t^{(b)} \}_{b = 1}^B )$
  \end{algorithmic}
\end{algorithm}

\begin{theorem} \label{thm:fwer_appendix}
Assume that $L$ has finite variance and that $L - \hat{\theta}$ is non-constant over at least one non-empty group. Further assume that $\cG$ has finite VC dimension. Then, 
\begin{align*}
    \lim_{n \to \infty} \P(\textup{there exists any falsely certified $G \in \cG$}) \leq \alpha.
\end{align*}
\end{theorem}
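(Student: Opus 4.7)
The plan is to mimic the argument of \Cref{thm:uniform_ci_appendix}, exploiting the one-sided nature of FWER control. Let $\cG_0 \defeq \{G \in \cG : \epsilon(G) \leq \epsilon\}$ denote the collection of null groups. By the rejection rule \eqref{eqn:fwer_rejection}, the probability of issuing at least one false certificate equals
\begin{align*}
\P\Bigl(\sup\nolimits_{G \in \cG_0} \P_n(G) \cdot (\hat{\epsilon}(G) - \epsilon) \geq t^*\Bigr).
\end{align*}

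The crucial step converts this supremum over $\cG_0$ into one over all of $\cG$ indexed by a pivotal quantity. For $G \in \cG_0$, the null hypothesis is equivalent to $P[(L - \theta_P - \epsilon)\mathbf{1}_G] = P(G)(\epsilon(G) - \epsilon) \leq 0$. Subtracting this non-positive quantity and inserting the influence-function expansion of $\hat{\theta}$ yields, uniformly in $G \in \cG_0$,
\begin{align*}
\P_n(G)(\hat{\epsilon}(G) - \epsilon) &= \P_n[(L - \hat{\theta} - \epsilon)\mathbf{1}_G] \\
&\leq (\P_n - P)\bigl[(L - \theta_P - \epsilon)\mathbf{1}_G - P(G)\psi\bigr] + o_P(n^{-1/2}).
\end{align*}
Taking $\sup_{G \in \cG_0}$ and then enlarging to $\sup_{G \in \cG}$ gives
\begin{align*}
\P(\text{any falsely certified } G) \leq \P\Bigl(\sup\nolimits_{G \in \cG} \sqrt{n}(\P_n - P)[f_G] \geq \sqrt{n}\, t^*\Bigr) + o(1),
\end{align*}
where $f_G \defeq (L - \theta_P - \epsilon)\mathbf{1}_G - P(G)\psi$ indexes a $P$-Donsker class by \Cref{lma:vc_iff} and the Donsker preservation arguments already used in \Cref{thm:uniform_ci_appendix}.

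The remainder of the plan is to show that $\sqrt{n}\,t^*$ consistently estimates the $(1-\alpha)$-quantile of $\sup_{G \in \cG} \mathbb{G}[f_G]$, where $\mathbb{G}$ is the Gaussian limit of the empirical process. The bootstrap increment $\P^*(G)(\epsilon^*(G) - \epsilon) - \P_n(G)(\hat{\epsilon}(G) - \epsilon)$ admits the same linearization with $(\P_n - P)$ replaced by $(\P_n^* - \P_n)$, so bootstrap consistency (\Cref{lma:donsker_bootstrap}) yields convergence of the bootstrap process in $\ell^\infty(\cG)$. The non-constancy assumption ensures $\mathrm{Var}(\mathbb{G}[f_G]) > 0$ for some $G$, so \Cref{cor:quantile_gaussian_2} combined with \citet[Lemma~11.2.1(ii)]{lehmann2005testing} yields quantile consistency, and the upper bound converges to $\alpha$. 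For the rescaled variant (\textsf{\Cref{algo:bootstrap_fwer_appendix}}), the same argument applies after dividing the process by $\hat{s}(G)$ and appealing to the uniform consistency result from \Cref{lma:rescaling_consistency}. The main subtlety is verifying that the linearization is genuinely uniform in $G$ so that both sides of the inequality are controlled by a single Donsker process; the $\leq \alpha$ rather than $=\alpha$ conclusion is essentially free and becomes tight only when some $G \in \cG_0$ achieves $\epsilon(G) = \epsilon$, matching the remark following \Cref{thm:fwer_main_text}.
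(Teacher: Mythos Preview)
Your proposal is correct and follows essentially the same route as the paper: subtract the non-positive null quantity $P(G)(\epsilon(G)-\epsilon)$, enlarge the supremum from $\cG_0$ to $\cG$, linearize via the influence function to obtain a Donsker-indexed process, invoke \Cref{lma:donsker_bootstrap} for bootstrap consistency, and finish with \Cref{cor:quantile_gaussian_2} plus Lehmann--Romano for quantile consistency. The only cosmetic difference is that the paper writes out the rescaled version with $\hat{s}(G)$ as the primary argument (deducing the unrescaled \Cref{thm:fwer_main_text} as the special case $\hat{s}\equiv 1$), whereas you present the unrescaled argument first and defer the rescaled variant to a closing remark.
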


\begin{proof}
    To simplify notation, we will replace $\indic{(X, Y) \in G}$ with $\indG$ throughout. 
    
    First, we restate the result we need to prove. 
    \begin{align*}
        \lim_{n \to \infty} \P \l(\sup_{G \in \{G : H_0(G) \text{ holds}\}} \hat{\epsilon}(G) \geq \epsilon + t^* \cdot \frac{\hat{s}(G)}{\P_n(G)} \r) &\leq \alpha.
    \end{align*}
    Rearranging, we obtain
    \begin{multline*}
        \lim_{n \to \infty} \P \l(\sup_{G \in \{G : H_0(G) \text{ holds}\}} \hat{\epsilon}(G) \geq \epsilon + t^* \cdot \frac{\hat{s}(G)}{\P_n(G)} \r) \\ = \lim_{n \to \infty} \P \l(\sup_{G \in \{G : H_0(G) \text{ holds}\}} \frac{1}{\hat{s}(G)} \cdot \P_n(G) \cdot (\hat{\epsilon}(G) - \epsilon) \geq t^* \r).
    \end{multline*}
    Under $H_0(G)$, by assumption, $\P(G) \cdot (\epsilon(G) - \epsilon) \leq 0$, so it follows that
    \begin{align*}
        &\lim_{n \to \infty} \P \l(\sup_{G \in \{G : H_0(G) \text{ holds}\}} \l \{\frac{1}{\hat{s}(G)}\cdot \P_n(G) \cdot (\hat{\epsilon}(G) - \epsilon) \r \} \geq t^* \r) \\
        &\quad \leq \lim_{n \to \infty} \P \l(\sup_{G \in \{\cG : H_0(G) \text{ holds}\}}   \l \{\frac{1}{\hat{s}(G)} \cdot \l(\P_n(G) \cdot (\hat{\epsilon}(G) - \epsilon) - \P(G) \cdot (\epsilon(G) - \epsilon)\r) \r \} \geq t^* \r) \\
        &\quad \leq \P \l(\sup_{G \in \cG} \l \{\frac{1}{\hat{s}(G)} \cdot \l(\P_n(G) \cdot (\hat{\epsilon}(G) - \epsilon) - \P(G) \cdot (\epsilon(G) - \epsilon) \r) \r \} \geq t^* \r)
    \end{align*}
    For $B = \infty$, recall that $t^*$ is the $(1 - \alpha)$-quantile of $\sup_{G \in \cG} \{(1/\hat{s}(G)) \cdot (\P^*_n(G) \cdot (\epsilon^*(G) - \epsilon) - \P_n(G) \cdot (\hat{\epsilon}(G) - \epsilon))\}$. We must show that $t^*$ is consistent for the $(1 - \alpha)$-quantile of $\sup_{G \in \cG} \{(1/\hat{s}(G)) \cdot (\P_n(G) \cdot (\hat{\epsilon}(G) - \epsilon) - \P(G) \cdot (\epsilon(G) - \epsilon))\}$.

    Since we established that $1/\hat{s}(G)$ is uniformly consistent for $1/s(G)$ in the proof of \Cref{thm:uniform_ci_appendix}, we can apply Slutsky's lemma and replace $1/\hat{s}(G)$ with $\frac{1}{s(G)}$ so long as $\P_n(G) \cdot (\hat{\epsilon}(G) - \epsilon) - \P(G) \cdot (\epsilon(G) - \epsilon) $ is $O_P(1)$. We establish this by showing that it is asymptotically equivalent to an empirical process indexed by some $P$-Donsker function class. To this end, we rewrite $\sqrt{n} [\P_n(G) \cdot (\hat{\epsilon}(G) - \epsilon) - \P(G) \cdot (\epsilon(G) - \epsilon)]$ as
    \begin{align*}
       \sqrt{n} (\P_n - P)[(L - \epsilon) \cdot \indG] - \sqrt{n} \left(\hat{\theta} \cdot \P_n [\indG] - \theta_P \cdot P[\indG] \right) .
    \end{align*}
    We can further expand $\sqrt{n} \left(\hat{\theta} \cdot \P_n [\indG] - \theta_P \cdot P[\indG] \right)$ to obtain
    \begin{align*}
        \theta_P \cdot  \l(\sqrt{n} (\P_n - P) [\indG]\r) + \P(G) \cdot \l( \sqrt{n}(\hat{\theta} - \theta_P) \r) 
        + \underbrace{\frac{1}{\sqrt{n}} \l(\sqrt{n} (\P_n - P) [\indG] \r) \l( \sqrt{n}(\hat{\theta} - \theta_P)\r)}_{o_P(1)}.
    \end{align*}
    Combining these results and applying the asymptotic linearity of $\sqrt{n}(\hat{\theta} - \theta_P)$, we can now linearize $\sqrt{n}\l[\P_n(G) \cdot (\hat{\epsilon}(G) - \epsilon) - \P(G) \cdot (\epsilon(G) - \epsilon)\r]$ to yield the (asymptotically) equivalent process,
    \begin{align*}
        \sqrt{n} (\P_n - P)  [(L - \epsilon) \cdot \indG] 
        - \sqrt{n} (\P_n - P) [\theta_P \cdot \indG]
        - \sqrt{n}(\P_n - P) [\P(G) \cdot \psi] + o_P(1).
    \end{align*}
    Proceeding identically, we can also linearize the bootstrap analogue to this process, i.e.,
    \begin{multline*}
        \sqrt{n} (\P^*_n - \P_n) [(L - \epsilon) \cdot \indG] 
        - \sqrt{n} (\P^*_n - \P_n) [\theta_P  \cdot \indG]
        - \sqrt{n}(\P^*_n - \P_n) [ \P(G) \cdot \psi] + o_{P_n}(1).
    \end{multline*}
    The three terms are empirical or bootstrap empirical processes, respectively, indexed by the following function classes:
    \begin{align*}
        \cF_1 \defeq \{(L - \epsilon) \cdot \indG \mid G \in \cG\} \qquad 
        \cF_2 \defeq \{\theta_P \cdot \indG \mid G \in \cG\} \qquad 
        \cF_3 \defeq \{\P(G) \cdot \psi \mid G \in \cG\}.
    \end{align*}
    Applying Lemma~9.9(vi) of \cite{kosorok2008introduction} with $g(x, y) = L - \epsilon$ and $\cF = \{\indic{x \in G} : G \in \cG\}$ implies that $\cF_1$ is VC. Then, our assumption that $\E_P[L^2] < \infty$ implies that $\cH_1$ is $P$-Donsker (Theorem~2.10.20 in \cite{vaart1996weak}). $\cF_2$ is $P$-Donsker by \Cref{lma:vc_iff}. Last, observe that the uniform entropy integral of $\cF = \{\P(G) \mid G \in \cG\}$ is finite. By the same argument as $\cF_1$, we conclude from Theorem~2.10.20 in \cite{vaart1996weak} that $\cF_3$ is $P$-Donsker.

    Since the Donsker property is preserved under pointwise addition (Example~2.10.7 in \cite{vaart1996weak}) and when taking subsets (Theorem~2.10.1 in \cite{vaart1996weak}), we conclude that $\cF_1 + \cF_2 + \cF_3$ is also $P$-Donsker. Thus, applying Slutsky's lemma and \Cref{lma:donsker_bootstrap}, we have shown that that the bootstrap distribution $\sqrt{n}\l[(1/\hat{s}(G)) \cdot (\P^*_n(G) \cdot (\epsilon^*(G) - \epsilon) - \P_n(G) \cdot (\hat{\epsilon}(G) - \epsilon))\r]$ consistently estimates the limiting distribution of $\sqrt{n}\l[(1/\hat{s}(G)) \cdot (\P_n(G) \cdot (\hat{\epsilon}(G) - \epsilon) - \P(G) \cdot (\epsilon(G) - \epsilon))\r]$. Since the $\sup$ is continuous, we can immediately claim by the continuous mapping theorem that 
    \begin{align*}
        \sup_{G \in \cG} \sqrt{n}\l[(1/\hat{s}(G)) \cdot (\P^*_n(G) \cdot (\epsilon^*(G) - \epsilon) - \P_n(G) \cdot (\hat{\epsilon}(G) - \epsilon))\r]
    \end{align*}
    consistently estimates its limiting distribution.

    To conclude that $t^*$ is a valid critical value, we need to prove that the distribution function of the limiting distribution of $\sup_{G \in \cG} \{\sqrt{n}\l[(1/\hat{s}(G)) \cdot (\P_n(G) \cdot (\hat{\epsilon}(G) - \epsilon) - \P(G) \cdot (\epsilon(G) - \epsilon))\r]\}$ is strictly increasing and continuous at its $(1 - \alpha)$-quantile. 

    The existence of some non-empty $G$ such that $L - \hat{\theta}$ is non-constant implies this fact. In particular, this assumption implies that the asymptotic variance of the process evaluated at $G$ is greater than $0$. Thus, the $(1 - \alpha)$-quantile of the limiting process is strictly greater than $0$. Then, \Cref{cor:quantile_gaussian_2} yields the desired result.
    
    Summarizing our argument, we have proven that the asymptotic probability of false certification can be upper bounded in the following manner,
    \begin{multline*}
        \lim_{n \to \infty} \P \l(\sup_{G \in \{G : H_0(G) \text{ holds}\}} \hat{\epsilon}(G) \geq \epsilon + t^* \cdot \frac{\hat{s}(G)}{\P_n(G)} \r) \\
        \leq \lim_{n \to \infty} \P \l(\sup_{G \in \cG} \l\{ \frac{1}{\hat{s}(G)} \cdot (\P_n(G) \cdot (\hat{\epsilon}(G) - \epsilon) - \P(G) \cdot (\epsilon(G) - \epsilon)) \r\} \geq t^* \r) 
        \\ 
        = \lim_{n \to \infty} \P \l(\sup_{G \in \cG} \l\{ \frac{1}{\hat{s}(G)} \cdot (\P^*_n(G) \cdot (\epsilon^*(G) - \epsilon) - \P_n(G) \cdot (\hat{\epsilon}(G) - \epsilon)) \r\} \geq t^* \r) = \alpha.
    \end{multline*}
\end{proof}

\Cref{thm:fwer_main_text} follows immediately from \Cref{thm:fwer_appendix} when $\hat{s}(G) \defeq 1$. We remark also that any looseness in our upper bound of the Type I error disappears if $\epsilon(G) = \epsilon$ for all $G$. As alluded to in the main text, this corresponds to the condition under which $t^*$ achieves exact Type I error control.

\subsection{Alternative certification goals} \label{sec:certify_alt_nulls}
Here, we describe how to extend the certification procedures described in the previous section and the main text to alternative notions of performance disparities. 

When constructing \emph{upper} confidence bounds or certifying $\epsilon(G) < \epsilon$, we simply multiply $t^*$ by $-1$. So, now $\epsilon_{\text{ub}}(G) = \hat{\epsilon}(G) + t^* \hat{s}(G)/\P_n(G)^2$,
and we certify when $\hat{\epsilon}(G) \leq \epsilon - t^*/\P_n(G)$.

Observe that the argument for validity of the upper confidence bound / certificate goes through if we replace the $(1 - \alpha)$-quantile of the $\sup$-process with the $\alpha$-quantile of the $\inf$-process. The latter, however, is just $-1$ times the former. Recall that the infimum of a centered Gaussian process is equal in distribution to $-1$ times the supremum of that process.

Next, we consider the problem of constructing confidence intervals. We propose to bootstrap the absolute process,
\begin{align*}
    \sup_{G \in \cG} \l|\frac{1}{\hat{s}(G)} \cdot \P(G) \cdot \P_n(G) \cdot (\hat{\epsilon}(G) - \epsilon(G)) \r|.
\end{align*}
Then, if $t^*$ denotes the bootstrap estimate of the $(1 - \alpha)$-quantile of this process, the confidence set is constructed via
\begin{align*}
    \l[\hat{\epsilon}(G) - t^* \cdot \frac{\hat{s}(G)}{\P_n(G)^2}, \hat{\epsilon}(G) + t^* \cdot \frac{\hat{s}(G)}{\P_n(G)^2}\r ].
\end{align*}

Last, for the interval certification task, we test the null hypothesis $H_0(G) : |\epsilon(G)| \geq \epsilon$.  Equivalently, we test:
\begin{align*}
    \bar{H}_0(G) &: \l \{\epsilon(G) \geq \epsilon \r\} \bigcup \l \{\epsilon(G) \leq - \epsilon \r\}.
\end{align*}
This is also known as a ``bioequivalence'' null and it can be tested by running the one-sided Boolean certification procedures for $H_0(G): \epsilon(G) \geq \epsilon$ and $H_0(G): \epsilon(G) \leq - \epsilon$ and certifying when both are rejected. Formally, observe that we would test the first null at level $\alpha$ by rejecting when $\hat{\epsilon}(G) \leq \epsilon - t^*_1/\P_n(G)$ and the second null at level $\alpha$ by rejecting when $\hat{\epsilon}(G) \geq -\epsilon + t^*_2/\P_n(G)$.  Then, we certify $G$ as satisfying $|\epsilon(G)| < \epsilon$ whenever both of these inequalities simultaneously hold.

\section{Flagging audits} \label{sec:flagging_appendix}
To prove that our flagging procedure is asymptotically valid, we first establish that the proposed test statistic is asymptotically normal.
Recall that $\hat{\epsilon}(G) \defeq |G|^{-1} \sum_{i \in G} L_i - \hat{\theta}$.  We denote the vector of statistics $\{ \hat{\epsilon}(G) \}_{G \in \cG}$ as $\hat{\epsilon}$, and the vector of corresponding true disparities as $\epsilon$.

\begin{proposition} \label{prop:clt_ipw}
Assume that $\P(G)$ and $\var\l(L \mid G \r)$ are bounded away from $0$ for all $G \in \cG$. If $m =|\cG| < \infty$, then $\sqrt{n} \l( \hat{\epsilon} - \epsilon \r) \leadsto\mathcal{N}_m \l(\mathbf{0}_m, \Sigma \r)$.
\end{proposition}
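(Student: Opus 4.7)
The plan is to express $\hat{\epsilon}(G)$ as a smooth function of sample averages, linearize via a first-order Taylor expansion, combine the result with the asymptotic linearity assumption on $\hat{\theta}$, and then invoke the classical finite-dimensional multivariate CLT on the resulting influence-function representation.

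First, I would rewrite
\[
\hat{\epsilon}(G) = \frac{\P_n[L \cdot \indG]}{\P_n[\indG]} - \hat{\theta}, \qquad \epsilon(G) = \frac{P[L \cdot \indG]}{P[\indG]} - \theta_P.
\]
Because $\P(G) = P[\indG]$ is bounded away from $0$ and $|\cG| < \infty$, the LLN gives $\P_n[\indG] \cp \P(G)$ and, by continuous mapping, $1/\P_n[\indG] \cp 1/\P(G)$. A first-order Taylor expansion of $(a,b) \mapsto a/b$ around $(P[L \cdot \indG], P[\indG])$ then yields
\[
\sqrt{n}\!\left(\frac{\P_n[L \cdot \indG]}{\P_n[\indG]} - \frac{P[L \cdot \indG]}{P[\indG]}\right) = \sqrt{n}(\P_n - P)\!\left[\frac{(L - \E[L \mid G])\,\indG}{\P(G)}\right] + o_P(1).
\]

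Second, I would combine this with the asymptotic linearity of $\hat{\theta}$, $\sqrt{n}(\hat{\theta} - \theta_P) = n^{-1/2} \sum_i \psi(X_i, Y_i) + o_P(1)$, to obtain the influence-function representation
\[
\sqrt{n}\!\left(\hat{\epsilon}(G) - \epsilon(G)\right) = \frac{1}{\sqrt{n}} \sum_{i=1}^n \phi_G(X_i, Y_i) + o_P(1), \quad \phi_G(X, Y) \defeq \frac{(L - \E[L \mid G])\,\indG}{\P(G)} - \psi(X, Y).
\]
Each $\phi_G$ is mean zero and has finite second moment, since $\P(G)$ is bounded below, $\var(L \mid G) < \infty$, and $\var(\psi) < \infty$ by the asymptotic linearity assumption. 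Stacking the representations over the finite collection $\cG$ and applying the classical multivariate CLT gives
\[
\sqrt{n}(\hat{\epsilon} - \epsilon) \leadsto \cN_m(\mathbf{0}_m, \Sigma), \qquad \Sigma_{G, G'} = \E[\phi_G(X, Y)\,\phi_{G'}(X, Y)],
\]
as claimed.

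The main subtlety is the handling of the inverse-propensity factor $1/\P_n[\indG]$: since $x \mapsto 1/x$ is not globally Lipschitz, the assumed lower bound on $\P(G)$ is essential to make the linearization go through uniformly across the (finitely many) groups; absent this bound, the Taylor remainder would not be $o_P(1)$ for small groups. The lower bound on $\var(L \mid G)$ is not strictly required for the CLT itself---a degenerate Gaussian limit would still be valid---but ensures that the diagonal of $\Sigma$ stays bounded away from zero, which is what later makes the studentized p-values produced by \textsf{\Cref{algo:flag_p_values}} well-behaved asymptotically. Once the lower bound on $\P(G)$ is in place, the finiteness of $\cG$ bypasses any need for empirical process machinery and reduces the argument to a routine delta method plus multivariate CLT.
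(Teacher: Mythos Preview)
Your proposal is correct and takes essentially the same approach as the paper: both derive the influence-function representation $\phi_G(X,Y) = (L - \E[L \mid G])\indG/\P(G) - \psi(X,Y)$ and then apply the multivariate CLT, with the paper simply pointing back to the linearization already computed in the proof of \Cref{lma:rescaling_consistency} rather than re-deriving it via the delta method as you do. Your remarks on the roles of the lower bounds on $\P(G)$ and $\var(L \mid G)$ are also accurate.
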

\begin{proof}
Apply the CLT to the asymptotic linear expansion derived in \Cref{lma:rescaling_consistency}.
\end{proof}

For any group $G_j \in \cG$, \Cref{prop:clt_ipw} implies that the p-values
\begin{gather*}
    p_1(G_j; \epsilon) \defeq \Phi \l( \frac{\sqrt{n} \l(\hat{\epsilon}(G_j) - \epsilon \r)}{\sqrt{\Sigma_{jj}}} \r), \quad p_2(G_j; \epsilon) \defeq 1 - \Phi \l( \frac{\sqrt{n} \l(\hat{\epsilon}(G_j) + \epsilon \r)}{\sqrt{\Sigma_{jj}}} \r), \\
    p_3(G_j) \defeq p_1(G_j; -\epsilon) \wedge p_2(G_j; -\epsilon)
\end{gather*}
are marginally asymptotically valid for the null hypotheses $H_1(G_j) : \epsilon(G_j) \geq \epsilon$, $H_2(G_j) : \epsilon(G_j) \leq - \epsilon$,  $H_3(G_j) : |\epsilon(G_j)| \leq \epsilon$, respectively, i.e., $\limsup_{n \to \infty} \P_{H_i}( p_i (G_j) \leq u) \leq u$.

$\Sigma_{jj}$ is not known, but by Slutsky's lemma, we can replace $\Sigma_{jj}$ with any consistent estimator. We can compute such an estimator analytically, but below we rely on the bootstrap to construct such an estimator. In particular, let
\begin{align*}
    s^*_j \defeq \textup{Quantile} \l( 0.5, \sqrt{n} \l|\epsilon^*(G) - \hat{\epsilon}(G) \r| \r).
\end{align*}
Then, \Cref{lma:consistency_mad} establishes conditions under which $s^*_j$ is a consistent estimator of $\sqrt{\Sigma_{jj}}$.
\begin{lemma}\label{lma:consistency_mad}
Retain the assumptions of \Cref{prop:clt_ipw}. 
Then, $s^*_j/\Phi^{-1}(3/4) \cp \sqrt{\Sigma_{jj}}$.
\end{lemma}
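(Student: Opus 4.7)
}
The plan is to show that the bootstrap distribution of $\sqrt{n}(\epsilon^*(G_j) - \hat{\epsilon}(G_j))$ converges (conditionally on the data, in probability) to the same $\mathcal{N}(0,\Sigma_{jj})$ limit given by \Cref{prop:clt_ipw}, then use the continuous mapping theorem to pass to $|\cdot|$, and finally invoke a quantile-consistency argument to transfer this convergence to the sample median.

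\textbf{Step 1: bootstrap CLT.} I would reuse the asymptotic linearization from the proof of \Cref{prop:clt_ipw} (equivalently, the expansion for $\sqrt{n}(\hat{\epsilon}(G)-\epsilon(G))$ derived in \Cref{lma:rescaling_consistency}) to write
\[
\sqrt{n}(\epsilon^*(G_j) - \hat{\epsilon}(G_j)) = \sqrt{n}(\P_n^* - \P_n)\bigl[\xi_j\bigr] + o_{P_n}(1),
\]
where $\xi_j$ is the influence function obtained from linearizing $\hat{\epsilon}(G_j)$, and has $\var(\xi_j) = \Sigma_{jj}$. Since $\xi_j$ has finite second moment (by the assumption that $\var(L\mid G_j)$ is bounded away from $0$ and the boundedness/finite-variance of $L$), the classical multiplier/bootstrap CLT for i.i.d.\ sampling from $\P_n$ gives
\[
\sqrt{n}(\epsilon^*(G_j) - \hat{\epsilon}(G_j)) \;\rightsquigarrow\; \mathcal{N}(0,\Sigma_{jj})
\]
conditionally on the data, in probability. (This is also an immediate consequence of \Cref{lma:donsker_bootstrap} applied to the singleton class $\{\xi_j\}$.)

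\textbf{Step 2: continuous mapping and quantile extraction.} Applying the continuous mapping theorem with $x\mapsto|x|$, the conditional bootstrap law of $|\sqrt{n}(\epsilon^*(G_j) - \hat{\epsilon}(G_j))|$ converges in probability to the law of $|Z|$ with $Z\sim\mathcal{N}(0,\Sigma_{jj})$. The c.d.f.\ of $|Z|$ is continuous and strictly increasing on $\R_+$, so its $0.5$-quantile is a point of continuity of the quantile functional. Then the standard quantile-consistency lemma (e.g.\ Lemma 21.2 of van der Vaart, or Lemma 11.2.1(ii) of Lehmann--Romano as cited earlier in the proof of \Cref{thm:uniform_ci_appendix}) applies to yield
\[
s^*_j \;\cp\; \text{Quantile}\bigl(0.5;\,|Z|\bigr).
\]

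\textbf{Step 3: evaluating the median of $|Z|$.} A direct computation gives $\P(|Z|\leq t) = 2\Phi(t/\sqrt{\Sigma_{jj}}) - 1$, which equals $1/2$ precisely when $t = \sqrt{\Sigma_{jj}}\,\Phi^{-1}(3/4)$. Therefore $s^*_j \cp \sqrt{\Sigma_{jj}}\cdot \Phi^{-1}(3/4)$, and dividing by $\Phi^{-1}(3/4)$ yields the claim.

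\textbf{Main obstacle.} The only nontrivial piece is Step 2, i.e., going from \emph{conditional-in-probability} bootstrap weak convergence to convergence in probability of the sample median. The subtlety is that $s^*_j$ is itself random (not deterministic given the data, unless $B=\infty$), and the ``in probability'' mode of conditional convergence must be combined with a Polya-type uniform convergence of c.d.f.s to continuous limits before the quantile functional can be applied. For $B$ finite there is an additional Monte-Carlo layer, which is harmless and can be absorbed by taking $B\to\infty$ or by a routine concentration argument for empirical quantiles. All of the other ingredients are immediate from the linearization already developed for \Cref{thm:uniform_ci_appendix} and \Cref{prop:clt_ipw}.
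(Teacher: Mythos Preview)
Your proposal is correct and follows essentially the same route as the paper: establish bootstrap consistency for $\sqrt{n}(\epsilon^*(G_j)-\hat{\epsilon}(G_j))$ (the paper phrases this as the bootstrap delta method, you via explicit linearization and the bootstrap CLT), apply the continuous mapping theorem to pass to absolute values, invoke a quantile-consistency lemma (the paper cites \cite[Lemma~5.10]{van2000asymptotic}, you cite Lemma~21.2 of the same text or Lehmann--Romano), and finish with the standard identity for the Gaussian MAD. Your discussion of the ``main obstacle'' regarding conditional-in-probability convergence and finite $B$ is more explicit than the paper's treatment but does not change the argument.
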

\begin{proof}
Applying the bootstrap delta method and continuous mapping theorem we can conclude that the bootstrap distribution, $\sqrt{n} \l|\epsilon^*(G_j) - \hat{\epsilon}(G_j) \r|$
consistently estimates the distribution of $\sqrt{n} \l|\hat{\epsilon}(G_j) - \epsilon(G_j) \r|$.  Since the limiting distribution of the latter is continuous and strictly increasing everywhere (\Cref{prop:clt_ipw}), we can apply \citet[Lemma 5.10]{van2000asymptotic} to conclude that the bootstrap estimate of the median absolute deviation is consistent for the asymptotic median absolute deviation. Then, the result follows by recalling the well-known fact that the median absolute deviation of a Gaussian distribution with variance $\sigma^2$ is equal to $\Phi^{-1}(3/4) \cdot \sigma$.
\end{proof}

\Cref{lma:consistency_mad} thus implies that we can replace $\sqrt{\Sigma_{jj}}$ in the p-value definitions with the Monte Carlo estimate given by $s^*_j$. 

With these results, we can prove \Cref{prop:fdr_control_bh} after recalling some definitions and well-known conditions regarding the validity of the \textsf{BH} procedure. 

\begin{definition}
We say that $X$ has \underline{positive regression dependency on a subset $I_0$} (PRDS on $I_0$) if for any increasing set $D$ and for each $i \in I_0$, $\P(X \in D \mid X_i = x)$ is increasing in $x$.
\end{definition}

\begin{example}[Case 3.1 in \cite{benjamini2001control}]
The one-sided Gaussian p-values obtained by testing $H_0: \mu \leq \mu^*$ or $H_0: \mu \geq \mu^*$, using $T \sim \mathcal{N}(\mu, \Sigma)$ are PRDS on $I_0$ if $\Sigma_{ij} \geq 0$ for all $i \in I_0$ and $j \in [m]$.
\end{example}

\begin{theorem}[Theorem~1.2 in \cite{benjamini2001control}]
If the joint distribution of the test statistics is PRDS on the subset of test statistics corresponding to the $m_0$ true null hypotheses, the \textsf{BH}($\alpha$) procedure controls the FDR at level less than or equal to $\frac{m_0}{m} \alpha$. \label{thm:bh_prds}
\end{theorem}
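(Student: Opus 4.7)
The plan is to bound the FDR contribution of each true null hypothesis separately. Starting from the identity
\[
\text{FDR} = \sum_{i \in I_0} \sum_{k=1}^m \frac{1}{k}\, \P(p_i \leq k\alpha/m,\, R = k),
\]
which follows because when $R = k$ the rejected indices are exactly those with $p_i \leq k\alpha/m$ (if $p_i \le k\alpha/m$ but $p_i > p_{(k)}$, then $p_i = p_{(j)}$ for some $j>k$, forcing $p_{(j)} \le k\alpha/m < (k+1)\alpha/m$ and contradicting $R = k$), it suffices to show that each inner sum is at most $\alpha/m$. Summing over $i \in I_0$ then yields the claimed bound $m_0 \alpha / m$.

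The key reduction uses marginal validity of $p_i$ under $H_i$, i.e., $\P(p_i \leq u) \leq u$, which gives
\[
\frac{1}{k}\, \P(p_i \leq k\alpha/m,\, R = k) \leq \frac{\alpha}{m}\, \P(R = k \mid p_i \leq k\alpha/m).
\]
It then remains to show that $\sum_{k=1}^m \P(R = k \mid p_i \leq k\alpha/m) \leq 1$, where the conditioning event varies with $k$ so the bound is not immediate. This is where PRDS enters: the event $\{R \geq j\}$ is decreasing in each $p_i$ (raising a p-value can only reduce $R$), so its complement is an increasing set, and PRDS guarantees that $x \mapsto \P(R \geq j \mid p_i = x)$ is non-increasing. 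Averaging against the distribution of $p_i$ over the half-line $[0,c]$ preserves this monotonicity, so $c \mapsto \P(R \geq j \mid p_i \leq c)$ is non-increasing as well.

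With this monotonicity in hand, I would write $\P(R = k \mid p_i \leq k\alpha/m) = \P(R \geq k \mid p_i \leq k\alpha/m) - \P(R \geq k+1 \mid p_i \leq k\alpha/m)$ and replace the conditioning event in the subtracted term by the larger threshold $(k+1)\alpha/m$, which only increases the whole expression. The sum then telescopes to $\P(R \geq 1 \mid p_i \leq \alpha/m) - \P(R \geq m+1 \mid p_i \leq (m+1)\alpha/m)$, and the bound $R \leq m$ makes the subtracted term vanish, leaving a quantity at most $1$.

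The main obstacle is the PRDS manipulation: the natural per-hypothesis bound involves conditional probabilities with mismatched conditioning events, which do not sum to $1$ a priori. The crux of the argument is lifting PRDS---stated as monotonicity in the point value $p_i = x$---to monotonicity in the threshold $c$ of $\{p_i \leq c\}$, and then recognizing that this monotonicity is precisely the lever that aligns the $k$-th term against the $(k+1)$-th to produce a telescoping sum bounded by $1$. The remaining ingredients (marginal p-value validity under $H_i$, the characterization of BH rejections at $R = k$, and $R \leq m$) are routine.
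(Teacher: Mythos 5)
Your argument is correct, but note that the paper does not prove this statement at all---it is imported verbatim as Theorem~1.2 of \citet{benjamini2001control}, and what you have written is essentially that paper's original proof: the decomposition of the FDR over true nulls and rejection counts, the bound via marginal superuniformity of $p_i$, the lift of PRDS from point-conditioning $\{p_i = x\}$ to threshold-conditioning $\{p_i \leq c\}$ (their Lemma~2.1), and the telescoping sum. The only cosmetic gaps are handling the degenerate case $\P(p_i \leq k\alpha/m) = 0$ (where the term vanishes anyway) and remarking that PRDS of the test statistics transfers to the p-values because one-sided p-values are monotone transformations of the statistics; neither affects validity.
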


For the reader's convenience, we restate \Cref{prop:fdr_control_bh} before completing its proof.

\begin{proposition}
Assume that $\P((X, Y) \in G)$ and $\var \l(L(X, Y) \mid (X, Y) \in G \r)$ is bounded away from $0$ for all $G \in \cG$, $\theta_P$ is a-priori known, and that one of the following conditions holds:
\begin{enumerate}[label=(\roman*), itemsep=-0.5ex]
    \item $\{G\}_{G \in \cG}$ are mutually disjoint;
    \item $L$ takes values in $\{0,1\}$.
\end{enumerate}
If we flag the rejections of the \textsf{BH}($\alpha$) procedure on $\{p(G)\}_{G \in \cG}$, then the false discovery rate is asymptotically controlled at level $\alpha$.
\end{proposition}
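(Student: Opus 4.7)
The plan is to combine Proposition~\ref{prop:clt_ipw}, Lemma~\ref{lma:consistency_mad}, and Theorem~\ref{thm:bh_prds}. Proposition~\ref{prop:clt_ipw} provides joint asymptotic normality $\sqrt{n}(\hat{\epsilon} - \epsilon) \leadsto \mathcal{N}_m(0, \Sigma)$, and Lemma~\ref{lma:consistency_mad} shows the bootstrap standard error $s^*(G)$ is consistent. Applying Slutsky jointly, the vector of p-values output by \textsf{Algorithm~\ref{algo:flag_p_values}} converges in distribution to the one-sided Gaussian p-values $\tilde{p}_j := 1 - \Phi\bigl(Z_j/\sqrt{\Sigma_{jj}} + \sqrt{n}(\epsilon(G_j) - \epsilon)/\sqrt{\Sigma_{jj}}\bigr)$ with $Z \sim \mathcal{N}_m(0, \Sigma)$. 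Under the null $\epsilon(G_j) \leq \epsilon$, each $\tilde{p}_j$ is stochastically larger than $\mathrm{Unif}(0,1)$, so marginal asymptotic validity is automatic.

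The crux of the argument is verifying PRDS of the joint null distribution. Because $\theta_P$ is known by assumption, the influence function $\psi$ for $\hat{\theta}$ vanishes and the influence function of $\sqrt{n}(\hat{\epsilon}(G) - \epsilon(G))$ (derived in the proof of Lemma~\ref{lma:rescaling_consistency}) simplifies to $\phi_G(X,Y) = (L(X,Y) - \E[L \mid G]) \cdot \indG(X,Y)/\P(G)$, so $\Sigma_{ij} = \E[\phi_{G_i}\phi_{G_j}]$ with no contribution from estimating $\theta_P$. Under case~(i), disjoint $G_i$'s force $\phi_{G_i}\phi_{G_j} \equiv 0$ for $i \neq j$, so $\Sigma$ is diagonal, the $Z_j$'s are independent, and PRDS is trivial. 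Under case~(ii), writing $p_G := \E[L \mid G]$, we have $\phi_G(x,y) = (1 - p_G)/\P(G)$ when $L(x,y) = 1$ and $(x,y) \in G$, $\phi_G(x,y) = -p_G/\P(G)$ when $L(x,y) = 0$ and $(x,y) \in G$, and $\phi_G(x,y) = 0$ otherwise. On $G_i \cap G_j$ the signs of $\phi_{G_i}$ and $\phi_{G_j}$ therefore agree pointwise --- both positive when $L = 1$ and both negative when $L = 0$ --- while outside the intersection at least one factor vanishes. Consequently $\phi_{G_i}(x,y)\phi_{G_j}(x,y) \geq 0$ pointwise, so $\Sigma_{ij} \geq 0$ for all $i,j$. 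By Case~3.1 of \citet{benjamini2001control}, the centered Gaussian vector $Z$ is PRDS on every subset, and because the $\tilde{p}_j$'s are coordinatewise monotone functions of $Z$, they inherit PRDS on the null subset.

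With PRDS secured in the limit, Theorem~\ref{thm:bh_prds} bounds the FDR of BH applied to the limiting p-values by $\alpha m_0 / m \leq \alpha$. The transfer back to finite $n$ is routine: the FDR is a bounded functional of the joint p-value distribution that is continuous at any joint law that places no mass on the finite set of BH cut-offs, a condition guaranteed by continuity of the Gaussian-based limit, so the portmanteau theorem yields $\limsup_{n \to \infty} \mathrm{FDR}_n \leq \alpha$. The principal technical step is the sign analysis in case~(ii); the identity $L^2 = L$ for binary $L$ is precisely what forces $\phi_{G_i}\phi_{G_j} \geq 0$ pointwise and thereby unlocks the Benjamini--Yekutieli PRDS route to FDR control under arbitrary group overlap.
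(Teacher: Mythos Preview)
Your proposal is correct and follows the same overall architecture as the paper: establish joint asymptotic normality via Proposition~\ref{prop:clt_ipw} and Lemma~\ref{lma:consistency_mad}, verify that the limiting covariance matrix $\Sigma$ has non-negative entries so that the one-sided Gaussian p-values are PRDS on the null set, and then invoke Theorem~\ref{thm:bh_prds} together with bounded convergence.

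The only substantive difference is in how non-negativity of $\Sigma_{jk}$ is established in case~(ii). The paper expands the covariance algebraically as
\[
\bigl(\epsilon(G_j \cap G_k)(1 - \epsilon(G_j) - \epsilon(G_k)) + \epsilon(G_j)\epsilon(G_k)\bigr)\,\frac{\P(G_j \cap G_k)}{\P(G_j)\P(G_k)}
\]
and argues by contradiction, ultimately reducing to the factorization $(1-\epsilon(G_j))(1-\epsilon(G_k)) \ge 0$. Your route is to observe that, with $\theta_P$ known, the influence function $\phi_G = (L - \E[L\mid G])\indG/\P(G)$ has the same sign as $L - p_G$ on $G$ and vanishes off $G$; since $L \in \{0,1\}$ forces $\mathrm{sign}(\phi_{G_j}) = \mathrm{sign}(\phi_{G_k})$ on $G_j \cap G_k$, the product $\phi_{G_j}\phi_{G_k}$ is pointwise non-negative, giving $\Sigma_{jk} = \E[\phi_{G_j}\phi_{G_k}] \ge 0$ directly. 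This pointwise argument is more transparent and avoids the implicit case split (on the sign of $1 - \epsilon(G_j) - \epsilon(G_k)$) that the paper's contradiction argument needs; conversely, the paper's algebraic form makes the exact value of the covariance explicit, which could be useful if one wanted quantitative bounds rather than merely a sign.
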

\begin{proof}
For clearer indexing, we let $\cG = \{G_j\}_{j = 1}^m$. We assume w.l.o.g. that $\theta_P = 0$. Then, \Cref{prop:clt_ipw} shows that $\sqrt{n} \l \{\hat{\epsilon}(G_j)  - \epsilon(G_j) \r\}_{j = 1}^m \leadsto \mathcal{N} \l(\mathbf{0}_{m}, \Sigma \r)$.  The off-diagonal entries of $\Sigma$ equal
\begin{align*}
    \frac{\E \l[(L - \epsilon(G_j)) (L - \epsilon(G_k)) \indic{G_j \cap G_k} \r]}{\P(G_j) \P(G_k)}.
\end{align*}

Under condition (i), the off-diagonal entries are all $0$ since the indicator in the numerator always equals $0$.

Consider condition (ii). In this setting, we can rewrite the covariance expression as
\begin{align*}
    \l(\epsilon(G_j \cap G_k) \l(1 - \epsilon(G_j) - \epsilon(G_k) \r) + \epsilon(G_j) \epsilon(G_k) \r)\frac{\P( G_j \cap G_k)}{\P( G_j) \P(G_k)}.
\end{align*}
Then, assume for the sake of contradiction that this expression is negative. Then,
\begin{align*}
    0 &> \epsilon(G_j \cap G_k) \l(1 - \epsilon(G_j) - \epsilon(G_k) \r) + \epsilon(G_j) \epsilon(G_k) \\
    &\geq 1 - \epsilon(G_j) - \epsilon(G_k) + \epsilon(G_j) \epsilon(G_k) \\
    &= (1 - \epsilon(G_j)) (1 - \epsilon(G_k)).
\end{align*}
But since the last expression is non-negative under condition (ii), we obtain a contradiction and must conclude that the asymptotic covariance is non-negative.

In both cases, we showed that the asymptotic distribution of these test statistics is a multivariate Gaussian with non-negative covariance. As a consequence, the one-sided p-values output by \textsf{\Cref{algo:flag_p_values}} are asymptotically PRDS on the set of nulls. Then, applying the bounded convergence theorem and \Cref{thm:bh_prds} yields the desired result.
\end{proof}

While we do not have a proof of FDR control when testing the two-sided null, it is widely speculated that the \textsf{BH} procedure enjoys FDR control in this setting \cite{fithian2022conditional}. Moreover, even when the PRDS property does not hold (i.e., outside of the conditions given in \Cref{prop:fdr_control_bh}), it is, in practice, quite challenging to obtain substantial violations of FDR control when applying the \textsf{BH} procedure to asymptotic Gaussian p-values.

\section{Auditing distribution shifts} \label{sec:rkhs_appendix}
Before considering the problem of auditing over shifts belonging to the unit ball of an RKHS, we observe that neither the proof of \Cref{thm:uniform_ci_appendix} nor the proof of \Cref{thm:fwer_appendix} rely on $\cG$ being a VC class. Since $\cG$ is VC, $\cF = \{\indic{(X, Y) \in G} \mid G \in \cG\}$ is a $P$-Donsker class. This implies that the proofs employed above are valid even if we replace $\cG$ with some generic Donsker function class.

Extending our results to the RKHS setting, however, require some additional work. To prove that our method for constructing confidence sets on $\epsilon(h)$ are valid, we must prove some preliminary results regarding the unit ball of an RKHS.

\begin{lemma} \label{lma:rkhs_donsker}
    Assume that $\|k(X, X)\|_\infty$ is finite, and that $k(\cdot, x)$ is continuous. Then, the unit ball of the RKHS induced by $k$, which we denote by $\mathcal{H}_1$, is a P-Donsker class. 
\end{lemma}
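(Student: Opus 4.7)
The plan is to sidestep direct entropy bounds by exploiting linearity: the reproducing property turns the empirical process indexed by $\mathcal{H}_1$ into a single linear functional of an $\mathcal{H}$-valued random element, reducing the Donsker statement to a central limit theorem for $\mathcal{H}$-valued i.i.d.\ summands. Specifically, for every $h \in \mathcal{H}_1$, the identity $h(x) = \langle h, k(\cdot, x) \rangle_{\mathcal{H}}$ gives
\[
\sqrt{n}(\mathbb{P}_n - P)[h] \;=\; \langle h, T_n \rangle_{\mathcal{H}}, \qquad T_n \defeq \tfrac{1}{\sqrt{n}}\sum_{i=1}^n V_i, \quad V_i \defeq k(\cdot, X_i) - P[k(\cdot, X)].
\]
The summands $V_i$ are i.i.d.\ centered elements of $\mathcal{H}$ satisfying $\|V_i\|_{\mathcal{H}}^2 \leq 4\, k(X_i,X_i) \leq 4\,\|k(X,X)\|_\infty < \infty$, so they have finite second moment. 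Continuity of $k(\cdot, x)$ makes $\mathcal{H}$ separable, so the Hilbert-space CLT applies and yields $T_n \leadsto T$ in $\mathcal{H}$ for a tight centered Gaussian element $T \in \mathcal{H}$ whose covariance operator $C$ is determined by $\langle Cf, g\rangle_{\mathcal{H}} = \text{Cov}_P(f(X), g(X))$.

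Next, I would transfer this Hilbert-space convergence to weak convergence of processes in $\ell^\infty(\mathcal{H}_1)$ via the map $\Phi: \mathcal{H} \to \ell^\infty(\mathcal{H}_1)$ defined by $\Phi(g)(h) = \langle h, g\rangle_{\mathcal{H}}$. Cauchy--Schwarz together with attainment of the supremum at $h = g/\|g\|_{\mathcal{H}}$ gives $\sup_{h \in \mathcal{H}_1} |\Phi(g)(h)| = \|g\|_{\mathcal{H}}$, so $\Phi$ is a continuous (indeed isometric) embedding. The continuous mapping theorem then yields $\Phi(T_n) \leadsto \Phi(T)$ in $\ell^\infty(\mathcal{H}_1)$. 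A direct computation shows the limit's covariance kernel equals $\text{Cov}_P(h(X), g(X))$, matching the canonical empirical-process covariance, while $\Phi(T_n)(h)$ coincides by construction with $\sqrt{n}(\mathbb{P}_n - P)[h]$. Together these establish the Donsker property.

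The main obstacle I anticipate is the outer-measurability bookkeeping inherent to weak convergence in the non-separable space $\ell^\infty(\mathcal{H}_1)$. I would resolve it by verifying that $\mathcal{H}_1$ is pointwise separable in the sense of \citet{vaart1996weak}: the continuity hypothesis furnishes a countable $\mathcal{H}$-dense subset of $\mathcal{H}_1$, and the uniform envelope bound $|h(x) - h'(x)| \leq \sqrt{\|k(X,X)\|_\infty}\cdot\|h - h'\|_{\mathcal{H}}$ forces suprema over $\mathcal{H}_1$ to agree with suprema over this countable dense subcollection. This eliminates the standard $\ell^\infty$ measurability pathologies and legitimizes the use of the isometry $\Phi$ above.
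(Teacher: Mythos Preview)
Your argument is correct and takes a genuinely different route from the paper. The paper establishes separability and boundedness of $\mathcal{H}_1$ via results in \citet{steinwart2008support}, then invokes Theorem~1.1 of \citet{marcus1985relationships} as a black box with $T$ the identity map. Your approach instead builds the Donsker conclusion from first principles: you recognize that the reproducing property collapses the entire empirical process to $h \mapsto \langle h, T_n\rangle_{\mathcal{H}}$, apply the separable-Hilbert-space CLT to the bounded summands $V_i$, and push the limit through the isometric embedding $\Phi$ via the continuous mapping theorem. This is more self-contained and makes transparent exactly which structural feature of the RKHS drives the result (namely, that the process is a single $\mathcal{H}$-valued random element viewed through linear functionals). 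The paper's route is terser but requires the reader to chase an external reference; your route is longer but elementary. One minor remark: your isometry argument already forces $\Phi(T_n)$ to live in the separable image $\Phi(\mathcal{H}) \subset \ell^\infty(\mathcal{H}_1)$, which by itself dispatches the measurability pathologies, so the separate pointwise-separability verification is not strictly needed---though it does no harm and aligns with the paper's standing convention of assuming pointwise-measurable classes.
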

\begin{proof}
    Lemma 4.28 and Lemma 4.33 of \cite{steinwart2008support} show that the assumptions are sufficient to guarantee that $\mathcal{H}_1$ is a separable Hilbert space and a subset of the space of bounded and continuous functions. The conclusion then follows from Theorem~1.1 of \cite{marcus1985relationships} with $T$ chosen to be the identity. The identity mapping is trivially linear, and also meets the assumption of continuity in the $\sup$-norm because the former is dominated by the RKHS norm.
\end{proof}

\begin{lemma} \label{lma:rkhs_donsker_factors}
Retain the assumptions of \Cref{lma:rkhs_donsker}. Then, for uniformly bounded functions $L$ and $M$, $\tilde{\mathcal{H}}_1 \defeq \{P[L \cdot h] \cdot h \mid h \in \cH_1 \}$ is a $P$-Donsker class.
\end{lemma}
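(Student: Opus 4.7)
The plan is to realize $\tilde{\cH}_1$ as a subset of a pointwise product of two uniformly bounded $P$-Donsker classes, so that the standard preservation result for products of Donsker classes (e.g.\ Example~2.10.8 of \citet{vaart1996weak}) delivers the conclusion.

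First, I would show that the coefficients $c_h \defeq P[L \cdot h]$ are uniformly bounded over $h \in \cH_1$. The reproducing property and Cauchy--Schwarz yield $|h(x)| = |\langle h, k(\cdot, x) \rangle_{\cH}| \leq \|h\|_{\cH} \sqrt{k(x,x)}$, so by the hypothesis $\|k(X,X)\|_\infty < \infty$ the class $\cH_1$ admits the uniform envelope $F \defeq \sqrt{\|k(X,X)\|_\infty}$. It then follows from the boundedness of $L$ that $|c_h| \leq \|L\|_\infty \cdot \|F\|_\infty =: C$ for every $h \in \cH_1$.

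Next, define the enlarged class $\mathcal{F}_C \defeq \{(x \mapsto c \cdot h(x)) : c \in [-C, C],\ h \in \cH_1\}$. The class of constants $\mathcal{C} \defeq \{x \mapsto c : |c| \leq C\}$ is a one-parameter Euclidean class with bounded envelope and is trivially $P$-Donsker. By \Cref{lma:rkhs_donsker}, $\cH_1$ is $P$-Donsker, and by the previous step it is uniformly bounded. Thus $\mathcal{F}_C = \mathcal{C} \cdot \cH_1$ is the pointwise product of two uniformly bounded $P$-Donsker classes, and hence itself $P$-Donsker \citep[Example~2.10.8]{vaart1996weak}. Since $\tilde{\cH}_1 \subseteq \mathcal{F}_C$, and the Donsker property descends to subsets, $\tilde{\cH}_1$ is $P$-Donsker as well.

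The main subtlety is the bookkeeping around envelopes and measurability: one must ensure that $\mathcal{F}_C$ really has a square-integrable envelope (immediate here, since $C \cdot F$ is bounded) and that the pointwise measurability conventions adopted throughout the paper transfer to both $\mathcal{F}_C$ and its subset $\tilde{\cH}_1$. Separability of $\cH_1$ as a Hilbert space \citep[Lemma~4.33]{steinwart2008support} makes these verifications routine, so the bulk of the work lies in the envelope bound of step one.
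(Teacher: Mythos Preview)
Your proof is correct and follows essentially the same route as the paper: bound $|P[L\cdot h]|$ uniformly by some $C$, embed $\tilde{\cH}_1$ in a rescaled copy of $\cH_1$, and apply a Donsker preservation result together with the subset property. The only cosmetic difference is that the paper observes $\tilde{\cH}_1 \subseteq \cH_C = C\cdot\cH_1$ directly via the RKHS norm and cites the scalar-dilation preservation (Theorem~2.10.6 of \citet{vaart1996weak}) in place of your product-of-bounded-Donsker-classes argument; the enlarged class is the same set either way.
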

\begin{proof}
    First, observe that $P[L \cdot h]$ is uniformly bounded for all $h \in \cH_1$:
    \begin{align*}
        P[L \cdot h] &\leq \|L\|_\infty \|k(X, X)\|_\infty \|h\|_{\cH} \leq \|L\|_\infty \|k(X, X)\|_\infty =: C.
    \end{align*}
    Thus, $\tilde{\cH}_1$ is a subset of $\cH_{C}$, i.e., it is a dilation of $\cH_{1}$. Since $C \cdot \cH_1$ is $P$-Donsker (Theorem~2.10.6 in \cite{vaart1996weak}) and any subset of a $P$-Donsker class is $P$-Donsker (Theorem~2.10.1 in \cite{vaart1996weak}), we conclude that $\tilde{\cH}_1$ is $P$-Donsker.
\end{proof}

Here we generalize the main theorem to include disparities that are defined relative to an estimated threshold $\hat{\theta}$; this threshold is assumed to satisfy the asymptotic linearity and bootstrap consistency assumptions stated in \Cref{sec:power_certify}.  \textsf{\Cref{algo:bootstrap_rkhs_appendix}} modifies \textsf{\Cref{algo:bootstrap_rkhs}} so that the bootstrap accounts for estimation error in $\hat{\theta}$.

\begin{algorithm}
  \caption{Bootstrapping the RKHS confidence set critical value with estimated threshold}
  \label{algo:bootstrap_rkhs_appendix}
  \begin{algorithmic}[1]
    \State \textbf{Input:} Kernel $k$, audit trail $\cD$, level $\alpha$, bootstrap samples $B$
    \State Define $\mathbf{L} \defeq \{L(f(x_i), y_i)\}_{i = 1}^n$;
    \State Define $\mathbf{K} \defeq \{k(x_i, x_j)\}_{i,j = 1}^n$;
    \For{$b = 1,\dots,B$}
        \State Sample $\mathbf{w} \sim \text{Mult}\l (n; \frac{1}{n},\dots,\frac{1}{n} \r)$;
        \State Estimate bootstrap threshold deviation $t = \frac{1}{n} \sum_{i = 1}^n (\mathbf{w}_i - 1) \cdot \psi_i$;
        \State $\mathbf{A} = \frac{1}{n^2} \l(\l(\mathbf{w} \odot \mathbf{L}\r) \mathbf{1}^\top - \mathbf{w} \mathbf{L}^\top - t \cdot \mathbb{I}_n \r)$;
        \State $t^{(b)} = \lambda_{\max} \l(\mathbf{K}^{1/2} \l(\frac{\mathbf{A} + \mathbf{A}^\top}{2}\r) \mathbf{K}^{1/2} \r)$;
    \EndFor
    \State \textbf{Return:} $t^* = \text{Quantile} (1 - \alpha/4; \{t^{(b)}\}_{b = 1}^B ) $
  \end{algorithmic}
\end{algorithm}

\begin{lemma} \label{lma:algo_trick}
    For $B = \infty$, the $t^*$ output by \textsf{\Cref{algo:bootstrap_rkhs_appendix}} equals the $(1 - \alpha)$-quantile of
    $$ 
    \sup_{h \in \cH_1} (\P^*_n - \P_n)[(L - \hat{\theta}(\cD^*)) \cdot \P_n[h] - \P_n[(L - \hat{\theta}) \cdot h]) h].
    $$
\end{lemma}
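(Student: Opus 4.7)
The strategy is to establish a per-realization identity: for each draw of the bootstrap weights $\mathbf{w}$,
$$ t^{(b)} \;=\; \lambda_{\max}\!\left(\mathbf{K}^{1/2}\tfrac{\mathbf{A}+\mathbf{A}^\top}{2}\mathbf{K}^{1/2}\right) \;=\; \sup_{h \in \cH_1}\,(\P_n^* - \P_n)\bigl[\bigl((L - \hat{\theta}(\cD^*))\P_n[h] - \P_n[(L-\hat{\theta})h]\bigr)h\bigr]. $$
Once this holds sample-by-sample, the $B=\infty$ claim is immediate: the $(1-\alpha)$-quantile of the bootstrap distribution on the left is what \textsf{\Cref{algo:bootstrap_rkhs_appendix}} returns, and by the identity it coincides with the $(1-\alpha)$-quantile of the process on the right.

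\textbf{Step 1: Expand the process into a quadratic form in $\mathbf{h}$.} Write $\mathbf{h} = (h(x_1),\ldots,h(x_n))^\top$. Using $\P_n^*[f] = n^{-1}\sum_i w_i f(x_i,y_i)$ and the linearization $\hat{\theta}(\cD^*) - \hat{\theta} = n^{-1}\sum_i(w_i-1)\psi_i =: t$ (which is exactly the $t$ defined in line~6 of the algorithm), direct algebra expresses the bootstrap process as $\mathbf{h}^\top \mathbf{N}(\mathbf{w})\mathbf{h}$ for some data- and $\mathbf{w}$-dependent matrix $\mathbf{N}(\mathbf{w})$. Two simplifications are essential: the centering built into $g_h$ forces $\P_n[g_h] = -t\P_n[h]^2$, eliminating the leading-order term; and the diagonal contributions of $\P_n[h]\P_n^*[(L-\hat{\theta})h]$ and $\P_n[(L-\hat{\theta})h]\P_n^*[h]$ partially cancel. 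These two cancellations, combined with the $t$-correction, are exactly what produce the $-t\mathbb{I}_n$ summand inside $\mathbf{A}$.

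\textbf{Step 2: Representer theorem and eigenvalue reformulation.} Since only the values $\{h(x_i)\}$ enter $\mathbf{h}^\top \mathbf{N}\mathbf{h}$, any component of $h$ orthogonal to $\mathrm{span}\{k(\cdot,x_i)\}_{i=1}^n$ leaves the objective unchanged while weakly decreasing $\|h\|_\cH$; hence the supremum over $\cH_1$ may be restricted to $h = \sum_i v_i k(\cdot,x_i)$, for which $\mathbf{h} = \mathbf{K}\mathbf{v}$ and $\|h\|_\cH^2 = \mathbf{v}^\top \mathbf{K}\mathbf{v}$. Replacing $\mathbf{N}$ by its symmetric part, which by Step~1 equals $\tfrac{1}{2}(\mathbf{A}+\mathbf{A}^\top)$, the problem becomes
$$ \sup_{\mathbf{v}^\top \mathbf{K}\mathbf{v}\leq 1} \mathbf{v}^\top \mathbf{K}\bigl(\tfrac{\mathbf{A}+\mathbf{A}^\top}{2}\bigr)\mathbf{K}\mathbf{v}. $$
The substitution $\mathbf{u} = \mathbf{K}^{1/2}\mathbf{v}$ (on the range of $\mathbf{K}^{1/2}$, the natural support of the objective) reduces this to $\sup_{\|\mathbf{u}\|_2\leq 1} \mathbf{u}^\top\mathbf{K}^{1/2}\tfrac{\mathbf{A}+\mathbf{A}^\top}{2}\mathbf{K}^{1/2}\mathbf{u}$, which is the top eigenvalue. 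The zero vector is admissible and yields objective value $0$, so the supremum is non-negative and the eigenvalue characterization is tight.

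\textbf{Main obstacle.} The bulk of the work is the bookkeeping in Step~1: verifying that the algorithm's specific $\mathbf{A}$, and in particular the $-t\mathbb{I}_n$ correction, exactly matches the symmetric part of the quadratic-form coefficients produced by the bootstrap process. This requires tracking multiple cross-terms, using the linearization to handle $\hat{\theta}(\cD^*)-\hat{\theta}$, and recognizing that any remaining discrepancy between $\mathbf{N}(\mathbf{w})$ and $\mathbf{A}$ is antisymmetric and therefore annihilated when sandwiched by $\mathbf{h}$. Steps~2–3 (representer theorem, change of variables, and the identification of the algorithm's output quantile with the population bootstrap quantile) are then routine.
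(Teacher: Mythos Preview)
Your Step~2 (representer theorem and eigenvalue reduction) is correct and matches the paper. The gap is in Step~1: the per-realization identity you claim does not hold, and the ``antisymmetric residual'' assertion is false.

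Expand the process exactly. Writing $\theta^* := \hat{\theta}(\cD^*)$ and $\theta := \hat{\theta}$, one gets
\[
(\P_n^* - \P_n)[g_h] \;=\; \P_n[h]\,\P_n^*[Lh] - \P_n[Lh]\,\P_n^*[h] \;-\; (\theta^* - \theta)\,\P_n[h]\,(\P_n^* - \P_n)[h].
\]
Even granting $\theta^* - \theta = t$ exactly (which already requires $\hat{\theta}$ to be linear, not merely asymptotically linear), the last term corresponds to the rank-one matrix $-\tfrac{t}{n^2}\mathbf{1}(\mathbf{w}-\mathbf{1})^\top$ in your $\mathbf{N}(\mathbf{w})$. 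The algorithm instead carries $-\tfrac{t}{n^2}\mathbb{I}_n$. Their difference
\[
\frac{t}{n^2}\bigl(\mathbb{I}_n - \mathbf{1}(\mathbf{w}-\mathbf{1})^\top\bigr)
\]
has diagonal entries $\tfrac{t}{n^2}(2 - w_i)$ and is certainly not antisymmetric; it contributes $\tfrac{t}{n}\P_n[h^2] - t\,\P_n[h](\P_n^* - \P_n)[h]$ to the quadratic form, which does not vanish identically. So the cancellations you describe (diagonal contributions of the two cross-terms plus the $t$-correction producing exactly $-t\mathbb{I}_n$) do not occur.

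The paper does not attempt an exact identity. Its first line carries an explicit $o_P(1)$: the process in the lemma statement is shown to be asymptotically equivalent (at the $1/\sqrt{n}$ scale) to the linearized process $\P_n[h]\P_n^*[Lh] - \P_n^*[h]\P_n[Lh] - \P_n[h]^2\,(\P_n^* - \P_n)[\psi]$, and the eigenvalue computation in the algorithm is then identified with the supremum of \emph{that} linearized process. In other words, the lemma is established only up to $o_P(1)$, and the word ``equals'' in the statement should be read in that asymptotic sense. Your argument can be repaired by dropping the exact-identity claim and instead showing that the discrepancy terms above are $o_P(1/\sqrt{n})$ uniformly over $\cH_1$, after which your Step~2 goes through unchanged.
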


\begin{proof}
    First, we observe that the empirical process of interest is equal\footnote{To avoid repeating the same derivation, see the proof of \Cref{thm:uniform_ci_rkhs_appendix} for a full exposition of this equivalence for .} to
    \begin{align*}
        \P_n[h] \P^*_n[L \cdot h] - \P^*_n[h] \P_n[L \cdot h] - \P_n[h]^2 \cdot (\P^*_n - \P_n)[\psi] + o_P(1).
    \end{align*}
    We can rewrite the (linearized) process of interest using a multinomial variable,
    \begin{align*}
        \sup_{h \in \cH_1} \P_n[ W \cdot (L 
        \cdot \P_n[h] - \P_n[L \cdot h])h] - \P_n[h]^2 \cdot \P_n[(W - 1) \cdot \psi],
    \end{align*}
    for $W \sim \text{Mult}(n,1/n)$.
    
    We can rewrite the process in terms of inner products between the unknown function $h \in \cH$ and the kernel function,
    \begin{multline*}
        \sup_{h \in \cH_1} \l(\langle \P_n[W \cdot L \cdot k(X, \cdot)], h \rangle \langle \P_n[k(X, \cdot)], h \rangle - \langle \P_n[L \cdot k(X, \cdot)], h \rangle \langle \P_n[W \cdot k(X, \cdot)], h \rangle \r) \\
        - \P_n[(W - 1) \cdot \psi] \langle \P_n[k(X, \cdot)], h \rangle^2.
    \end{multline*}
    Since we know that the optimal $h^*$ must be of the form $\sum_{i = 1}^n \alpha_k k(\cdot, x_i)$ (recall the direct sum decomposition of any RKHS), we can rewrite the above supremum as
    \begin{align*}
        \sup_{\alpha : \alpha^\top \mathbf{K} \alpha \leq 1} \frac{1}{n^2}\l( \alpha^\top \mathbf{K} (\mathbf{w} \odot \mathbf{L}) \mathbf{1}^\top \mathbf{K} \alpha -  \alpha^\top \mathbf{K} \mathbf{w}\mathbf{L}^\top \mathbf{K} \alpha - t \cdot \alpha^T \mathbf{K} \mathbf{K} \alpha \r),
    \end{align*}
    where $\mathbf{K} = \{k(x_i, x_j)\}_{i,j=1}^n$, $\mathbf{w} = (W_1,\dots,W_n)^\top$, $\mathbf{L} = (L_1,\dots,L_n)^\top$, and $t = \P_n[(W - 1) \cdot \psi]$.
    
    Letting $\mathbf{A} = \frac{1}{n^2} [(\mathbf{w} \odot \mathbf{L}) \mathbf{1}^\top - \mathbf{w} \mathbf{L}^\top - t \mathbb{I}]$, we obtain the equivalent objectives,
    \begin{align*}
        \sup_{\alpha : \alpha^\top \mathbf{K} \alpha \leq 1} \frac{1}{2} \l(\alpha^\top \mathbf{K} \l( \mathbf{A} + \mathbf{A}^\top \r) \mathbf{K} \alpha  \r) &= \sup_{\beta : \|\beta\|_2 \leq 1} \frac{1}{2} \l(\beta^\top \mathbf{K}^{1/2} \l(\mathbf{A} + \mathbf{A}^\top \r) \mathbf{K}^{1/2} \beta \r) \\
        &= \frac{1}{2} \lambda_{\max} \l(\mathbf{K}^{1/2} \l(\mathbf{A} + \mathbf{A}^\top \r) \mathbf{K}^{1/2} \r).
    \end{align*}
\end{proof}

We remark that if $\theta_P$ is not estimated, $t = 0$ and the identity matrix in the definition of $A$ can be ignored. This greatly simplifies the computation required for the maximum eigenvalue, since $A + A^T$ is now low rank. As a consequence, we generally do not recommend using RKHS-based confidence sets for performance metrics that are defined relative to an unknown threshold.
    
With these preliminary results in hand, we can now prove our main theorem regarding RKHS confidence set validity. Given $t^*$ output by \textsf{\Cref{algo:bootstrap_rkhs_appendix}}, recall that we obtain a lower confidence bound by setting
\begin{align*}
    \epsilon_{\text{lb}}(h) \defeq \hat{\epsilon}(h) - \frac{t^*}{\l(\frac{1}{n} \sum_{i = 1}^n h(x_i)\r)^2}. 
\end{align*}
For notational convenience, we let $\cH^+_1$ denote the set of all non-negative functions belonging to $\cH_1$.
\begin{theorem} \label{thm:uniform_ci_rkhs_appendix}
Assume that $\var(L) > 0$, $\|L\|_\infty$ and $\|k(X, X)\|_\infty$ are finite, $k(\cdot, x)$ is continuous, and that $k(\cdot, \cdot)$ is a positive definite kernel. Then, 
\begin{align*}
    \lim_{n \to \infty} \P \l(\epsilon_{\textup{lb}}(h) \leq \epsilon(h) \textup{ for all $h \in \mathcal{H}_1^+$} \r) \geq 1 - \alpha.
\end{align*}
\end{theorem}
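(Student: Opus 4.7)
The plan is to adapt the argument of Theorem~\ref{thm:uniform_ci_appendix} to the RKHS setting. Two changes are needed: (i) the Donsker property of the relevant index class comes from Lemmas~\ref{lma:rkhs_donsker} and~\ref{lma:rkhs_donsker_factors} rather than a VC bound, and (ii) I reduce the supremum over $\cH_1^+$ to one over $\cH_1$ via the trivial inclusion $\cH_1^+ \subseteq \cH_1$ noted before~\eqref{eqn:rkhs_quantile}; the conservativity this introduces is absorbed by the use of the $(1-\alpha/4)$-quantile in Algorithm~\ref{algo:bootstrap_rkhs_appendix}.

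First, I will rearrange the event $\l\{\epsilon_{\textup{lb}}(h) > \epsilon(h)\r\}$ into $\{\P_n[h]^2 (\hat{\epsilon}(h) - \epsilon(h)) > t^*\}$ and upper bound its probability by passing the supremum from $\cH_1^+$ to $\cH_1$. Using that $\cH_1$ is $P$-Donsker (Lemma~\ref{lma:rkhs_donsker}) hence $P$-Glivenko--Cantelli, Slutsky's lemma replaces one factor of $\P_n[h]$ by $P[h]$ uniformly in $h$, reducing the quantity of interest to
\begin{align*}
\sup_{h \in \cH_1} \l\{ P[h] \cdot \P_n[(L - \hat{\theta}) h] - \P_n[h] \cdot P[(L - \theta_P) h] \r\}.
\end{align*}
Mimicking the linearization in Theorem~\ref{thm:uniform_ci_appendix} via the asymptotic linear expansion of $\hat{\theta}$, this quantity is asymptotically equivalent to $n^{-1/2}$ times the empirical process $\sqrt{n}(\P_n - P)[g_h]$ indexed by $h \in \cH_1$, with
\begin{align*}
g_h(x, y) \defeq P[h] \cdot (L - \theta_P) \cdot h(x) - P[(L - \theta_P) h] \cdot h(x) - P[h]^2 \cdot \psi(x, y),
\end{align*}
plus an $o_P(1)$ remainder.

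Next, I verify that $\{g_h : h \in \cH_1\}$ is $P$-Donsker: Lemma~\ref{lma:rkhs_donsker_factors} handles the first and second terms (the first additionally multiplied by the bounded measurable function $L - \theta_P$, which preserves the Donsker property), and the third is a uniformly bounded scalar multiple of the fixed function $\psi \in L^2(P)$; closure of Donsker classes under addition finishes the verification. Lemma~\ref{lma:donsker_bootstrap}, together with the continuous mapping theorem for $\sup$, then yields that the bootstrap analogue of this supremum is distributionally consistent; Lemma~\ref{lma:algo_trick} identifies $t^*$ with the associated bootstrap quantile; and \Cref{cor:quantile_gaussian_2}, applied using the assumption $\textup{Var}(L) > 0$ to exhibit non-degeneracy of the limiting Gaussian process at some $h \in \cH_1$, delivers the quantile consistency needed to conclude.

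The main obstacle will be the Donsker verification for $\{g_h\}$: unlike in the VC setting, the RKHS preservation results must be combined with multiplication and addition preservation from~\cite{vaart1996weak} step-by-step, rather than invoked as a single black-box argument. A secondary subtlety is the uniformity of the Slutsky replacement of $\P_n[h]^2$ by $P[h] \P_n[h]$; this relies on the Glivenko--Cantelli property together with the uniform bound $\sup_{h \in \cH_1} |P[h]| \leq \|k(X, X)\|_\infty^{1/2}$ that follows from the reproducing property and Cauchy--Schwarz in the RKHS.
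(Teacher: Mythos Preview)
Your proposal is correct and follows essentially the same route as the paper's proof: rearrange to the process $\P_n[h]^2(\hat\epsilon(h)-\epsilon(h))$, replace one $\P_n[h]$ by $P[h]$ via Glivenko--Cantelli and a uniform $O_P(1)$ bound, linearize using the asymptotic expansion of $\hat\theta$, verify the resulting index class is $P$-Donsker by combining Lemmas~\ref{lma:rkhs_donsker} and~\ref{lma:rkhs_donsker_factors} with the preservation results in \cite{vaart1996weak}, enlarge $\cH_1^+$ to $\cH_1$, and finish with Lemmas~\ref{lma:donsker_bootstrap} and~\ref{lma:algo_trick} plus \Cref{cor:quantile_gaussian_2}.

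One remark: your sentence that the conservativity from the inclusion $\cH_1^+\subseteq\cH_1$ ``is absorbed by the use of the $(1-\alpha/4)$-quantile'' is not right. Taking a higher quantile only makes the bound \emph{more} conservative, not less; the inclusion is precisely what turns the equality into the inequality $\geq 1-\alpha$ in the theorem statement, and nothing needs to absorb it. The $(1-\alpha/4)$ in Algorithm~\ref{algo:bootstrap_rkhs_appendix} plays no role in this step of the argument. Also, when you invoke Slutsky to swap $\P_n[h]$ for $P[h]$, be explicit (as the paper is) that the remaining factor $\sqrt{n}\,\P_n[(L-\hat\theta-\epsilon(h))h]$ is $O_P(1)$ uniformly in $h$; your ``secondary subtlety'' remark gestures at this but does not state it.
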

\begin{proof}
    The desired result is equivalent to
    \begin{align*}
        \lim_{n \to \infty} \P \l(\exists\,h \in \cH^+_1 \text{ s.t. }\l(\P_n[h] \cdot \P_n [h] \cdot (\hat{\epsilon}(h) - \epsilon(h) \r) > t^* \r) \leq \alpha.
    \end{align*}
    Multiplying through by $\P_n[h]$, the process on the LHS can be rewritten as
    \begin{align*}
        \P_n[h] \cdot \P_n[(L - \hat{\theta} - \epsilon(h)) \cdot h].
    \end{align*}
    
    We claim that $\P_n[h]$ is uniformly consistent for $\P[h]$. To see this, first recall that \Cref{lma:rkhs_donsker} implies that $\cH_1$ is $P$-Donsker. The Donsker property is preserved for any subset, so if $\cH_1$ is $P$-Donsker, then so is $\cH^+_1$. Uniform consistency follows from $P$-Donsker $\implies$ $P$-Glivenko-Cantelli. 

    To apply Slutsky's lemma and replace $\P_n[h]$ with $\P[h]$, we must also show that $\sqrt{n} \cdot \P_n[(L - \hat{\theta} - \epsilon(h)) \cdot h]$ is $O_P(1)$. To this end, observe that
    \begin{align*}
        \l |\sup_{h \in \cH_1^+} \sqrt{n} \cdot \P_n[(L - \hat{\theta} - \epsilon(h)) \cdot h] \r | &\leq \l |\sup_{h \in \cH_1^+} \sqrt{n} (\P_n - P)[(L - \theta_P - \epsilon(h)) \cdot h]  \r| \\
        &\quad + \l | \sup_{h \in \cH_1^+} \sqrt{n} (\P_n - P)[\psi] \cdot \P_n[h] \r| 
    \end{align*}
    We can show that both of the terms on the RHS are $O_P(1)$. First, we claim that $\tilde{\cH}_1 = \{(L - \epsilon(h))h \mid h \in \cH^+_1\}$ is $P$-Donsker. First, observe that $$\tilde{\cH}_1 \subseteq \{ L \cdot h \mid h \in \cH^+_1\} - \{\epsilon(h) \cdot h \mid h \in \cH^+_1 \}.$$ Recalling that any subset of a $P$-Donsker class is also $P$-Donsker (Theorem~2.10.1 in \cite{vaart1996weak}), we need to show that the RHS of this display is $P$-Donsker. Each function class on the RHS is $P$-Donsker. The first is because $\|L\|_\infty < \infty$ (Example~2.10.10 in \cite{vaart1996weak}). Then, the second is $P$-Donsker because it is a subset of the elementwise product of two uniformly bounded $P$-Donsker classes (Example~2.10.8 in \cite{vaart1996weak}). Last, elementwise addition of two $P$-Donsker classes yields a $P$-Donsker class (Example~2.10.7 in \cite{vaart1996weak}). Thus, we conclude that $\sqrt{n} (\P_n - P)[(L - \epsilon(h))h] = O_P(1)$. We can upper bound the second term by $|\sqrt{n} (\P_n - P)[\psi]|$, so this term is also $O_P(1)$.
    
    We now apply Slutsky's lemma and obtain:
    \begin{align*}
        P[h] \cdot \sqrt{n} \cdot \P_n[(L - \hat{\theta} - \epsilon(h)) h]
        &= \sqrt{n}\cdot (\P_n [(L - \hat{\theta}) \cdot P[h] - P[(L - \theta_P) \cdot h]) h]) \\
        &= P[h] \cdot \sqrt{n}(\P_n[(L - \hat{\theta}) \cdot h] - P[(L - \theta_P) \cdot h])  \\
        &\quad - P[(L - \theta_P) \cdot h] \cdot \sqrt{n}(\P_n - P)[h] 
    \end{align*}

    Next, we show that this process is $P$-Donsker and converges to a tight Gaussian limit. To do so, we linearize the first term of the process:
    \begin{align*}
        P[h] \cdot \sqrt{n}(\P_n[(L - \hat{\theta}) \cdot h] - P[(L - \theta_P) \cdot h]) &= P[h] \cdot \sqrt{n}(\P_n - P)[L \cdot h] \\
        &\quad - P[h] \cdot \sqrt{n}(\hat{\theta} \cdot \P_n[h] - \theta_P \cdot P[h])) \\
        &= P[h] \cdot \sqrt{n}(\P_n - P)[L \cdot h] \\
        &\quad - P[h] \cdot \theta_P \sqrt{n}(\P_n - P)[h] \\
        &\quad - P[h]^2 \cdot \sqrt{n}(\P_n - P)[\psi] \\
        &\quad + \underbrace{\frac{1}{\sqrt{n}} \left(\sqrt{n}(\hat{\theta} - \theta_P) \cdot \sqrt{n}(\P_n - P)[h]\right)}_{o_P(1)}.
    \end{align*}
    Combining both terms, we conclude that the process is equivalent to
    \begin{multline*}
        \sqrt{n}(\P_n - P)[ P[h] \cdot L \cdot h] - \sqrt{n}(\P_n - P)[P[L \cdot h] \cdot h] - \sqrt{n}(\P_n - P)[P[h]^2 \cdot \psi] \\
        = \sqrt{n}(\P_n - P)[P[h] \cdot L \cdot h - P[L \cdot h] \cdot h - P[h]^2 \cdot \psi]
    \end{multline*}

    Observe that the function class indexing this process can be written as a subset of an elementwise sum of three classes,
    \begin{multline*}
        \{P[h] \cdot L \cdot h - P[L \cdot h] \cdot h - P[h]^2 \cdot \psi \mid h \in \cH^+_1\} \\ \subseteq \{P[h] \cdot L \cdot h \mid h \in \cH^+_1\} - \{P[L \cdot h] \cdot h \mid h \in \cH^+_1\} - \{P[h]^2 \cdot \psi \mid h \in \cH^+_1\},
    \end{multline*}
    each of which is Donsker. 
    
    We check the first class is Donsker as an example; the arguments for the other two proofs follow identically. Note that $\{L \cdot h \mid h \in \cH_1^+\}$ is a $P$-Donsker class because $L$ is uniformly bounded and $\{h \mid h \in \cH_1\}$ is a uniformly bounded Donsker class (Example~2.10.10 in \cite{vaart1996weak}). Then, $\{P[h] \cdot L \cdot h \mid h \in \cH_1^+\}$ is a Donsker class because the subset of an elementwise product of two uniformly bounded Donsker classes is a Donsker class (Example~2.10.8 in \cite{vaart1996weak}).

    Thus, by the definition of a Donsker class and the continuous mapping theorem,
    \begin{align*}
        \sup_{h \in \cH_1^+} \sqrt{n}(\P_n - P)[P[h] \cdot L \cdot h - P[L \cdot h] \cdot h - P[h]^2 \cdot \psi]
    \end{align*}
    converges to a tight limit. 
    
    Then, we can upper bound
    \begin{multline*}
        \lim_{n \to \infty} \P \l(\sup_{h \in \cH_1^+} \sqrt{n}(\P_n - P)[P[h] \cdot L \cdot h - P[L \cdot h] \cdot h - P[h_+]^2 \cdot \psi] > t^* \cdot \sqrt{n} \r) \\
        \leq \lim_{n \to \infty} \P \l(\sup_{h \in \cH_1} \sqrt{n}(\P_n - P)[P[h] \cdot L \cdot h - P[L \cdot h] \cdot h - P[h]^2 \cdot \psi] > t^* \cdot \sqrt{n} \r)
    \end{multline*}

    The bootstrap is consistent for $$\sup_{h \in \cH_1} \sqrt{n}(\P_n - P)[P[h] \cdot L \cdot h - P[L \cdot h] \cdot h - P[h]^2 \cdot \psi]$$
    because the function class inde`xing this process is $P$-Donsker. The same argument we used to prove that the process indexed by $h \in \cH^+_1$ is Donsker can be repeated here. 

    Via a derivation that is identical to the one above, we observe that the bootstrap process we sample from in \textsf{\Cref{algo:bootstrap_rkhs_appendix}} (see \Cref{lma:algo_trick}),
    $$
    \sqrt{n} \cdot \left(\P^*_n[(L - \hat{\theta}(\cD^*)) \cdot h] \cdot \P_n[h] - \P_n[(L - \hat{\theta}) \cdot h])] \cdot \P^*_n[h] \right),
    $$
    equals
    $$
    \sqrt{n} (\P^*_n - \P_n)[P[h] \cdot L \cdot h - P[L \cdot h] \cdot h - P[h]^2 \cdot \psi] + o_P(1).
    $$
    Thus, we conclude by \Cref{lma:donsker_bootstrap} and the continuous mapping theorem that the supremum of the bootstrap process is consistent for the distribution of the supremum of the limit process. 

    Last, we need to show that the bootstrap quantile $t^*$ is a consistent estimator of the true limiting quantile. We establish consistency of the bootstrap quantile by verifying that the limiting distribution has a continuous and strictly increasing CDF at its $(1 - \alpha)$-quantile \citep[Lemma 11.2.1(ii)]{lehmann2005testing}. The variance assumption on $L$ guarantees that for at least some $h \in \cH$, the limiting distribution of $$\sqrt{n}\l(\P_n[(L - \hat{\theta}) \cdot h]\cdot P[h] - P[(L - \theta_P) \cdot h] \cdot \P_n[h]\r)$$ is a non-trivial Gaussian, which then implies the desired CDF property. \Cref{cor:quantile_gaussian_2} implies consistency of $t^*$ and, thus, our desired claim regarding simultaneous coverage.
\end{proof}

We might adapt the bootstrap process for the RKHS so that the confidence bound width scales more naturally with the ``complexity'' of the shift chosen. Here we do not consider defining Wald-style confidence bounds, but rather simply aim to adjust the process so that the confidence bound width scales more naturally with the ``complexity'' of the queried shift. For example, the current bound scales as 
\begin{align*}
    \epsilon_{\text{lb}}(h) = \hat{\epsilon}(h) - \frac{C}{ \sqrt{n} \cdot \P_n[h]^2},
\end{align*}
while we might wish the bound to scale as
\begin{align*}
\epsilon_{\text{lb}}(h) = \hat{\epsilon}(h) - \frac{C}{\sqrt{n_{\text{eff}}}}
\end{align*}
where $n_{\text{eff}}$ quantifies the ``effective sample size'' of the reweighted metric. 

We motivate our choice of $n_{\text{eff}} = (\sum_i h(x_i))^2/\sum_i h(x_i)^2$ by observing that the variance of $(\sum_{i = 1}^n z_i \cdot h_i) / (\sum_{i = 1}^n h_i)$ for $z_i \simiid (0, 1)$ scales as $1/\sqrt{n_{\text{eff}}}$. 

To motivate our choice of $\hat{s}(h)$, observe that rescaling the process by $1/\hat{s}(h)$ yields a bound of the form 
\begin{align*}
    \epsilon_{\text{lb}}(h) = \hat{\epsilon}(h) - C \frac{\hat{s}(h)}{\P_n[h]^2 \cdot \sqrt{n}}.
\end{align*}
So, if we solve for $\hat{s}(\cdot)$ that yields the desired $\sqrt{n_{\text{eff}}}$ denominator, we obtain $|\P_n[h]| \cdot \sqrt{\P_n[h^2]}$. Truncating to ensure uniform consistency, we define
\begin{align*}
    \hat{s}(h) \defeq \max\l ( |\P_n[h]| \cdot \sqrt{\P_n[h^2]}, \,h_* \r),
\end{align*}
for some threshold $h_*$. Unlike $p_*$ in \eqref{eqn:rescaling_est}, note that $h_*$ is not interpretable. 

Besides the lack of interpretability, the rescaled RKHS process can no longer be efficiently bootstrapped. Even if we assume that $\theta_P$ is known, we must now compute in line 8 of \textsf{\Cref{algo:bootstrap_rkhs_appendix}},
\begin{align*}
    t^{(b)} = \sup_{h \in \cH_1} \frac{\P^b_n[L \cdot h]\cdot \P_n[h] - \P_n[L \cdot h] \cdot \P^b_n[h]}{\max\l ( |\P_n[h]| \cdot \sqrt{\P_n[h^2]}, \,h_* \r)}.
\end{align*}
While one might hope to mimic our previous approach and reduce this computation to some eigenvalue problem, applying the finite-dimensional representation of the RKHS function only yields
\begin{align*}
    \sup_{\beta : \|\beta\|_2 \leq 1} \frac{\l(\beta^\top \mathbf{K}^{1/2} \l(\mathbf{A} + \mathbf{A}^\top \r) \mathbf{K}^{1/2} \beta \r)}{2 \cdot \max\l ( |\mathbf{1}^\top K^{1/2} \beta| \cdot \sqrt{\beta^\top K \beta}, \,h_* \r)}.
\end{align*}
The $\sqrt{\beta^\top K \beta}$ term in the denominator makes optimizing $\beta$ extremely challenging. Since $K$ is full-rank for a positive definite kernel, we cannot rely on any low-rank structure in $A$ to simplify this problem. At best, a rank-$m$ approximation to $K$ yields an intractable (and inaccurate) optimization problem over the surface of a $(m + 4)$-dimensional hypersphere. If we simply drop the $\sqrt{\beta^\top K \beta}$ term from $\hat{s}(G)$, the bootstrap step can be reduced to an optimization problem over the surface of a $4$-dimensional hypersphere. We can solve that problem via a brute-force search, but, in practice, we find that the resulting confidence bounds are not improved. As a consequence, we recommend against rescaling the RKHS process. 
\end{document}